\documentclass[12pt]{article}
\usepackage[margin=1in]{geometry}
\usepackage[utf8]{inputenc}
\usepackage{setspace} 
\usepackage{booktabs} 
\usepackage{amsmath,amssymb,amsthm, bm, bbm} 
\usepackage{graphicx, xcolor}
\usepackage{natbib} 
\usepackage{enumerate} 
\usepackage{comment} 
\usepackage{hyperref} 
\usepackage{subcaption}
\hypersetup{colorlinks,citecolor=blue,urlcolor=blue,linkcolor=blue}

\newtheorem{assumption}{Assumption}
\newtheorem{theorem}{Theorem}
\newtheorem{proposition}[theorem]{Proposition}
\newtheorem{corollary}[theorem]{Corollary}
\newtheorem{lemma}[theorem]{Lemma}
\newtheorem{example}{Example} 

\theoremstyle{definition}

\newtheorem{definition}{Definition}

\graphicspath{{../figures}}
\usepackage[]{natbib}
\bibliographystyle{agsm}
\newcommand{\indep}{\perp \!\!\! \perp}

\begin{document}

\title{Causal Inference under Interference through Designed Markets \footnote{
I thank
Claudia Allende-Santa Cruz, 
Isaiah Andrews,
Susan Athey, 
Lanier Benkard, 
Han Hong, 
Guido Imbens, 
Michael Leung, 
Peter Reiss, 
David Ritzwoller, 
Brad Ross, 
Paulo Somaini, 
Sofia Valdivia, 
Stefan Wager, and participants of numerous conferences and seminars for helpful comments and discussions.}
\vspace{2mm}
}


\author{
Evan Munro  \thanks{Chicago Booth School of Business, University of Chicago}}
\date{ \today \\
 }

  \maketitle

  \bigskip
\begin{abstract} 
In auction and matching markets, estimating the welfare effects of demand-side treatments is challenging because of spillovers through the mechanism. We develop a quasi-experimental approach that avoids parametric assumptions typically imposed by structural methods. For a class of strategy-proof “cutoff” mechanisms, we propose an estimator that runs a weighted and perturbed version of the mechanism on data from a single market. The estimator is semi-parametrically efficient, asymptotically normal, and robust to a wide class of demand-side specifications. We propose spillover-aware targeting rules with vanishing asymptotic regret. Empirically, spillovers diminish the effect of information on inequality in Chilean schools.
\end{abstract}
\noindent
{\it Keywords: } School Choice, Econometrics of Auctions, Spillovers  

\newpage
\setstretch{1.7}
%

\section{Introduction} 

An individual-level intervention in an economic system rarely affects agents in isolation. Interactions among market participants lead to spillover effects, where the treatment of one individual affects the outcomes of others.
Spillover effects make it challenging to estimate  global treatment effects, such as the difference in expected outcomes when everyone is treated compared to when nobody is treated ($\bar \tau_{\text{GTE}}$). Existing approaches in the causal inference literature assume either partial interference, where there are no spillovers across clusters of agents \citep{baird2018optimal, hudgens2008toward}, or that spillovers occur through an observed network where connections between agents are sparse \citep{aronow2017estimating, leung2020treatment}. Except for parametric model-based approaches, there has been limited progress in estimating global treatment effects under complete interference, where the treatment of any individual may impact anyone else's outcome. We show that in markets where spillover effects are mediated by a specific class of centralized allocation mechanisms, even though there is complete interference, semi-parametric estimation of global effects is possible.

Settings where a centralized mechanism allocates scarce items are increasingly common in practice. In the U.S., versions of the deferred acceptance algorithm allocate students to schools \citep{abdulkadirouglu2003school}, and medical school graduates to residency programs \citep{roth2003origins}. Auctions allocate advertisements to search queries \citep{varian2014vcg} and Treasury bonds to investors \citep{mcmillan2003market}. Often, policymakers are interested in estimating how an intervention that affects the reported preferences of market participants will impact resulting allocations from the mechanism. For example, \citet{allende2019approximating} provide information about school quality to families in a randomized experiment in Chile, where a centralized mechanism determines allocations to schools. One of their target estimands is $\bar \tau_{\text{GTE}}$, where the treatment is the information intervention and the outcome is the allocation of low-income families to high quality schools. 

Estimators of the Average Treatment Effect fail to recover $\bar \tau_{\text{GTE}}$, even when the treatment is randomly assigned. By increasing the number of applicants to schools with limited capacity, the treatment affects admissions probabilities, which introduces spillovers and violates the Stable Unit Treatment Value Assumption (SUTVA) \citep{heckman1998general}. To estimate $\bar \tau_{\text{GTE}}$, \citet{allende2019approximating} use data from the experiment to estimate a parametric structural model of reported preferences over schools, and simulate the relevant counterfactuals using this model and the centralized mechanism.

 In this paper, we take a new approach, which derives a sufficient-statistics representation of the GTE, and requires only data from a single market where the treatment is quasi-randomized.  This is a step toward extending more ``credible" approaches based on quasi-experimental variation \citep{angrist2009mostly} to a richer set of counterfactuals beyond the average treatment effect (ATE). We begin with a potential outcomes model that allows for complete interference. We then make three major restrictions under which $\bar \tau_{\text{GTE}}$ is identified for any mechanism: we assume that SUTVA holds at the level of individual reports to the mechanism, outcomes can be computed from the mechanism, and treatments follow selection-on-observables. While these assumptions can be relaxed, doing so restricts the validity of the estimand to certain subgroups. For the unrestricted $\bar \tau_{\text{GTE}}$, our identification result suggests a plug-in approach for estimation: estimate the distribution of counterfactual submissions to the mechanism (bids) non-parametrically, and then run the mechanism on samples drawn from these distributions. Depending on the properties of the allocation mechanism, however, this approach may have an unacceptably large variance in finite samples. 
 
For an estimator with better properties, we restrict attention to mechanisms that have a cutoff representation \citep{azevedo2016supply, agarwal2018demand}. This class of mechanisms has an equilibrium which is defined by a finite vector of market-clearing cutoffs, and includes the uniform price auction, deferred acceptance, and top trading cycles. Even with a cutoff mechanism, the observed market may have multiple equilibria, and $\bar \tau_{\text{GTE}}$ is an average of interdependent terms. Under an asymptotic framework where a finite-sized market with $n$ participants converges to a continuum market with infinite participants \citep{azevedo2016supply}, we show that the finite market $\bar \tau_{\text{GTE}}$ converges  at a $1/\sqrt n$ rate to a continuum market counterfactual, $\tau^*_{\text{GTE}}$. The continuum market counterfactual has a simple representation in terms of a set of moment conditions defined on the distribution of submissions to the mechanism. 

Our estimator solves an empirical version of the moment condition representation of the continuum market GTE. It relies on doubly-robust scores, and requires a careful adaptation of results on localization approaches for GMM models with missing data, specifically the work of  \citet{kallus2019localized}.  In the first step, we use a propensity-score approach to estimate counterfactual market-clearing cutoffs. In the second step, a debiased estimate of counterfactuals is computed by running a re-weighted and perturbed version of the mechanism, where the perturbations are estimated using a simple set of machine learning regressions on the first-stage estimates. Data-splitting is used to control bias, allowing for weak conditions on the convergence rates of the machine learning estimators. 

Using techniques from the theory of empirical processes, we show that the estimator is asymptotically normal, and that inference valid for the continuum market counterfactual is conservative for the finite-market estimand. This means that the estimator is robust to a variety of specifications for how bids are affected by the treatment -- as long as certain statistics of the counterfactual bid distributions are sufficiently smooth and the machine learning estimators meet regularity conditions, we can perform inference on $\bar \tau_{\text{GTE}}$. Furthermore, the variance of the estimator meets the semi-parametric efficiency bound for the continuum market counterfactual, which suggests that our inference approach has good power compared to alternative approaches. 

Another advantage of our semi-parametric approach is that we allow for unrestricted heterogeneity in treatment response. When treatment effects are heterogeneous, a policymaker can improve welfare by assigning treatment to a subset of individuals, depending on their pre-treatment covariates. There is a large literature on policy learning under SUTVA, but the problem is much more complex when there are spillover effects, as discussed in the network setting by \citet{vivianorestud}. In this paper, we provide the first asymptotic regret results for policy learning with market spillovers, employing a two-step, doubly-robust approach for empirical welfare maximization. This yields an asymptotic regret bound in the finite market that is of the same order as the lower bounds in the literature on policy learning without spillover effects \citep{athey2021policy}. Constraining spillovers to occur through the mechanism and knowing the structure of the mechanism is crucial for this result. A major step in the proof, which is the most challenging technical result of the paper, is demonstrating uniform convergence of estimated market-clearing cutoffs to the continuum market-clearing cutoffs. 

In simulations of a uniform price auction, we illustrate the robustness properties of our preferred estimator, in contrast to approaches based on parametric structural modeling. Finally, we apply our methods in a real-world setting using data from Chile, where children are allocated to public schools via a version of deferred acceptance. We compile a dataset from the Ministry of Education that replicates many of the features of the data in  \citet{allende2019approximating}, except that the treatment is self-reported receipt of government-provided information on school quality, rather than an explicitly randomized intervention. We estimate $\bar \tau_{\text{GTE}}$, where the outcome measures the allocation of low-income families to good-quality schools. We find that if spillover effects are ignored, then the estimate of the impact of the treatment is significant, raising access of low-income families to good schools by nearly 1.5 percentage points. However, an estimate of the true impact of the intervention that takes into account the impact on the equilibrium of the school market  is much smaller at 0.5 percentage points. A rule approximating the optimal targeting rule in equilibrium raises access of low-income families to good schools by 1.8 percentage points, substantially outperforming a uniform rule that allocates the intervention to all families.



\subsection{Related Work} 

There is a body of existing work that estimates different types of causal effects in designed markets. \citet{abdulkadirouglu2017research} estimate causal effects of allocations on future outcomes, such as test scores or income, using randomness in the matching mechanism for identification. \citet{abdulkadirouglu2022breaking}, \citet{chen2021nonparametric}, and \citet{bertanha2023causal} extend this work to settings where individual scores are non-random but the cutoff structure of the mechanism allows an RDD analysis.   \citet{bertanha2023causal} also considers partial identification of  preferences from strategic reports when mechanisms are not strategy proof. In contrast to this body of work, our paper focuses on an earlier step in the causal chain of events, which is the effect of a pre-allocation intervention on resulting allocations.

\citet{athey2007nonparametric} survey non-parametric identification and estimation methods for primitives in a wide range of auction models. We focus on the estimation of specific counterfactuals rather than model primitives. The disadvantage of our approach is that estimating primitives is useful for estimating a wider range of counterfactuals, and can sometimes be of independent interest. The advantage is efficiency and robustness, in that we can obtain precise estimates without imposing strong (e.g. parametric or distributional) assumptions, at least with strategy-proof mechanisms. 

Sufficient statistics approaches are popular in a variety of applied economics fields, including public economics \citep{chetty2009sufficient} and macroeconomics \citep{mckay2023can}. The approach in this paper is unique in that most of the key assumptions that lead to the sufficient statistics representation are known properties of the market mechanism, rather than parametric or distributional assumptions imposed by the researcher on the data generating process. In addition, we provide theoretical guarantees on inference and robustness of our method, which are not always available in the related literature. There is a small literature in causal inference that considers settings where interactions occur through a known algorithm or statistic. \citet{miles2019causal} studies a model where spillovers occur only through the proportion treated.  \citet{bright2022reducing} characterize the bias of an RCT in a parametric model of a matching market, where a linear program computes the matching. They propose a simulation-based estimator of the GTE that requires estimating their model using maximum likelihood estimation. Our paper studies markets with a different class of matching mechanisms that are truthful and have a cutoff structure; in this class of mechanisms, we estimate causal effects without imposing a parametric model of behavior. 

 \citet{munro2023market} also constrain spillovers to occur through a set of market statistics. Without the presence of a centralized mechanism, the model primitives are the distribution of demand and supply functions, rather than counterfactual distributions of bids. This means the authors are limited to counterfactuals that are local to the current equilibrium, and require more complex experimental designs with price randomization for identification. In contrast, the current paper is more directly related to the literature on structural modeling in that a) we use data with standard treatment variation and b) we identify global treatment effects that extrapolate from the observed market. Furthermore, the methods for handling nuisance function estimation and uniform convergence are novel compared to the approach in \citet{munro2023market}. 

To analyze the properties of the estimators in the paper, we use an asymptotic framework where the allocation mechanism operates on a continuum of agents rather than a discrete number of agents. Large-sample approximations have been used to characterize the bias and variance of A/B testing-based approaches in online marketplaces in  \citet{johari2022experimental}, \citet{bright2022reducing} and \citet{liao2023statistical}. 

\section{ Defining Counterfactuals } 

\label{sec:model} 

In the market observed by the researcher, $n$ participants are allocated using a centralized mechanism to some subset of $J$ items, which each have limited capacity.  For each market participant, the researcher observes pre-treatment covariates $X_i \in \mathcal X$, a binary treatment $W_i \in \{0,1 \}$ and a submission to the mechanism $B_i = B_i(W_i)$, which may be affected by the treatment. Individuals with $(X_i, B_i(1), B_i(0), W_i)$ are drawn i.i.d. from some distribution $F$. Individual allocations $D_i = D_i(\bm W) \in \{0, 1\}^J$ and outcomes $Y_i = Y_i(\bm W) \in \mathcal Y$ depend on other individuals' treatments and actions through a centralized mechanism\footnote{Although our primary examples in the paper have binary allocations, the analysis in the paper extends directly to allocations that are integers or real numbers, as long as they are bounded.}. 

In this paper, we will  estimate and maximize the value of counterfactual treatment rules. A candidate treatment rule is a function $\pi: \mathcal X \to [0, 1]$, where $\pi \in \Pi$. Treatment allocation under the counterfactual rule is $W_i \sim \mbox{Bernoulli}(\pi (X_i))$.  The finite-market value of a counterfactual treatment rule is defined as the expected outcomes for a market of $n$ participants,
\[ \bar V_n(\pi) = \frac{1}{n} \sum \limits_{i=1}^n   \mathbb E_{\pi} \left [ Y_i(\bm W) \right ], \] 
where $E_{\pi}[\cdot]$ is the expectation with respect to random treatment allocation, conditional on the realized potential outcomes and covariates. Although the theory in Section 3 allows us to estimate the difference in average value between any two candidate policies, we pay particular attention to the Global Treatment Effect $(\bar \tau_{\text{GTE}})$, which  is defined as the difference in welfare when everyone is treated compared to when no one is treated: $  \bar \tau_{\text{GTE}} = \frac{1}{n} \sum \limits_{i=1}^n Y_i(\bm 1_n) - Y_i(\bm 0_n),$ where $\bm 1_n$ and $\bm 0_n$ are $n$-length vectors of 0s and 1s. We start by providing a series of assumptions under which we can identify the value of counterfactual treatment rules using the joint distribution of $(B_i,  X_i, W_i)$. Let $\bm B(\bm w)$ be the $n$-length vector of submissions to the mechanism under treatment vector $\bm w$. 

\begin{assumption} \textbf{Identification}  \label{as:id} 
\begin{enumerate} 
\item Given an $n$-length vector of submissions to the mechanism $\bm B(\bm w)$, potential allocations $D_i(\bm w) = d_i(\bm  B(\bm w), \bm X)$, and  outcomes  $Y_i(\bm w) = y_i(\bm B(\bm w), \bm X)$, where $d_i(\cdot)$ and $y_i(\cdot)$ are known for $i \in \{ 1, \ldots, n \}$. 
\item SUTVA holds for submissions to the mechanism: $B_i(\bm W) = B_i(\bm W')$ if $W_i = W'_i$. 
\item  Unconfoundedness and overlap hold, so $ \{ B_i(1), B_i(0) \} \indep W_i | X_i,$ and, letting $e(x) = P(W_i = 1 | X_i = x)$, for all $x \in \mathcal X$, $0 < e(x) < 1$. 
\end{enumerate} 
\end{assumption} 

In the first part of Assumption \ref{as:id}, we assume that the mechanism is known, so that an individual's allocation $D_i(\bm w) \in \mathbb R^J$ at a given treatment vector $\bm w \in \{0, 1\}^n$ can be computed given the $n$-length submissions to the mechanism $\bm B(\bm w)$. This assumption holds for any market that is cleared by an auction or matching mechanism. We also assume outcomes can be computed from $\bm B(\bm w)$, which is the case for bidder surplus in auctions and the measure of inequality in school allocations studied in Section \ref{sec:empirical}. This framework  extends to general outcomes like test scores or income, under the assumption that  $Y_i(\bm W) = \sum \limits_{j=1}^J M_{j}(X_i) D_{ij}(\bm W) + \epsilon_i$ is a weighted sum of item-specific effects, as in the literature on school value-added \citep{AbdulkadirogluPathakWalters2025}, if $M_j(\cdot)$ is known. Weakening this assumption to handle outcomes that are unknown functions of allocations is also possible for some mechanisms by combining the approach in this paper with the lottery-based identification strategy of \citet{abdulkadirouglu2017research}. However, this comes at the cost of identifying a restricted version of $\bar\tau_{\text{GTE}}$ that is valid only for certain subgroups; see Appendix \ref{ap:angrist}.

The last part of  Assumption \ref{as:id}  identifies the marginal distribution of $B_i(1)$ and $B_i(0)$ by assuming that the treatment is randomly assigned, conditional on covariates.\footnote{It is possible to use an IV-type assumption as an identifying condition instead at the cost of only identifying a restricted version of $\bar \tau_{\text{GTE}}$, see Appendix \ref{app:iv}.}  Under this set of assumptions, $\bar \tau_{\text{GTE}}$ is identified, and is a  known functional of the treatment rule $\pi(\cdot)$,  the distributions of $(B_i(1), X_i)$ and $(B_i(0), X_i)$, and the market size. A natural next step is plug-in estimation: first, estimate counterfactual distribution of bids non-parametrically, and then run the mechanism on samples from these distributions. Depending on the properties of the mechanism, the estimand may depend on features of the distribution that are infeasible to estimate non-parametrically in finite samples, especially when $X_i$ or $B_i$ are high-dimensional.\footnote{In the school choice setting, the number of possible submissions to the mechanism is exponential in the number of schools.} 

Rather than pursuing plug-in estimation, which may converge extremely slowly or not at all, we instead specify a general class of economic mechanisms where a $\sqrt n$ convergent and computationally efficient estimator for $\bar \tau_{\text{GTE}}$ is available. This class, formalized in Assumption \ref{as:cutoff}, is made up of mechanisms for which an individual's allocation depends only on their own submission to the mechanism and a set of market-clearing cutoffs.  A variety of commonly-used mechanisms have a cutoff structure, including the uniform price auction, deferred acceptance \citep{azevedo2016supply}, and top trading cycles \citep{leshno2021cutoff}. In \citet{munro2023market}, restricting spillovers to occur through market prices is useful for identification of local treatment effects in general equilibrium. Although our identification strategy is entirely different, limiting the complexity of interactions that occur through the mechanism is still necessary to show that estimators of $\bar \tau_{\text{GTE}}$ converge at the parametric rate. 

\begin{assumption}{ \textbf{Cutoff Mechanism}.} \label{as:cutoff} 
 For each $\bm w \in \{0, 1\}^n$, allocations and match value for market participant $i$ depend only on $B_i(w_i)$ and a finite length vector of cutoffs $P_{\pi} = P_n(\bm w) \in \mathcal S$. Specifically, $ D_i(\bm w) = d(B_i(w_i), X_i, P_n(\bm w))$ and $Y_i(\bm w)  =  y(B_i(w_i), X_i, P_n(\bm w)).$

 Cutoffs depend on all agents' bids and approximately clear the market with fractional capacity $s^* \in [0, 1]^J$.\footnote{In a finite-sized market with $m \in \mathbb R_+^{J}$ items available, then $s^* = m/n$. It is convenient to write the capacity constraint in fractional form for the continuum market approximation, described in Definition \ref{def:contmarket}, where $s^*$ is fixed and $n$ and capacity $m$ grow at the same rate.}  Specifically, there exists a sequence $a_n$ with $\lim \limits_{n \to \infty} a_n \, \sqrt{n} = 0$ and
constant $c > 0$ such that, for every 
$\bm w \in \{0,1\}^n$, 
\begin{equation}
\label{eq:approxzero}
\mathcal C_{\bm w } = \left \{p \in \mathbb R^J :  \left | \left | \sum \limits_{i=1}^n  \frac{1}{n} d(B_i(w_i), X_i, p) -s^* \right | \right |_2 \leq a_n  \right \}
\end{equation}
is nonempty with probability at least $1 - e^{-cn}$ for all $n$. On the event where it is nonempty, the market price
is in this set, so $P_n(\bm w) \in \mathcal C_{\bm w}$.

\end{assumption} 

For prices, we use notation that makes explicit the dependence of market-clearing cutoffs on the vector of treatments. Although it is not explicit in the notation, prices also depend on bids and characteristics of the market. The cutoffs are computed by the mechanism and need not be unique. Formally,  there exists an algorithm, represented by a function  $m: \mathcal B^n \times \Delta^{n-1} \times [0, 1]^J$ that maps the $n$-length vector of bids $\bm B(\bm w)$, an $n$-length vector of weights for each bid $\bm \gamma$, and capacities for each item to a market-clearing cutoff, so
\begin{equation} \label{eqn:defm} 
 \left | \left |  \sum \limits_{i=1}^n \gamma_i d(B_i(w_i), X_i, m(\bm B(\bm w), \bm X,  \bm \gamma, s^*)) - s^* \right | \right |_2 \leq a_n , 
\end{equation} 
and we can write $P_n(\bm w) = m( \bm B(\bm w), \bm X, \frac{1}{n} \cdot \bm 1_n, s^*)$, where $\Delta^k$ is the $k$-dimensional simplex. This concept of market-clearing cutoffs with possibly heterogeneous weights is useful for estimating counterfactuals in the next section. We next introduce two examples of mechanisms that are regularly used in practice and have such a cutoff structure. 

\begin{example} \label{ex1} \textbf{Uniform Price Auction.}
In a uniform price auction with a single good, unit demand, a supply of $m$ units, and independent private values, $n$ market participants bid their value $B_i(w) \sim F_{w}$, and the winning $m$ bidders pay the $(m+1)$th highest bid. This auction has a cutoff structure, in that $d(B_i(W_i), X_i, p) = \mathbbm{1} (B_i(W_i) > p)$, and $ \frac{1}{n} \sum \limits_{i=1}^n d(B_i(W_i), X_i, P_n(\bm W)) - s^* = 0$, where $s^* = m/n$. The market-clearing function $m(\cdot)$ ranks bids, and allocates the $k$ largest bids so that the sum of the weights of the winning bids is less than $s^*$, but the sum of the weights of the $k+1$ largest bids is greater than $s^*$. 
\end{example}

\begin{example}\label{ex2} \textbf{Deferred Acceptance.} In many cities, students are matched to schools using a version of the deferred acceptance algorithm with lottery scores. This mechanism is another example of a strategy-proof mechanism with a cutoff structure, as shown in \citet{azevedo2016supply}; $p \in \mathcal S$ is a vector of score cutoffs for each school. The submission to the mechanism  is a ranking over schools $R_i(W_i)$, where $j R_i(W_i) j'$ is 1 if school $j$ is ranked above $j'$, and zero otherwise, and an independent item-specific lottery number $S_{i} \in \mathbb R^J$. The index for the outside option is 0. The allocation function is: 
\[ d_j(B_i(w), X_i, p) =  \mathbbm{1}\{ S_{ij} > p_j \mbox{ and }  j R_i(W_i) 0  \} \prod_{j' \neq j} \mathbbm{1}(j R_i(W_i) j' \mbox{ or } S_{ij'} < p_{j'} ). \]
On the supply side $s^*_j = m_j /n$, where $m_j$ is the number of seats available in school $j$, and $n$ is the total number of students. For the function $m(\cdot)$, the standard deferred acceptance algorithm can easily be modified, as in the uniform price example above, to accommodate heterogeneous weights. 
\end{example} 

 Under Assumption \ref{as:cutoff}, then: 

\[ \bar \tau_{\text{GTE}} =   \frac{1}{n} \sum \limits_{i=1}^n  y(B_i(1), X_i, P_n(\bm 1_n)) - y(B_i(0), X_i, P_n(\bm 0_n)).  \] 
 In a finite market, the mechanism allocates a fraction of the empirical distribution of market participants to each item. The equilibrium may not be unique, and furthermore, counterfactuals are defined in terms of averages of dependent terms, since $P_n(\bm W)$ depends on all market participants. We next introduce the continuum market, which is a useful approximation of the finite market that allocates an equivalent fraction of the population distribution of market participants to each item \citep{azevedo2016supply}. Continuum market counterfactuals are defined in Definition \ref{def:contmarket} as a simple set of moment conditions and have a unique equilibrium under a straightforward set of conditions in Assumption \ref{as:regulare}. 

\begin{definition}\label{def:contmarket} \textbf{\em{Continuum Market.}}
The value of a treatment policy in the continuum market is $  V^*(\pi) = y_{\pi}(p^*_{\pi})$, where $y_{\pi}(p) =  \mathbb E[\pi(X_i) y(B_i(1), X_i, p)  + ( 1- \pi(X_i)) y(B_i(0), X_i, p)]$. The large-market cutoffs are defined by $z_{\pi}(p^*_{\pi}) = 0$, and $z_{\pi}(p) = \mathbb E[ \pi(X_i) d(B_i(1), X_i, p) + ( 1- \pi(X_i)) d(B_i(0), X_i, p)] - s^*$. Similarly, we can write $ \tau^*_{\text{GTE}} = \mathbb E[y(B_i(1), X_i, p^*_1)] - \mathbb E[y(B_i(0), X_i, p^*_0)]$, and for $w \in \{0, 1\}$, $\mathbb E[d(B_i(w), X_i, p^*_w) - s^*] = 0$. 
 \end{definition}

To conclude this section, we show that not only does the continuum market provide a sufficient-statistics representation of the counterfactual, it is also a good approximation asymptotically of the finite market. Our notion of convergence follows the related economic theory literature, as in \citet{azevedo2016supply},  and takes both total supply and $n \to \infty$ but keeps $J$ and fractional supply $s^*$ fixed. This asymptotic approximation relies on a representation of a mechanism as a functional of the empirical distribution of market participants. Then, it replaces the empirical distribution of market participants with the smooth population distribution. It is a good approximation in finite samples as long as supply for each product is not too small. 

We impose a set of regularity conditions that ensure that the finite and continuum markets are sufficiently well-behaved. In Assumption \ref{as:regularo}, the weak continuity assumption and metric entropy condition allow for individual-level allocation functions that have some discontinuity in market-clearing cutoffs. However, at the population level, expected allocations and outcomes must be smooth. 

\begin{assumption} \label{as:regularo} \textbf{Regularity of Outcomes.} 
\begin{enumerate} 
\item There are constants $h_d, h_y, C > 0$ such that for each $j \in \{1, \ldots, J \}$, and $w \in \{0, 1\}$  the function classes $\mathcal F_{d, j} = \{ (B(w), X)  \mapsto d_j(B(w), X, p) : p \in \mathcal S \} $ and  $\mathcal F_y = \{ (B(w), X) \mapsto y(B(w), X, p) :  p \in \mathcal S \} $ have uniform covering number such that, for every $0< \epsilon <  1$, $ \sup \limits_{Q_d}   N(\epsilon, \mathcal F_{d, j}, L_2(Q_d)) \leq C(1/\epsilon)^{h_d}$, and  $\sup \limits_{Q_y}  N(\epsilon  , \mathcal F_y, L_2(Q_y)) \leq C (1/\epsilon)^{h_y}$.
\item Outcomes are uniformly bounded, and demand and outcomes are weakly continuous in $p$. There is a constant $L > 0$ such that for all pairs of prices
$p ,\, p'$, all $w$, and all $j$, we have $\mathbb E[ (d_j(B_i(w), X_i, p) - d_j(B_i(w), X_i, p'))^2] \leq L ||p - p'||_2$ and  $\mathbb E[ (y(B_i(w), X_i, p) - y(B_i(w), X_i, p'))^2] \leq L || p - p'||_2$. 
\item For all $w \in \{0, 1\}$ and $x \in \mathcal X$, $\mu_w^d(p, x) = \mathbb E[d(B_i(w),  X_i, p) | X_i = x]$ and $\mu_w^y(p, x) = \mathbb E[y(B_i(w), X_i, p) | X_i = x]$ are  twice continuously differentiable in $p$ with first and second derivatives bounded uniformly by $c'$. 
\item For each $\pi \in \Pi$, the singular values of the $J \times J$ Jacobian matrix $\nabla_p z_{\pi}(p^*_{\pi})$  are bounded between $c_3$ and $c_4 $.
\end{enumerate} 
\end{assumption} 

In Assumption \ref{as:regulare}, we assume that the market-clearing cutoffs in the population are unique and well-separated. Under conditions on the smoothness of the distribution of values, Assumptions \ref{as:cutoff} - \ref{as:regulare} are satisfied by the uniform price auction in Example \ref{ex1},  when bidder surplus is the outcome of interest, as shown in Appendix \ref{app:upa}. This result can also be extended to Example \ref{ex2} under regularity conditions on the distribution of lottery numbers in deferred acceptance. 

\begin{assumption}\textbf{Regularity of Equilibrium.}\label{as:regulare}
 $\mathcal S$ is a compact set. For all $\pi \in \Pi$, $\mathcal S$ contains a ball of radius $c_1>0$ centered at $p^*_{\pi}$,  and  $p^*_{\pi}$ is unique and well-separated, so for any $p \in \mathcal S$ with $|| p - p^*_{\pi} ||\geq \frac{c_3}{2J c'}$,  there is a $c_2 > 0$ so that $2 || z_{\pi}(p) || \geq c_2 $. 
\end{assumption} 

Under these assumptions, our first result strengthens the convergence result in \citet{azevedo2016supply} by providing a rate at which counterfactuals in the finite market converge to those in the continuum market.  As the market size grows large, the value of a treatment rule in equilibrium converges from an average of dependent terms to a set of moment conditions defined on the population distribution. 

\begin{theorem} \label{thm:moment}  Under Assumption \ref{as:id}- \ref{as:regulare}, $\bar \tau_{\text{GTE}}$ has the following asymptotically linear form: 
\begin{equation}  \label{eq:moment}
  \bar \tau_{\text{GTE}}   - \tau^*_{\text{GTE}} = \frac{1}{n}\sum \limits_{i=1}^n \Big ( q_1(B_i(1), X_i, p^*_{1}) - q_0(B_i(0), X_i, p^*_{0}) \Big) - \tau^*_{\text{GTE}}  + o_p(n^{-1/2}),  \\
\end{equation}  where $q_{w} (b, x, p) = y(b, x, p) - \nu^*_w (d(b,x, p) - s^*)$ \\ and $\nu^*_w = \nabla_p^{\top}\mathbb E [ y(B_i(w), X_i, p^*_w) ] ( \nabla_p \mathbb E[d(B_i(w), X_i, p^*_w)])^{-1}.$
 \end{theorem}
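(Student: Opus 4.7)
The plan is to derive both claims from a delta-method-style expansion of the sample outcome average around the continuum market-clearing cutoffs. The key random objects are $p^w := P_n(\bm w)$ for $w \in \{0,1\}$, and the strategy is (i) to prove consistency $p^w \to p^*_w$ in probability, (ii) to linearize the approximate market-clearing condition to obtain a Bahadur-type expansion for $p^w - p^*_w$, and (iii) to substitute that expansion into a linearization of the outcome average, yielding the influence function $q_w$ claimed in the theorem.

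For consistency, set $\hat z_w(p) = n^{-1}\sum_i d(B_i(w),p) - s^*$ and $z_w(p) = \mathbb E[d(B_i(w),p)] - s^*$. The polynomial uniform covering-number bound in Assumption \ref{as:regularo}(1), combined with the $L^2$-continuity in part (2), makes each class $\mathcal F_{d,j}$ $P$-Donsker (hence Glivenko-Cantelli), so $\sup_{p\in\mathcal S}\|\hat z_w(p) - z_w(p)\|_2 = o_p(1)$. Together with the approximate market-clearing bound $\|\hat z_w(p^w)\|_2 \le a_n$ from (\ref{eq:approxzero}) and the well-separation condition in Assumption \ref{as:regulare}, this forces $p^w \to p^*_w$ in probability.

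For the rate and linearization step, I Taylor-expand the population map (not the empirical $\hat z_w$, which may be discontinuous) as $z_w(p^w) = \nabla z_w(p^*_w)(p^w - p^*_w) + O(\|p^w - p^*_w\|^2)$ using Assumption \ref{as:regularo}(3), and invoke stochastic equicontinuity of the Donsker class at $p^*_w$ to replace $\hat z_w(p^w) - z_w(p^w)$ by $\hat z_w(p^*_w) + o_p(n^{-1/2})$. Combined with $a_n = o(n^{-1/2})$ and the singular-value lower bound in Assumption \ref{as:regularo}(4), this yields
\[ p^w - p^*_w = -[\nabla z_w(p^*_w)]^{-1}\hat z_w(p^*_w) + o_p(n^{-1/2}). \]
An analogous expansion for $\hat y_w(p) = n^{-1}\sum_i y(B_i(w),p)$ gives $\hat y_w(p^w) = \hat y_w(p^*_w) + \nabla y_w(p^*_w)(p^w - p^*_w) + o_p(n^{-1/2})$; substituting the Bahadur expansion collapses the two terms into $n^{-1}\sum_i q_w(B_i(w), p^*_w) + o_p(n^{-1/2})$. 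Differencing over $w\in\{0,1\}$ and subtracting $\tau^*_{\text{GTE}} = \mathbb E[q_1(B(1),p^*_1)] - \mathbb E[q_0(B(0),p^*_0)]$ (the equality uses $\mathbb E[d(B(w),p^*_w)] - s^* = 0$) delivers (\ref{eq:moment}); the $\sqrt{n}$-rate follows because the centered average is an i.i.d.\ bounded sum. Extending to general $\pi$ adds only a sample-conditional concentration step (e.g.\ Hoeffding) over the random $\bm W\sim \pi$, then proceeds identically with $p^*_\pi$ in place of $p^*_w$.

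The main obstacle is that $d(B_i,\cdot)$ is typically discontinuous in $p$ (as in the indicator form of Example \ref{ex2}), so neither $\hat z_w$ nor $\hat y_w$ admits a pointwise Taylor expansion; the argument must route through the empirical process. This is exactly why Assumption \ref{as:regularo} pairs a Donsker-type metric-entropy bound on individual-level functions (part 1) with $L^2$-continuity and differentiability conditions imposed only on population expectations (parts 2--3): the combination yields stochastic equicontinuity at $p^*_w$ while allowing individual allocations to jump. Verifying that the remainder in the Taylor expansion is genuinely $o_p(n^{-1/2})$, rather than merely $O_p(n^{-1/2})$, is the most delicate technical point and relies on the $O_p(n^{-1/2})$ rate for $p^w - p^*_w$ established in step (ii) being fed back into the second-order term.
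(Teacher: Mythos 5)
Your proposal is correct and follows essentially the same route as the paper: consistency of the market-clearing cutoffs via well-separation plus uniform concentration of the empirical demand process, a Bahadur/Z-estimator expansion of $P_n(\bm w)-p^*_w$ obtained from stochastic equicontinuity of the Donsker class together with differentiability of the population demand map (the paper packages this as Theorem 3.3.1 of van der Vaart and Wellner in its Lemma on asymptotic normality of counterfactual cutoffs), and then asymptotic equicontinuity of the outcome process plus a Taylor expansion of $y_\pi$ to collapse everything into the influence function $q_w$. Your remarks on why the entropy bound is placed on individual-level functions while smoothness is imposed only on population expectations, and on feeding the $O_p(n^{-1/2})$ cutoff rate back into the second-order remainder, match the paper's handling exactly.
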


We prove Theorem \ref{thm:moment}  in Appendix \ref{ap:moment} using techniques from empirical process theory \citep{vaart1997weak}. Using related techniques,  \citet{munro2023treatment} shows convergence of a local equilibrium effect to a large-market approximation, but does not provide a rate. By providing a rate, Theorem \ref{thm:moment} provides a foundation for inferential guarantees for the estimator of $\bar V_n(\pi)$ introduced in the next section.

\section{Estimating Counterfactual Values} 
\label{sec:estimation}


The previous section established a moment-based approximation for cutoff mechanisms, eliminating the dependence, under general mechanisms, of the estimand on complex features of the distribution of bids. Using this representation, we next provide a doubly-robust estimator that is $1/\sqrt n$-consistent for $\bar \tau_{\text{GTE}}$. Unlike existing semi-parametric methods that rely on complex experimental designs  \citep{munro2023market, bajari2023}, the estimator relies only on data from a standard RCT. Algorithmically, our estimator runs a perturbed and re-weighted version of the allocation mechanism on the observed data, where the weights and perturbations are estimated using flexible machine learning methods and three-way data splitting is used to control bias. This estimator is closely related to the more general theory in \citet{kallus2019localized} for quantile-like treatment effects, but aspects of its design and analysis are unique to the problem studied in this paper. 




Combining the moment representation of $V^*(\pi)$ and the overlap and unconfoundedness assumptions of Assumption \ref{as:id}, we can identify $V^*(\pi)$ using  $J+1$ moment conditions and doubly-robust scores:
\begin{equation} \label{eqn:drmoment} 
\begin{split}
& \mathbb E [ \pi(X_i) \Gamma^{*y}_{1i}(p^*_{\pi}) + ( 1- \pi(X_i)) \Gamma^{*y}_{0i}(p^*_{\pi}) ] =  V^*(\pi),    \\ 
  & \mathbb E[ \pi(X_i) \Gamma^{*d}_{1i}(p^*_{\pi})   + ( 1- \pi(X_i)) \Gamma^{*d}_{0i}(p^*_{\pi}) ]  = s^*, 
\end{split} 
\end{equation} where doubly-robust scores combine the propensity score $e(x) = P(W_i = 1| X_i = x)$ and conditional mean functions $ \mu^d_w(x, p) = \mathbb E[d(B_i(w), X_i, p) | X_i = x]$ and $ \mu^y_w(x, p) = \mathbb E[y(B_i(w), X_i, p) | X_i = x]$ for $w \in \{0, 1\}$: 
\begin{equation} \label{eq:dr} 
\begin{split} 
& \Gamma^{*y}_{wi}(p) = \mu^y_w(X_i, p) + \frac{ \mathbbm{1}(W_i = w)}{P(W_i = w | X_i = x)} (y(B_i(w), X_i, p) - \mu^y_w(X_i, p) ), \\ 
&  \Gamma^{*d}_{wi}(p) = \mu^d_w(X_i, p) + \frac{ \mathbbm{1}(W_i = w)}{P(W_i = w | X_i = x)} (d(B_i(w), X_i, p) - \mu^d_w(X_i, p) ). 
\end{split} 
\end{equation} 

Equation \eqref{eqn:drmoment} is not the only set of moment conditions that identify $V^*(\pi)$ under unconfoundedness and overlap. For example, it is possible to identify and estimate $V^*(\pi)$ using the propensity score only.  We prefer the doubly-robust approach since it requires much weaker assumptions on propensity scores for results on inference and semi-parametric efficiency.  For a more detailed discussion of the benefits and drawbacks of the propensity score approach, we defer to the large related literature \citep{bang2005doubly, graham2012inverse}. Another alternative, which is popular in the applied economics literature and discussed in more detail in Section \ref{sec:simulation}, is to use a parametric structural model of bidding behavior for identification and estimation. Our approach avoids specifying a parametric model of bidding behavior. 

The simplest doubly-robust estimator  would solve for an empirical version of \eqref{eqn:drmoment}, as in \citet{chernozhukov2018double}. However, this requires inverting the estimated conditional mean function, since it is a function of $p$, which implies estimating the entire bid distribution conditional on covariates. When the bid or covariate dimension is high, a flexible estimator of this conditional distribution will converge too slowly for the theory in \citet{chernozhukov2018double}.  Instead, we adapt the localization approach of \citet{kallus2019localized}, which solves an empirical version of \eqref{eqn:drmoment} that fixes the cutoff component of the conditional mean functions at a first-step estimator of counterfactual market-clearing cutoffs. An application of this approach that uses the centralized mechanism $m(\cdot)$ to find a solution to the empirical moment condition is in Definition \ref{def:ldml}.

\begin{definition}\label{def:ldml} \textbf{Localized Doubly-Robust Estimator} 
\begin{enumerate} 
\item Randomly split the dataset into $K=3$ folds. Let $k(i)$ be the fold of observation $i$, for $i \in \{ 1, \ldots n \}$. Let $\mathcal I_k$ denote the indices of data in fold $k$, and $\mathcal I_{-k}$ the data that is not in fold $k$. In addition, for each fold, randomly split $\mathcal I_{-k}$ into two disjoint subsets $\mathcal H_{-k}$ and $\mathcal G_{-k}$. For each fold $k \in \{1, 2, 3\}$, 
\begin{itemize} 
\item  On data in fold $\mathcal H_{-k}$, compute a first-step cutoff estimate $\tilde P_{\pi} = m(\bm B, \tilde {\bm  \gamma}_{\pi} , s^*)$, using estimated weights $\tilde \gamma_{\pi, i} = \pi(X_i) \frac{W_i}{| \mathcal H_{-k}| \tilde e(X_i) }  + ( 1- \pi(X_i)) \frac{1- W_i}{| \mathcal H_{-k}| (1 - \tilde e(X_i))}$. $\tilde e(X_i)$ is estimated using $(W_i, X_i)$ in fold $\mathcal H_{-k}$. 
\item On data in fold $\mathcal G_{-k}$, estimate the propensity score $\hat e^{k}(X_i)$ using $(W_i, X_i)$. 
\item On data in fold $\mathcal G_{-k}$, estimate the conditional mean functions using a flexible regression:
\begin{itemize} 
\item Estimate $\hat \mu^{y, k}_w(X_i)$  for $w \in \{0, 1\}$ by regressing $y(B_i, X_i, \tilde P_{\pi})$ on $(X_i, W_i)$, 
\item Estimate $\hat \mu^{d, k}_w( X_i)$ for $w \in \{0, 1\}$ by regressing  $d(B_i, X_i, \tilde P_{\pi})$ on $(X_i, W_i)$. 
\end{itemize} 
\end{itemize} 
\item Using the full sample, compute a second-step estimate of cutoffs $\hat P_{\pi}= m(\bm B, \hat \gamma_{\pi}, \hat s_{\pi})$, where the weights and perturbed capacities are: 
\begin{equation*} 
\begin{split} 
& \hat \gamma_{\pi, i} = \pi(X_i)\frac{W_i}{ n \hat e^{k(i)}(X_i)} + ( 1- \pi(X_i))\frac {1 - W_i}{n ( 1 - \hat e^{k(i)} (X_i) )}, \\ 
&  \hat s_{\pi} = s^* + \frac{1}{n} \sum \limits_{i=1}^n \left ( \frac{W_i}{ \hat e^{k(i)}(X_i)}  -1  \right )  \pi(X_i) \hat \mu_1^{d, k(i)} (X_i) + ( 1- \pi(X_i)) \left( \frac{1 - W_i}{1 - \hat e^{k(i)}(X_i)}  -1  \right) \hat \mu_0^{d, k(i)} (X_i). \\ 
 \end{split}  
\end{equation*} 
\item Using the full sample, estimate $\hat V_n(\pi)$ using doubly-robust scores:
\begin{equation} 
\begin{split} 
& \hat V_n(\pi)= \frac{1}{n} \sum \limits_{i=1}^n \pi(X_i) \hat \Gamma^y_{1i}( \hat P_{\pi})  + (1 - \pi(X_i)) \hat \Gamma^y_{0i}(\hat P_{\pi}),  \\  
& \hat \Gamma^{y}_{1i}(p) =  \hat \mu_1^{y, k(i)}( X_i) +  \frac{W_i}{ \hat e^{k(i)}(X_i)} (y(B_i, X_i, p) - \hat \mu_1^{y, k(i)}( X_i)),  \\ 
& \hat \Gamma^{y}_{0i}(p) =  \hat \mu_0^{y, k(i)}( X_i) +  \frac{1 - W_i}{1 - \hat e^{k(i)}(X_i)} (y(B_i, X_i, p) - \hat \mu_0^{y, k(i)}( X_i)). 
 \label{eqn:ldml}
\end{split}   
\end{equation} 
\end{enumerate}
\end{definition} 

Data are split three ways. The first split estimates a pilot value for the counterfactual cutoffs, the second split estimates nuisance functions at that pilot cutoff, and the third evaluates scores. In settings like school choice where $J$ can be large, root-finding algorithms may be computationally prohibitive.  The localized approach uses the mechanism $m(\cdot)$ to find the market-clearing cutoffs, where demand is re-weighted to account for selection-on-observables, and supply is perturbed to adjust for bias in the first-step estimates. 

A structural approach usually imposes a parametric assumption on the distribution of bids conditional on covariates; once the parameters of that model are estimated, counterfactuals can be simulated directly from the model. The advantage of the approach in Definition \ref{def:ldml} is that it relies only on weak assumptions on the estimators of the propensity score and a set of conditional mean functions. Under Assumption \ref{as:nuisance}, the doubly-robust estimator is asymptotically normal and semi-parametrically efficient.

\begin{assumption} \label{as:nuisance} 
\textbf{Assumptions on Nuisance Estimation.  } Let $\hat \mu_w(x) = \hat \mu_w(x, \tilde P_{\pi})$ be a $(J+1)$-dimensional vector of functions that concatenates $\hat \mu^{y}_w(x, \tilde P_{\pi} )$ and $\hat \mu^{d}_w(x, \tilde P_{\pi})$, estimated on a training set of size $n/K$. $\mathbb E_{T}[\cdot]$ is an expectation over random test data, conditional on the training data. 
\begin{enumerate} 
\item  Strong overlap: almost surely, $\hat e(X_i)  \in (\kappa, 1- \kappa)$ for $\kappa > 0$. 
\item There is a constant $M < \infty$ such that $\sup \limits_{w \in \{0, 1\}, x \in \mathcal X, p \in \mathcal S }  || \hat \mu_w(x, p) ||_{\infty} \leq M. $

\item For each $\pi \in \Pi$, there is a finite $c$ such that with probability $ 1 - e^{-cn}$, 
\begin{align} 
&  \left (\mathbb E_{T} \left [  || \hat \mu_w( X_i, \tilde P_{\pi} ) - \mu_w(X_i, \tilde P_{\pi } ) ||^2 \right ] \right)^{1/2}  \leq \rho_{\mu, n} , \\ 
& \left ( \mathbb E_{T}[ (\hat e(X_i) - e(X_i))^2 ] \right)^{1/2} \leq \rho_{e, n}, \\ 
&\left (  || \tilde P_{\pi} - p^*_{\pi} ||^2   \right )^{1/2} \leq \rho_{\theta, n}, \label{eqn:firststep}
\end{align}

where $\rho_{e, n} = o(1)$, $\rho_{\mu, n} + \rho_{\theta, n} = o(1)$, $\rho_{e, n} \rho_{\mu, n} = o(n^{-1/2})$, and $\rho_{e, n}\rho_{\theta, n} = o(n^{-1/2})$. 
\item The error in the market-clearing condition follows $\rho_{g, n} = o(n^{-1/2})$. Specifically, 
with probability at least $1 -e^{-cn}$, $\mathcal C(p)$ is non-empty, and $\hat P_n \in \mathcal C(p)$, where
\[  \mathcal C(p) = \left \{ p \in \mathcal S : \left | \left | \frac{1}{n} \sum \limits_{i=1}^n \pi(X_i)  \Gamma^d_{1i}(p; \hat \eta) + ( 1- \pi(X_i))  \Gamma^d_{0i}(p; \hat \eta)  \right | \right| \leq \rho_{g, n} \right \},  \]  $ \Gamma^d_{1i}(p; \hat \eta) = \hat \mu^{d, k(i)}_1 (X_i) + \frac{W_i}{\hat e^{k(i)}(X_i)} (d(B_i(w), X_i, p) - \hat \mu^{d, k(i)}_1 (X_i)) $, and $\Gamma^d_{0i}(p; \hat \eta) = \hat \mu^{d, k(i)}_0 (X_i) + \frac{1- W_i}{1 - \hat e^{k(i)}(X_i)} (d(B_i(w), X_i, p) - \hat \mu^{d, k(i)}_0 (X_i)) $ and $\hat \eta$ collects the estimated nuisances.
\end{enumerate} 
\end{assumption} 

Assumption \ref{as:nuisance} requires that the pairwise product of the convergence rates of the initial estimator of the counterfactual cutoffs, the propensity score, and the conditional mean functions are $o(n^{-1/2})$. This means that for a fixed $p$, the estimator for expected outcomes and allocations conditional on $X_i$ can have a slow mean-square convergence rate. The uniform guarantee on the performance of the estimators over $\pi \in \Pi$ can be dropped for the point-wise results on the value function in this section, but is required for the regret guarantee in the next section. The main result of this section is that the algorithm described leads to an asymptotically normal estimator: 
 

\begin{theorem}\label{thm:norm} 

Under Assumptions \ref{as:id} - \ref{as:nuisance}, $ \hat V_n(\pi)  =  \frac{1}{n} \sum \limits_{i=1}^n \Gamma^{*q}_{\pi i}(p^*_{\pi}) + o_p(n^{-1/2}),$  where 
\[ \Gamma^{*q}_{\pi i}(p) = \pi(X_i) \Gamma^{*y}_{1i}(p) + ( 1- \pi(X_i)) \Gamma^{*y}_{0i}(p) - \nu^*_{\pi} \Big ( \pi(X_i) \Gamma^{*d} _{1i}(p) + (1- \pi(X_i)) \Gamma^{*d}_{0i}(p^*_0) - s^* \Big), \] 

and $\nu^*_{\pi} = \nabla_p^{\top} y_{\pi}(p^*_{\pi}) [\nabla_p z_{\pi}(p^*_{\pi})]^{-1}$. 
 
\end{theorem}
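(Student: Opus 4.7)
The plan is to decompose the error $\hat V_n(\pi) - \frac{1}{n}\sum_i \Gamma^{*q}_{\pi i}(p^*_{\pi})$ into three pieces: (i) the error of the doubly-robust score for $V$ when cutoffs are fixed at the true $p^*_{\pi}$; (ii) a first-order Taylor remainder capturing how $\hat V_n$ varies in $p$ around $p^*_{\pi}$; and (iii) the linearization of the first-order condition that defines $\hat P_{\pi}$. The target influence function appears by pairing the derivative $\nabla_p y_\pi(p^*_\pi)$ with the Bahadur expansion of $\hat P_\pi - p^*_\pi$, which produces the $-\nu^*_\pi$ correction.

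First, I would show that at the fixed point $p = p^*_\pi$,
\[
\frac{1}{n}\sum_i \bigl(\pi(X_i)\hat\Gamma^y_{1i}(p^*_\pi) + (1-\pi(X_i))\hat\Gamma^y_{0i}(p^*_\pi)\bigr) = \frac{1}{n}\sum_i \Gamma^{*y}_{\pi i}(p^*_\pi) + o_p(n^{-1/2}).
\]
This is the standard cross-fitted DML argument: the conditional bias is a product $O_p(\rho_{e,n}\rho_{\mu,n}) = o_p(n^{-1/2})$ by Assumption \ref{as:nuisance}, and the variance contribution from nuisance estimation vanishes thanks to three-way splitting and the strong-overlap bound on $\hat e$. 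An identical argument gives $\frac{1}{n}\sum_i \Gamma^d_{wi}(p^*_\pi;\hat\eta) = \frac{1}{n}\sum_i \Gamma^{*d}_{wi}(p^*_\pi) + o_p(n^{-1/2})$.

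Second, I would characterize $\hat P_\pi - p^*_\pi$ as a Z-estimator. By Assumption \ref{as:nuisance}(4), $\hat P_\pi$ solves the empirical moment condition $\frac{1}{n}\sum_i [\pi(X_i)\Gamma^d_{1i}(\hat P_\pi;\hat\eta) + (1-\pi(X_i))\Gamma^d_{0i}(\hat P_\pi;\hat\eta)] - s^* = O_p(\rho_{g,n})$ with $\rho_{g,n}=o(n^{-1/2})$. Combining the DML result at $p^*_\pi$ with stochastic equicontinuity of the empirical process $p \mapsto \frac{1}{n}\sum_i \Gamma^d_{\pi i}(p;\hat\eta)$ in a shrinking neighborhood of $p^*_\pi$, and then using the smoothness and invertibility of $\nabla_p z_\pi(p^*_\pi)$ from Assumption \ref{as:regularo}(3)--(4), yields
\[
\hat P_\pi - p^*_\pi = -[\nabla_p z_\pi(p^*_\pi)]^{-1}\cdot \frac{1}{n}\sum_i\bigl(\pi(X_i)\Gamma^{*d}_{1i}(p^*_\pi) + (1-\pi(X_i))\Gamma^{*d}_{0i}(p^*_\pi) - s^*\bigr) + o_p(n^{-1/2}),
\]
in particular $\|\hat P_\pi - p^*_\pi\| = O_p(n^{-1/2})$. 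Finally, an analogous stochastic-equicontinuity argument applied to the outcome scores gives
\[
\hat V_n(\pi) = \hat V_n(\pi)\big|_{p^*_\pi} + \nabla_p y_\pi(p^*_\pi)(\hat P_\pi - p^*_\pi) + o_p(n^{-1/2}),
\]
since centered population fluctuations of $\Gamma^y_{\pi i}$ between $p^*_\pi$ and $\hat P_\pi$ are $o_p(n^{-1/2})$ and the expectation is twice-differentiable. Substituting the Bahadur expansion and recognizing $\nu^*_\pi = \nabla_p^\top y_\pi(p^*_\pi)[\nabla_p z_\pi(p^*_\pi)]^{-1}$ delivers exactly $\frac{1}{n}\sum_i \Gamma^{*q}_{\pi i}(p^*_\pi) + o_p(n^{-1/2})$.

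The main obstacle is the third step, because the individual allocation and outcome maps $p \mapsto d(B_i,p)$ and $p \mapsto y(B_i,p)$ need not be continuous in $p$ (e.g.\ the indicator in the uniform-price auction), so we cannot differentiate $\hat V_n(\pi)$ pointwise. The fix is to invoke empirical process theory: the polynomial uniform covering-number bound in Assumption \ref{as:regularo}(1) makes $\mathcal F_{d,j}$ and $\mathcal F_y$ Donsker, and the $L_2$-continuity in Assumption \ref{as:regularo}(2) combined with $\|\hat P_\pi - p^*_\pi\| = O_p(n^{-1/2})$ yields stochastic equicontinuity, i.e.\ the demeaned empirical process evaluated at $\hat P_\pi$ and at $p^*_\pi$ differ by $o_p(n^{-1/2})$. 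Population smoothness (Assumption \ref{as:regularo}(3)) then supplies the Taylor expansion at the expectation level, with the second-order remainder absorbed into $o_p(n^{-1/2})$ because $\|\hat P_\pi - p^*_\pi\|^2 = o_p(n^{-1/2})$. Combining these empirical-process and Z-estimator pieces with the cross-fitted DML bound completes the argument.
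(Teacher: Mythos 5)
Your proposal is correct and follows essentially the same route as the paper's proof: a cross-fitted DML bias bound at the true cutoff, a Z-estimator/Bahadur expansion for $\hat P_{\pi}$, and Donsker-class stochastic equicontinuity plus a population-level Taylor expansion to transfer from $\hat P_{\pi}$ to $p^*_{\pi}$ (the paper organizes the same ingredients as remainders $R_{1n}$, $R_{2n}$, $R_{3n}$ supported by its equicontinuity, nuisance-convergence, and cutoff-concentration lemmas). The only point worth making explicit is that $\hat \mu$ is localized at the first-step cutoff $\tilde P_{\pi}$ rather than at $p^*_{\pi}$, so the product-bias bound needs the extra term $\rho_{e,n}\rho_{\theta,n} = o(n^{-1/2})$ in addition to $\rho_{e,n}\rho_{\mu,n}$, which your appeal to Assumption \ref{as:nuisance} implicitly covers.
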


\begin{corollary}\label{cor:norm} 
Under Assumptions \ref{as:id} - \ref{as:nuisance},
\[ \hat \tau_{\text{GTE}}  -\tau^*_{\text{GTE}}  =  \frac{1}{n} \sum \limits_{i=1}^n \Gamma^{*q}_{1i}(p^*_1)  - \Gamma^{*q}_{0i}(p^*_0) - \tau^*_{\text{GTE}} + o_p(n^{-1/2}),  \] 

where $\Gamma^{*q}_{w i}(p) = \Gamma^{*y}_{wi}(p) - \nu^*_{w}( \Gamma^{*d}_{wi}(p) - s^*) $. Furthermore, $ \sqrt n(  \hat \tau_{\text{GTE}} - \tau^*_{\text{GTE}} ) \rightarrow_D N(0, \sigma^2), $
where $\sigma^2 = \mathbb E[ (\Gamma_{1i}^q(p^*_1) - \Gamma_{0i}^q(p^*_0) - \tau^*_{\text{GTE}})^2]$.  
\end{corollary}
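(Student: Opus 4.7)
\textbf{Proof plan for Corollary \ref{cor:norm}.} The plan is to derive the corollary as a direct specialization of Theorem \ref{thm:norm} at the two constant policies $\pi_1 \equiv 1$ and $\pi_0 \equiv 0$ (both of which are in $\Pi$ by assumption), followed by a standard Lindeberg-Levy argument applied to the resulting i.i.d. sum.

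First I would observe that by construction $\hat \tau_{\text{GTE}} = \hat V_n(\pi_1) - \hat V_n(\pi_0)$ and $\tau^*_{\text{GTE}} = V^*(\pi_1) - V^*(\pi_0)$. Substituting $\pi = \pi_1$ into the expansion of Theorem \ref{thm:norm}, the indicators $\pi(X_i)$ collapse to $1$ and $1-\pi(X_i)$ to $0$, so $p^*_{\pi_1} = p^*_1$, $\nu^*_{\pi_1} = \nu^*_1$, and
\[ \hat V_n(\pi_1) = \frac{1}{n}\sum_{i=1}^n \bigl(\Gamma^{*y}_{1i}(p^*_1) - \nu^*_1(\Gamma^{*d}_{1i}(p^*_1) - s^*)\bigr) + o_p(n^{-1/2}) = \frac{1}{n}\sum_{i=1}^n \Gamma^{*q}_{1i}(p^*_1) + o_p(n^{-1/2}). \]
The symmetric reduction at $\pi_0$ yields the analogous expansion at $w=0$. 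Subtracting the two and using additivity of $o_p(n^{-1/2})$ terms gives the asymptotically linear form claimed for $\hat \tau_{\text{GTE}} - \tau^*_{\text{GTE}}$.

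Second, for the limit law, the summands $\Gamma^{*q}_{1i}(p^*_1) - \Gamma^{*q}_{0i}(p^*_0)$ depend on the i.i.d. draws $(X_i, W_i, B_i(1), B_i(0))$ only through fixed, population-level quantities $p^*_w$ and $\nu^*_w$, hence are themselves i.i.d. Their mean equals $\tau^*_{\text{GTE}}$ by the identifying moment conditions \eqref{eqn:drmoment}, so recentering produces a mean-zero i.i.d. sum. Finiteness of $\sigma^2$ follows from the uniform boundedness of $y$ and $d$ under Assumption \ref{as:regularo}, the bounded singular values of $\nabla_p z_{\pi}(p^*_{\pi})$ in Assumption \ref{as:regularo}(4) (which bound $\nu^*_w$), and strict overlap of $e(\cdot)$ implied jointly by Assumption \ref{as:id}(3) and Assumption \ref{as:nuisance}(1). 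The Lindeberg-Levy CLT applied to the leading sum, combined with Slutsky's theorem to absorb the $o_p(n^{-1/2})$ remainder, then delivers $\sqrt n(\hat \tau_{\text{GTE}} - \tau^*_{\text{GTE}}) \to_D N(0, \sigma^2)$.

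There is no real obstacle here: every technically demanding component — controlling the first-stage cutoff error, the nuisance-function rates, and the bias from sample-splitting — has already been absorbed into Theorem \ref{thm:norm}. The corollary is essentially bookkeeping: specialize the general-$\pi$ expansion to the two constant policies, difference, and invoke the CLT.
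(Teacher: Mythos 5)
Your proposal is correct and follows essentially the same route as the paper: the paper likewise obtains the corollary by specializing the asymptotically linear expansion of $\hat V_n(\pi)$ from Theorem \ref{thm:norm} to the two constant policies, differencing, recentering at $\tau^*_{\text{GTE}}$, and invoking the CLT for the resulting mean-zero i.i.d.\ sum with finite variance. Your additional remarks on why the mean and variance are as claimed are consistent with (and slightly more explicit than) the paper's treatment.
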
 

With known nuisance functions, standard techniques for method-of-moments estimators can be used to prove Theorem \ref{thm:norm} for a propensity score-based estimator. With an unknown propensity score and a doubly-robust estimator, the challenge is to show that even when estimated nuisance functions depend on market-clearing cutoffs, their estimation error does not have a first order impact on the error of the estimator.\footnote{Under weaker entropy conditions than in Assumption \ref{as:regularo}, the main result in \citet{kallus2019localized} can be used to prove Theorem \ref{thm:norm}. However, the stronger conditions that we impose, which are met by economic mechanisms used in practice, lead to a more concise proof of Theorem \ref{thm:norm}, and are useful for the regret results in Section \ref{sec:targeting}.}

Corollary \ref{cor:norm} follows directly from Theorem \ref{thm:norm}. Due to the market-clearing cutoffs, the asymptotic variance of $\hat \tau_{\text{GTE}}$ depends on the variance of a linear combination of treatment effects on outcomes and treatment effects on allocations. The first component is the standard sampling variation in direct treatment effects, and the second is due to the  variation in the equilibrium that is reached in the allocation mechanism.  Furthermore, the variance in Corollary \ref{cor:norm} meets the semi-parametric efficiency bound for $\tau^*_{\text{GTE}}$. 

\begin{theorem} \textbf{Semi-Parametric Efficiency} \label{thm:eff} Under the assumptions of Theorem \ref{thm:norm}, the semi-parametric efficiency bound for $\tau^*_{\text{GTE}}$ is equal to $\sigma^2$. 
\end{theorem}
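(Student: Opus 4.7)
The plan is to derive the efficient influence function (EIF) for $\tau^*_{\text{GTE}}$ via the standard pathwise-derivative approach and verify that it coincides with $\psi(Z) = \Gamma^{*q}_{1i}(p^*_1) - \Gamma^{*q}_{0i}(p^*_0) - \tau^*_{\text{GTE}}$, so that the bound $\mathbb E[\psi^2] = \sigma^2$ is attained. I would fix a regular one-parameter submodel $\{P_\theta : \theta \in \mathbb R\}$ for the observed-data distribution of $Z = (X, W, B)$, passing through the truth at $\theta = 0$ with score $s(Z) = \partial_\theta \log p_\theta(Z)|_0$. Because Assumption \ref{as:id} imposes no restriction on the joint law of $(X, W, B)$ beyond unconfoundedness and overlap, the tangent space is the full nonparametric observed-data tangent space of mean-zero square-integrable functions of $Z$. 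Any candidate $\psi$ that is mean-zero, depends only on $Z$, and satisfies $\partial_\theta \tau^*(\theta)|_0 = \mathbb E[\psi(Z) s(Z)]$ for every regular submodel must therefore be the EIF, and $\mathbb E[\psi^2]$ is the efficiency bound.

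Second, I would compute $\partial_\theta \tau^*(0)$ by differentiating the implicit system defining $\tau^*_{\text{GTE}}$. Under the submodel, $p_w^*(\theta)$ satisfies $\mathbb E_\theta[d(B_i(w), p_w^*(\theta))] = s^*$ and $\tau^*(\theta) = \mathbb E_\theta[y(B_i(1), p_1^*(\theta)) - y(B_i(0), p_0^*(\theta))]$. Assumption \ref{as:regularo} provides invertibility of $\nabla_p \mathbb E[d(B_i(w), p_w^*)]$ and enough smoothness to apply the implicit function theorem, yielding
\begin{equation*}
\partial_\theta p_w^*(0) = -\bigl[\nabla_p \mathbb E[d(B_i(w), p_w^*)]\bigr]^{-1} \, \partial_\theta \mathbb E_\theta[d(B_i(w), p_w^*)]\big|_0.
\end{equation*}
Substituting into the chain-rule expansion of $\partial_\theta \tau^*(0)$ and collecting terms in $\nu^*_w$ (as defined in Theorem \ref{thm:moment}) gives
\begin{equation*}
\partial_\theta \tau^*(0) = \sum_{w \in \{0,1\}} (-1)^{1-w} \Big( \partial_\theta \mathbb E_\theta[y(B_i(w), p_w^*)]\big|_0 - \nu^*_w \, \partial_\theta \mathbb E_\theta[d(B_i(w), p_w^*)]\big|_0 \Big),
\end{equation*}
with the inner argument $p_w^*$ held fixed at the truth.

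Third, I would translate these counterfactual pathwise derivatives into observed-data inner products. A standard missing-data calculation, following Hahn (1998), shows that under Assumption \ref{as:id} the EIF for any $L_2$-bounded functional $\mathbb E[f(B_i(w))]$ is the centered doubly-robust score $\Gamma^{*f}_{wi} - \mathbb E[f(B_i(w))]$, so $\partial_\theta \mathbb E_\theta[f(B_i(w))]|_0 = \mathbb E[\Gamma^{*f}_{wi} \, s(Z)]$ (constants drop because $\mathbb E[s(Z)] = 0$). Applied to $f = y(\cdot, p_w^*)$ and $f = d(\cdot, p_w^*)$ and substituted into the display above, this yields $\partial_\theta \tau^*(0) = \mathbb E[\psi(Z) s(Z)]$ with $\psi$ as claimed. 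Because $\psi$ is mean-zero and a function of $Z$ only, $\psi$ lies in the tangent space and is therefore the EIF; hence the efficiency bound equals $\mathbb E[\psi^2] = \sigma^2$.

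The main obstacle is the envelope-style cancellation in the second step: the terms arising from differentiating $p_w^*(\theta)$ inside both $y(B_i(w), \cdot)$ and $d(B_i(w), \cdot)$ must balance exactly so that the final expression involves only pathwise derivatives at fixed $p_w^*$, with coefficient $\nu^*_w$. This is the continuum-market analog of the argument driving Theorem \ref{thm:moment}, and relies on Assumption \ref{as:regularo}'s population-level smoothness to justify interchanging $\partial_\theta$ and $\nabla_p$; once that interchange is secured, the Hahn-type step applied to the four resulting counterfactual functionals is routine.
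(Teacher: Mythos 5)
Your proposal is correct and follows essentially the same route as the paper's proof: pathwise differentiation of the implicitly defined cutoffs via the implicit function theorem to obtain the $q_w^*$ functions and coefficients $\nu_w^*$, followed by identification of the centered doubly-robust score as the gradient lying in the tangent space. The only differences are presentational — you invoke the Hahn (1998) missing-data result and the fullness of the observed-data tangent space where the paper verifies the inner-product identity $\mathbb E[\psi s] = \tau'$ and tangent-space membership by direct computation.
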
 

If one is willing to impose a parametric assumption on the bid distribution, then a structural estimator of $\tau^*_{\text{GTE}}$ will be efficient. However, in the absence of a parametric assumption on how the treatment impacts bidding behavior, then the proposed estimator is semi-parametrically efficient. The proof of this theorem is in Appendix \ref{app:eff}. The proof uses the methodology presented in \citet{bickel1993efficient} and \citet{newey1990semiparametric}, and is closely related to the bound for quantile treatment effects in \citet{firpo2007efficient}.

By computing a plug-in estimator of $\sigma^2$, we can perform asymptotically valid inference on the continuum market counterfactual $\tau^*_{\text{GTE}}$. Consistency of a plug-in estimator for $\sigma^2$ follows from the existing assumptions, as shown in Theorem 4 of \citet{kallus2019localized}. Appendix \ref{ap:coverage} uses Monte Carlo simulations to illustrate the finite-sample properties of confidence intervals based on the normal approximation of Corollary \ref{cor:norm}. Theorems \ref{thm:norm} and \ref{thm:eff} focus on the continuum market counterfactual $\tau^*_{\text{GTE}}$. Although it is a convenient approximation, in many settings the true target of interest is the finite-market counterfactual $\bar \tau_{\text{GTE}}$. Combining Theorem \ref{thm:moment} and Theorem \ref{thm:norm}, we have that $\sqrt n (\hat \tau_{\text{GTE}} - \bar \tau_{\text{GTE}}) \rightarrow_D N(0, \bar \sigma^2), $ where $\bar \sigma^2 =  \mathbb E[ (\Gamma^{*q}_{1i}(p^*_1)  - \Gamma^{*q}_{0i} (p^*_0) - q_1(B_i(1), X_i, p^*_1) + q_0(B_i(0), X_i, p^*_0)  )^2 ] $. Proposition \ref{cor:cons} shows that inference that is valid for the continuum market estimand is conservative for the finite market estimand, which is the primary counterfactual of interest for a policymaker or market designer. 

\begin{proposition} \label{cor:cons} Under the Assumptions of Theorem \ref{thm:norm}, $\sigma^2 \geq \bar \sigma^2$. 
\end{proposition}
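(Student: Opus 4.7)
The strategy is to show that $\bar\sigma^2$ is a variance of a particular random variable, and then to express $\sigma^2 - \bar\sigma^2$ as a sum of nonnegative quantities by iterating expectations. Define the centered residuals $U_w = q_w(B_i(w), p^*_w) - \mu^q_w(X_i, p^*_w)$ for $w \in \{0, 1\}$, where $\mu^q_w(x, p) = \mu^y_w(x, p) - \nu^*_w(\mu^d_w(x, p) - s^*)$ is the conditional mean of $q_w(B_i(w), p)$ given $X_i = x$. Let $A_i = \Gamma^{*q}_{1i}(p^*_1) - \Gamma^{*q}_{0i}(p^*_0)$ and $B_i = q_1(B_i(1), p^*_1) - q_0(B_i(0), p^*_0)$. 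A first step is to verify that $\mathbb E[A_i] = \mathbb E[B_i] = \tau^*_{\text{GTE}}$: this follows from the moment condition $\mathbb E[d(B_i(w), p^*_w) - s^*] = 0$ and unconfoundedness. Consequently $\bar\sigma^2 = \mathbb E[(A_i - B_i)^2] = \mathrm{Var}(A_i - B_i)$ and $\sigma^2 = \mathrm{Var}(A_i)$.

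Next, I would compute $A_i - B_i$ in closed form. Writing the doubly robust score as $\Gamma^{*q}_{1i}(p^*_1) = \mu^q_1(X_i, p^*_1) + \frac{W_i}{e(X_i)} U_1$, one finds $\Gamma^{*q}_{1i}(p^*_1) - q_1(B_i(1), p^*_1) = \frac{W_i - e(X_i)}{e(X_i)} U_1$, and similarly $\Gamma^{*q}_{0i}(p^*_0) - q_0(B_i(0), p^*_0) = -\frac{W_i - e(X_i)}{1 - e(X_i)} U_0$, hence
\begin{equation*}
A_i - B_i = (W_i - e(X_i)) \left( \frac{U_1}{e(X_i)} + \frac{U_0}{1 - e(X_i)} \right).
\end{equation*}
Using $(B_i(1), B_i(0)) \indep W_i \mid X_i$ and $\mathbb E[(W_i - e(X_i))^2 \mid X_i] = e(X_i)(1 - e(X_i))$, conditioning on $X_i$ and the potential outcomes yields
\begin{equation*}
\bar\sigma^2 = \mathbb E\!\left[\frac{(1 - e(X_i))\, U_1^2}{e(X_i)} + \frac{e(X_i)\, U_0^2}{1 - e(X_i)} + 2\, U_1 U_0 \right].
\end{equation*}

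For $\sigma^2$, I would apply the law of total variance. Since $\mathbb E[A_i \mid X_i] = \mu^q_1(X_i, p^*_1) - \mu^q_0(X_i, p^*_0)$ and $W_i^2 = W_i$, $(1 - W_i)^2 = 1 - W_i$, $W_i(1 - W_i) = 0$, a direct conditional-variance computation gives $\mathrm{Var}(A_i \mid X_i) = \mathbb E[U_1^2 \mid X_i]/e(X_i) + \mathbb E[U_0^2 \mid X_i]/(1 - e(X_i))$, so
\begin{equation*}
\sigma^2 = \mathbb E\!\left[ \frac{U_1^2}{e(X_i)} + \frac{U_0^2}{1 - e(X_i)} \right] + \mathrm{Var}\!\left(\mu^q_1(X_i, p^*_1) - \mu^q_0(X_i, p^*_0)\right).
\end{equation*}

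The punchline is algebraic: subtracting and using $1/e - (1-e)/e = 1$ (and the analogous identity for $1 - e$),
\begin{equation*}
\sigma^2 - \bar\sigma^2 = \mathbb E[U_1^2] + \mathbb E[U_0^2] - 2\, \mathbb E[U_1 U_0] + \mathrm{Var}(\mu^q_1(X_i, p^*_1) - \mu^q_0(X_i, p^*_0)) = \mathbb E[(U_1 - U_0)^2] + \mathrm{Var}(\mu^q_1 - \mu^q_0) \geq 0.
\end{equation*}
The main (minor) obstacle is just keeping the bookkeeping straight when expanding $\Gamma^{*q}$ and carefully invoking unconfoundedness to separate $W_i$ from $(U_0, U_1)$; there are no deep analytic issues, and the inequality is in fact strict whenever there is any conditional variation in the potential $q$-values or any heterogeneity in $\mu^q_1 - \mu^q_0$ across $X_i$.
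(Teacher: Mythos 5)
Your proof is correct, and it reaches the same conclusion by a more computational route than the paper. The paper's argument is a one-step Pythagoras/projection identity: writing $A_i = \Gamma^{*q}_{1i}(p^*_1) - \Gamma^{*q}_{0i}(p^*_0)$ and $B_i = q_1(B_i(1),p^*_1) - q_0(B_i(0),p^*_0)$, it adds and subtracts $B_i$ inside $\sigma^2 = \mathbb E[(A_i - \tau^*_{\text{GTE}})^2]$ and kills the cross term by iterating expectations over $(X_i, B_i(1), B_i(0))$, since $\mathbb E[A_i \mid X_i, B_i(1), B_i(0)] = B_i$; this gives $\sigma^2 = \bar\sigma^2 + \mathbb E[(B_i - \tau^*_{\text{GTE}})^2]$ immediately. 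You instead expand the doubly-robust scores explicitly, obtain $A_i - B_i = (W_i - e(X_i))\bigl(U_1/e(X_i) + U_0/(1-e(X_i))\bigr)$, compute $\bar\sigma^2$ and $\sigma^2$ in closed form via unconfoundedness and the law of total variance, and subtract. The underlying mathematical content is the same conditional-mean-zero property of the augmentation noise, but your version never states the orthogonality identity; instead it buys an explicit decomposition of the gap, $\sigma^2 - \bar\sigma^2 = \mathbb E[(U_1-U_0)^2] + \mathrm{Var}(\mu^q_1 - \mu^q_0)$, which is exactly the total-variance decomposition of the paper's $\mathrm{Var}(B_i)$ and makes the sources of conservativeness (within-$X$ dispersion of individual effects plus across-$X$ effect heterogeneity) transparent. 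The paper's route is shorter and less error-prone; yours is more informative. One small caveat on your closing aside: strictness requires $U_1 \neq U_0$ with positive probability, not merely ``conditional variation in the potential $q$-values,'' since the first term vanishes if $U_1 = U_0$ almost surely even when both are nondegenerate.
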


\section{Policy Learning} 
\label{sec:targeting} 

The model in Section \ref{sec:model} allows for heterogeneity in the effect of the treatment on individual welfare. So far, however, the counterfactuals considered treat all market participants the same. In some settings where there is significant heterogeneity in treatment response, a market designer or policymaker may consider treatment rules that target some subset of market participants. In this section, we consider the problem of choosing $\pi \in \Pi$ to maximize finite market or continuum market expected outcomes. Because of interactions through the centralized mechanism, the benefit of treating a group of individuals depends on their direct response to the treatment as well as indirect effects on others;  the magnitude of both can vary depending on the treatment saturation in the sample. In this paper, because the indirect effect is mediated by a known algorithm, learning optimal treatment rules is possible. 

We start by characterizing the optimal unrestricted treatment rule in the continuum market; although this leads to a useful description of the structure of the globally optimal rule, designing an estimator with good theoretical guarantees requires additional assumptions. We then consider the problem of estimating a treatment rule that is a member of a restricted class of rules, and maximizes outcomes in the finite market. We restrict $\Pi$ to be a VC class and show that maximizing the estimated value function within this class using the algorithm in Section \ref{sec:estimation} has regret that decays at a $1/\sqrt n$ rate. This is a notable result --  when interactions are mediated by a cutoff mechanism, it is possible to learn the optimal policy at an asymptotic rate that matches the lower bound for policy learning without spillover effects \citep{athey2021policy}. 

\subsection{Unconstrained Class of Treatment Rules} 
\label{sec:gtarget}

Theorem \ref{thm:global_target} provides a score condition that any optimal rule must satisfy when $\Pi$ is unconstrained. 

\begin{theorem}\label{thm:global_target}  Let $\Pi$ be the class of all functions from $\mathcal X$ to $[0,1]$.  Let $\rho(x, \pi) = \mathbb E[q_{\pi}(B_i(1), X_i, p^*_{\pi}) - q_{\pi}(B_i(0), X_i, p^*_{\pi} )| X_i = x] , $
where $q_{\pi}(B_i(w), X_i, p) = y(B_i(w), X_i, p) - \nu^*_{\pi} (d(B_i(w), X_i, p) - s^*)$.  For any optimal rule $\pi^* \in \arg \max V^*(\pi)$, for almost all $x \in \mathcal X$, $\pi^*(x) = 1$ when $\rho(x, \pi^*) > 0$, $\pi^*(x) = 0$ when $\rho(x, \pi^*)  <0$, and $\pi^*(x) \in [0, 1]$ when $\rho(x, \pi^*) = 0$. 

\end{theorem}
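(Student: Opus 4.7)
The plan is to view $V^*(\pi) = y_\pi(p^*_\pi)$ as the value of a constrained optimization problem in which $p^*_\pi$ is implicitly defined by the market-clearing equation $z_\pi(p^*_\pi) = 0$, then apply a variational-derivative argument combined with the implicit function theorem to reduce optimality to a pointwise condition on $\pi(x)\in[0,1]$.

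First, I fix $\pi \in \Pi$ and a bounded direction $h:\mathcal X \to \mathbb R$ chosen so that $\pi_t(x) := \pi(x) + t h(x) \in [0,1]$ for all small $t \geq 0$. Writing $V^*(\pi_t) = y_{\pi_t}(p^*_{\pi_t})$ and using the chain rule gives
\[
\frac{d}{dt}V^*(\pi_t)\Big|_{t=0} = \partial_t y_{\pi_t}(p^*_\pi)\Big|_{t=0} + \nabla_p y_\pi(p^*_\pi)^\top \frac{d p^*_{\pi_t}}{dt}\Big|_{t=0}.
\]
The first term is linear in $h$ and equals $\mathbb E[h(X_i)(y(B_i(1), p^*_\pi) - y(B_i(0), p^*_\pi))]$. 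For the second, I differentiate $z_{\pi_t}(p^*_{\pi_t}) = 0$ in $t$; by Assumption \ref{as:regularo}.4 the Jacobian $\nabla_p z_\pi(p^*_\pi)$ has bounded singular values and is invertible, so the implicit function theorem yields
\[
\frac{d p^*_{\pi_t}}{dt}\Big|_{t=0} = -[\nabla_p z_\pi(p^*_\pi)]^{-1}\,\mathbb E\!\left[h(X_i)\bigl(d(B_i(1), p^*_\pi) - d(B_i(0), p^*_\pi)\bigr)\right].
\]
Substituting and using the definitions $\nu^*_\pi = \nabla_p^\top y_\pi(p^*_\pi)[\nabla_p z_\pi(p^*_\pi)]^{-1}$ and $q_\pi(b,p) = y(b,p) - \nu^*_\pi(d(b,p)-s^*)$ gives the clean form
\[
\frac{d}{dt}V^*(\pi_t)\Big|_{t=0} = \mathbb E[h(X_i)\,\rho(X_i,\pi)].
\]

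Next, I translate this into a pointwise optimality condition. At an optimum $\pi^*$, every admissible one-sided perturbation must give a nonpositive directional derivative: $\mathbb E[h(X_i)\rho(X_i,\pi^*)] \leq 0$. Suppose toward contradiction that $A_+ := \{x : \rho(x,\pi^*) > 0,\ \pi^*(x) < 1\}$ has positive probability. Then $h(x) = \mathbbm{1}(x \in A_+)$ is an admissible upward perturbation with $\mathbb E[h(X_i)\rho(X_i,\pi^*)] > 0$, contradicting optimality; hence $\pi^*(x) = 1$ almost surely on $\{\rho(\cdot,\pi^*) > 0\}$. Symmetrically, applying $h = -\mathbbm{1}_{A_-}$ for $A_- := \{x : \rho(x,\pi^*) < 0,\ \pi^*(x) > 0\}$ forces $\pi^*(x) = 0$ almost surely on $\{\rho(\cdot,\pi^*) < 0\}$. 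On $\{\rho(\cdot,\pi^*) = 0\}$ any value in $[0,1]$ is compatible with the first-order condition.

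The main obstacle is justifying that $p^*_{\pi_t}$ is well-defined and differentiable at $t=0$. Uniqueness of $p^*_{\pi_t}$ and its continuity in $t$ follow from Assumption \ref{as:regulare} (compactness and well-separation) together with the smoothness of $z_\pi$ in $p$ from Assumption \ref{as:regularo}.3; differentiability then follows from invertibility of $\nabla_p z_\pi(p^*_\pi)$ via the standard implicit function argument. A minor subtlety is that when $\pi^*(x) \in \{0,1\}$ only one-sided perturbations in $h(x)$ are feasible, which is exactly why the conclusion takes the one-sided form in those two cases and why $\pi^*(x)$ is left free on the zero set of $\rho(\cdot,\pi^*)$.
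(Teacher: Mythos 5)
Your proposal is correct and follows essentially the same route as the paper: both compute the policy derivative of $V^*(\pi)$ by combining the direct effect on $y_\pi$ with the equilibrium response of $p^*_\pi$ obtained from the implicit function theorem applied to $z_\pi(p^*_\pi)=0$, arrive at the linear functional $h \mapsto \mathbb E[h(X_i)\rho(X_i,\pi)]$, and then rule out positive-measure violations of the sign condition via indicator-function perturbations (the paper phrases the first-order condition through a Fréchet derivative and Luenberger's necessary condition for a maximum over a convex set, but the contradiction argument on the sets $A_+$ and $A_-$ is identical).
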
 

The score $\rho(x, \pi)$ consists of two components. The first captures the direct impact of treating participants with $X_i = x$, while holding market-clearing cutoffs fixed. The second accounts for the general equilibrium effect through capacity constraints, where $v^*_{\pi}$, represents the marginal social cost of the resulting demand shift on the outcomes of other participants. If the sum of these two effects is positive then the treatment probability for the group is positive. This is in contrast to the globally optimal rule under SUTVA, where only the sign of the conditional average direct effect of the treatment on outcomes matters. While this result is useful for understanding the structure of the optimal rule, the ultimate goal in this section is to characterize the regret of an estimator for the optimal treatment rule. Unfortunately, obtaining even consistency is challenging for the globally optimal rule; a plug-in estimator may not meet the condition of Theorem \ref{thm:global_target}, since $\hat  \rho(x, e)$  estimated at the treatment rule observed in the data may be very different from $\rho(x, \tilde \pi)$, where $\tilde \pi(x) = \mathbbm{1}( \hat \rho(x, e) > 0)$.  In the next section, we constrain $\Pi$ to be a VC class, which allows for an empirical welfare maximization approach that has asymptotic regret guarantees even in the finite market. Furthermore, this constraint is often useful in practice where ``simple" treatment rules, such as linear threshold rules, are desirable. 

\subsection{Constrained Class of Treatment Rules} 
As in \citet{kitagawa2018should}, we now assume that $\Pi$ is a VC class of functions with dimension $v$. The estimator of the optimal value function maximizes the doubly-robust estimator of the value function from Section \ref{sec:estimation} over $\Pi$,  specifically $ \hat \pi \in \arg \max  \limits_{\pi \in \Pi} \hat V_n(\pi). $

The main contribution of this section is formalizing how well the estimated rule performs compared to the oracle rule that maximizes the unobserved finite-market value $\bar V_n(\pi)$ directly. A key step in this result is to show that both $\bar V_n(\pi)$ and $\hat V_n(\pi)$ converge uniformly in $\pi \in \Pi$ as $n$ grows large to the continuum market value $V^*(\pi)$. For this uniform convergence, we require Assumption \ref{as:anuis}, which is an additional assumption on the nuisance functions. 


\begin{assumption} \label{as:anuis} With probability at least $1 - o(1)$, the function class  $\mathcal F_{\hat \mu}  = \{ X \mapsto \hat \mu^y(X, p) : p \in \mathcal S \} $ and, for each $j \in \{1, \ldots J \}$ the class $\mathcal F_{\hat \mu, j}  = \{ X \mapsto \hat \mu^d_j(X, p) : p \in \mathcal S \}$ have uniform covering numbers obeying, for every $0< \epsilon <  1$, $\sup \limits_{Q_y}   N(\epsilon , \mathcal F_{\hat \mu}, L_2(Q_y)) \leq C(1/\epsilon)^{h_y}$ and $ \sup \limits_{Q_d}  N(\epsilon , \mathcal F_{\hat \mu, j} , L_2(Q_d)) \leq C (1/\epsilon)^{h_d}$. 
\end{assumption} 

Although we allow the estimated conditional mean functions to be complex functions of $X_i$, they must be relatively simple functions of $p$. Since we already impose a metric entropy condition on individual-level outcome functions in $p$, in some cases, such as for the $K$-nearest-neighbors estimator used in Section \ref{sec:empirical}, this is automatically satisfied by Assumption \ref{as:regularo}. For more general machine learning estimators, verifying this type of condition  may require additional effort. We can now prove Theorem \ref{thm:regret}. 

\begin{theorem} \label{thm:regret}  Under the assumptions of Theorem \ref{thm:norm} and Assumption \ref{as:anuis}, also assume $\Pi$ is a VC class of dimension $v$. Then, regret in both the finite market and the continuum market from the empirical welfare maximization procedure decays asymptotically at a $1/\sqrt n$ rate: 
\begin{equation*} 
\begin{split} 
   &     \max \limits_{\pi \in \Pi} V^*(\pi) - V^*(\hat \pi)     = O_p\left ( \frac{1}{\sqrt n} \right), \qquad \max \limits_{\pi \in \Pi} \bar V_n(\pi)  - \bar V_n(\hat \pi)  = O_p \left ( \frac{1}{\sqrt n} \right). 
    \end{split}
    \end{equation*} 
\end{theorem}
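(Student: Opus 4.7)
The plan is to reduce both regret bounds to a uniform convergence statement: writing $\pi^*_\Pi \in \arg\max_{\pi \in \Pi} V^*(\pi)$ and $\bar\pi^*_\Pi \in \arg\max_{\pi \in \Pi} \bar V_n(\pi)$, the standard Kitagawa--Tetenov inequality gives
\begin{equation*}
V^*(\pi^*_\Pi) - V^*(\hat\pi) \leq 2 \sup_{\pi \in \Pi} \bigl| \hat V_n(\pi) - V^*(\pi) \bigr|,
\end{equation*}
and similarly for the finite-market regret with an added $2\sup_{\pi \in \Pi}|\bar V_n(\pi) - V^*(\pi)|$ term. The whole theorem therefore follows once I establish (i) $\sup_{\pi \in \Pi} |\hat V_n(\pi) - V^*(\pi)| = O_p(n^{-1/2})$ and (ii) $\sup_{\pi \in \Pi} |\bar V_n(\pi) - V^*(\pi)| = O_p(n^{-1/2})$. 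Claim (ii) is a uniform version of Theorem \ref{thm:moment}: Theorem \ref{thm:moment} provides the pointwise asymptotic linearization $\bar V_n(\pi) - V^*(\pi) = n^{-1} \sum_i \{ \pi(X_i) q_1(B_i(1), p^*_\pi) + (1-\pi(X_i)) q_0(B_i(0), p^*_\pi)\} - V^*(\pi) + o_p(n^{-1/2})$, and since $\Pi$ is VC and $\{b \mapsto q_w(b, p^*_\pi) : \pi \in \Pi\}$ inherits polynomial covering numbers from Assumption \ref{as:regularo} composed with the map $\pi \mapsto p^*_\pi$, the standard VC maximal inequality upgrades this to the uniform bound.

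For claim (i), I would first strengthen Theorem \ref{thm:norm} to a uniform asymptotic linearization: $\sup_{\pi \in \Pi} |\hat V_n(\pi) - n^{-1}\sum_i \Gamma^{*q}_{\pi i}(p^*_\pi)| = o_p(n^{-1/2})$. This requires a uniform counterpart to each step of the proof of Theorem \ref{thm:norm}. The two ingredients needed are (a) uniform convergence of the second-stage cutoffs, $\sup_{\pi \in \Pi} \|\hat P_\pi - p^*_\pi\| = O_p(n^{-1/2})$, and (b) a uniform Neyman-orthogonality argument that controls the cross-terms involving nuisance errors evaluated along the random cutoff path. Step (b) follows from the doubly-robust structure combined with the $\rho_{e,n} \rho_{\mu,n} = o(n^{-1/2})$ and $\rho_{e,n} \rho_{\theta,n} = o(n^{-1/2})$ rates of Assumption \ref{as:nuisance} and a maximal inequality over the VC-class $\Pi$; the entropy condition of Assumption \ref{as:anuis} is exactly what makes the empirical-process terms $(P_n - P)\{ \pi(X) \hat\mu^y(X, \hat P_\pi) \}$ and analogues controllable uniformly in $\pi$.

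The main obstacle is establishing (a), the uniform convergence of market-clearing cutoffs. This is precisely the step flagged in the introduction as ``the most challenging technical result of the paper.'' The difficulty is that $\hat P_\pi$ is defined only implicitly as an approximate root of the empirical moment $n^{-1}\sum_i \{\pi(X_i) \Gamma^d_{1i}(p; \hat\eta) + (1-\pi(X_i)) \Gamma^d_{0i}(p; \hat\eta)\}$, whose coefficients themselves depend on $\pi$ and on estimated nuisances. My plan is: first, use the VC property of $\Pi$, the metric entropy of $\mathcal F_{d,j}$ from Assumption \ref{as:regularo}, and the entropy of $\mathcal F_{\hat\mu, j}$ from Assumption \ref{as:anuis} to show
\begin{equation*}
\sup_{\pi \in \Pi, \, p \in \mathcal S} \Bigl| n^{-1}\sum_i \bigl\{\pi(X_i) \Gamma^d_{1i}(p; \hat\eta) + (1-\pi(X_i)) \Gamma^d_{0i}(p; \hat\eta)\bigr\} - z_\pi(p) \Bigr| = O_p(n^{-1/2});
\end{equation*}
then use the well-separated equilibrium condition in Assumption \ref{as:regulare} and the Jacobian singular-value bound in Assumption \ref{as:regularo} to convert the uniform root-finding gap into a uniform $O_p(n^{-1/2})$ bound on $\|\hat P_\pi - p^*_\pi\|$ via a standard argmin/implicit-function argument applied to the population moment $z_\pi(\cdot)$. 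The identical strategy, applied to $\tilde P_\pi$ with the inverse-propensity weights of the first stage, upgrades the first-step rate $\rho_{\theta,n}$ in Assumption \ref{as:nuisance} to a uniform statement, which is precisely what Assumption \ref{as:nuisance} already asserts.

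Once (a) and (b) are in hand, a Taylor expansion of $\hat V_n(\pi)$ around $p^*_\pi$ using the uniform Jacobian bound cancels the second-order price-error term against the debiasing perturbation $\hat s_\pi - s^*$, leaving the linearization $\hat V_n(\pi) = n^{-1}\sum_i \Gamma^{*q}_{\pi i}(p^*_\pi) + o_p(n^{-1/2})$ uniformly in $\pi$. A final VC maximal inequality applied to the class $\{\Gamma^{*q}_{\pi i}(p^*_\pi) : \pi \in \Pi\}$ yields $\sup_{\pi \in \Pi} |\hat V_n(\pi) - V^*(\pi)| = O_p(n^{-1/2})$, which combined with the reduction in the first paragraph completes both parts of the theorem.
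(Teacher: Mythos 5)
Your proposal follows essentially the same route as the paper: the Kitagawa--Tetenov reduction of both regrets to $\sup_{\pi\in\Pi}|\hat V_n(\pi)-V^*(\pi)|$ and $\sup_{\pi\in\Pi}|\bar V_n(\pi)-V^*(\pi)|$, uniform empirical-process bounds over $\Pi$ crossed with the entropy conditions of Assumptions \ref{as:regularo} and \ref{as:anuis}, and the key step of uniform $O_p(n^{-1/2})$ convergence of $\hat P_\pi$ (and $P_\pi$) via the well-separation and Jacobian conditions. The only difference is that you aim for a full uniform asymptotic linearization with $o_p(n^{-1/2})$ remainder, whereas the paper's Lemma \ref{lem:rmain2} only needs each term in the decomposition to be $O_p(n^{-1/2})$, which is slightly less delicate but delivers the same regret rate.
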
 

Characterizing the maximizer of the finite-market value of a treatment rule directly is challenging, since it is a quantity that depends on possibly non-unique market-clearing cutoffs and non-smooth allocation functions. By linking both the finite-market value and estimated market-value to the continuum market value instead, where the equilibrium is unique and aggregate responses are smooth, then we manage to obtain asymptotic regret results for the finite-sized market. The constants in the asymptotic regret bound depend on the VC class dimension $v$, the number of items $J$, as well as the parameters $h_d$ and $h_y$ in the covering number bounds for the allocation and outcome functions. The dependence on $n$ implies that the estimated maximizer converges quickly to the oracle maximizer of either the finite or continuum market value. This rate matches the lower bound for policy learning with SUTVA and the upper bounds for regret with network spillovers in \citet{vivianorestud}.    This strong result is possible in the centralized market setting because all interactions occur through a finite-length vector of market-clearing cutoffs. A key step in the proof is showing $1/\sqrt n$- uniform convergence of the estimated market-clearing cutoffs to the continuum market-clearing cutoffs under weak assumptions on the convergence of nuisance functions.  

\section{Simulation}  
\label{sec:simulation} 


In this section, we illustrate the robustness properties of doubly-robust estimators compared to structural modeling approaches using a simulation of a uniform price auction where bidders' values are generated from different distributions. 

\label{sec:auction} 

We simulate data generated from a uniform price auction and compare the LDML estimator of $\bar \tau_{\text{GTE}}$ to alternative approaches. In the simulation, treatment affects bids to the auction. There is a $20$-dimensional set of covariates that is correlated with the bids and affects the probability of selecting the treatment. The auction has a fractional capacity of $0.5$ so the top 50\% of bidders in the auction receive a single unit of the good. The treatment affects outcomes through a shift in the distribution of bids submitted to the auction and through a shift in the equilibrium market-clearing price. The outcome of interest is the observed average surplus for bidders in the auction, assuming that the bids submitted to the auction are equal to the values for the bidders. The data-generating process is explicitly described in Appendix \ref{app:sim}. For each bidder, we observe the bid $B_i$, the treatment $W_i$, and pre-treatment covariates $X_i$. We compute RMSE and bias for a variety of estimators when the target estimand is $\bar \tau_{\text{GTE}}$ by repeatedly sampling a finite-sized market of size $n=100$, $n=1000$ and $n=10,000$. These estimators take as input $(Y_i, B_i, W_i, X_i)_{i=1}^n$. 

The estimators are as follows: 
\begin{enumerate} 
\item A doubly-robust estimator of the Average Treatment Effect using generalized random forests (\texttt{DR-ATE}). It adjusts for selection-on-observables, but not equilibrium effects. 
\item A structural model based estimator of $\bar \tau_{\text{GTE}}$ (\texttt{SM-GTE}). The estimator assumes that $B_i(w) \sim \mbox{LogNormal}(\mu_w(X_i), \sigma)$. For $w \in \{0, 1\}$, $\hat \mu_w(X_i)$ and $\hat \sigma$ are estimated using a linear regression of $\log(B_i)$ on $X_i$ for individuals with $W_i = w$.  Then, $\hat \tau^{SM}_{\text{GTE}}$ is computed by simulation from the model.
\item  Bias-corrected structural model estimator (\texttt{SMDR-GTE}). We solve an empirical version of \eqref{eqn:drmoment} using the DML algorithm of \citet{chernozhukov2018double}, where propensity scores are estimated using a random forest and conditional means are computed as in \texttt{SM-GTE}. 
\item A doubly-robust estimator following the localization approach in Definition \ref{def:ldml} (\texttt{LDML-GTE}). Propensity scores and conditional mean functions are estimated using random forests. 
\end{enumerate}

\begin{table}[ht]
\centering
\begin{tabular}{  l | l l  | ll | ll } 
& n=100 & & n=1,000 & & n=10,000 \\ 
\hline
& \textbf{Bias }& \textbf{RMSE} & \textbf{Bias} & \textbf{RMSE} &   \textbf{Bias} & \textbf{RMSE}  \\  
\hline 
\texttt{\footnotesize DR-ATE }& 0.29 & 0.30  & 0.26 & 0.26 & 0.242 & 0.243 \\ 
\texttt{\footnotesize SM-GTE }& -0.17 & 0.39 & 0.0019 & 0.021 & 0.000 & \textbf{0.005} \\ 
\texttt{\footnotesize SMDR-GTE} & -0.17 & 0.39 &  -0.0016 & 0.031 &  -0.0003 & 0.008 \\ 
\texttt{\footnotesize LDML-GTE} & 0.034 & 0.09 & 0.0017 & 0.028 &  -0.0008 & \textbf{0.008}  \\ 
\end{tabular} 
\caption{Bids follow a lognormal distribution. Metrics averaged over 100 simulations of each sample size from the data-generating process. } 
\label{tab:sim1} 
\end{table}

With only 100 datapoints, the noise in the estimation for methods that rely on estimating the distribution of bids directly is high.  As the number of datapoints increases, the model-based estimator, which makes the correct parametric assumption on the bid distribution, converges the fastest. The bias-corrected structural model also performs well, although has increased variance since the bias correction adds noise when the model is correct. The LDML estimator does not make any parametric assumptions and instead uses flexible machine learning estimators for nuisance parameter estimation.  It has an asymptotic distribution that does not depend on the estimation errors of the nuisance functions. The ATE estimator, which ignores the equilibrium effect of the treatment, has a large bias even as the sample size increases.

In the second set of simulations, we generate bids from a truncated normal distribution rather than a lognormal distribution. Otherwise, the data-generating process is the same. We compute the set of estimators, where we continue to use a lognormal based approach for the structural modeling estimators.

\begin{table}[ht]
\centering
\begin{tabular}{  l | l l  | ll | ll } 
& n=100 & & n=1,000 & & n=10,000 \\ 
\hline
& \textbf{Bias }& \textbf{RMSE} & \textbf{Bias} & \textbf{RMSE} &   \textbf{Bias} & \textbf{RMSE}  \\  
\hline 
\texttt{\footnotesize DR-ATE } & 0.10 & 0.08 & 0.094 & 0.096  & 0.093 & 0.093 \\ 
\texttt{\footnotesize SM-GTE } &  0.14 & 0.29 & 0.068 & 0.10 & 0.078 & 0.080 \\ 
\texttt{\footnotesize SMDR-GTE}  & 0.04 & 0.22 & 0.0004 & 0.018 & 0.0000 & \textbf{0.0049} \\
\texttt{\footnotesize LDML-GTE}& -0.01 & 0.05 & 0.0004 & 0.015 & 0.0004 & \textbf{0.0047}  
\end{tabular} 
\caption{Truncated Normal Distribution for Bids. Metrics averaged over 100 simulations of each sample size from the data-generating process.} 
\label{tab:sim2}
\end{table}

This time, the structural modeling approach performs poorly. The parametric assumption is incorrect, and as a result the outcome model is asymptotically biased. The SMDR estimator uses the propensity score to successfully remove the bias from the structural model.  The LDML estimator does not make any parametric assumptions on the bid distribution and continues to perform very well here.

If a parametric model is correctly specified, then a maximum-likelihood estimator of that model is asymptotically linear and efficient. In addition, once the primitives of the model are specified and estimated, a variety of counterfactuals can often be evaluated, including those that are more complex than the estimand considered in this paper. The downside of this approach is if the model is not correctly specified, then the estimator of $\tau^*_{\text{GTE}}$ will be asymptotically biased. Unfortunately, it can be challenging to specify a parametric model that captures the complexity and heterogeneity of individual choice behavior, especially in settings where possible submissions to the mechanism are high-dimensional. The localized doubly-robust estimator performs well, without requiring correct specification of a parametric model of submissions to the mechanism.

\section{Impact Evaluation in the Chilean School Market}
\label{sec:empirical}

In 2015, the Chilean government passed the Inclusion Law, which eliminated school-specific admissions criteria in favor of a centralized admission system based on deferred acceptance \citep{correa2019school}. It was intended to reduce socioeconomic segregation in the Chilean school system by removing discriminatory admissions criteria and reserving some seats at good schools for low-income families. Despite these changes, low-income families attend good-quality schools at a much lower rate than high-income families.      

One reason for this remaining gap is that some families may lack information about school quality or the returns to schooling. \citet{allende2019approximating} explore this hypothesis using an RCT that randomized information on nearby school quality.  They found that the intervention increases applications of low-income families to high-quality schools. Using a parametric model, they find that the effect on allocations in equilibrium is substantially less, due to capacity constraints.                           

We estimate and perform inference on the effect of information on income inequality by constructing a similar observational dataset on Chilean students. We also find that information affects choices positively, and that capacity constraints reduce the effect of the intervention on allocations significantly. We combine two datasets from the Ministry of Education for 2018 - 2020. For the admissions system, we use publicly available data on the centralized admissions process (SAE) for 2020 for those applying to the 9th grade in Chile. This data includes the rankings each student submits to the algorithm, their priority, location, and actual assignment. We link this to a demographic survey collected as part of the SIMCE\footnote{Sistema de Medición de la Calidad de la Educación} standardized test system in Chile. For school quality for the 9th grade admissions process, we use the average student math and reading score for the school in 2018 among 10th graders. Students apply to a subset of approximately 2,500 schools nationwide. 

The treatment we analyze is a proxy for the receipt of information on government school quality. $W_i = 1$ if a parent responds ``Yes" to the following question: 
\\
{\em 
Do you know the following information about your child's school? Performance category of this school}.\footnote{The survey language (in Spanish) is: ¿Conoce usted la siguiente información del colegio de su hijo(a)? Categoría de desempeño de este colegio. It is the third question in the thirtieth section of the parent survey in the SIMCE dataset.} Of the sample of 114,749 applicants to 9th grade, 53\% have $W_i = 1$. The observed pre-treatment covariates are location, household size, mother and father education level, whether or not the mother and father are indigenous and the income of the family. Missing covariates are imputed using a k-nearest neighbors approach. Table \ref{tab:summary_stats} in Appendix \ref{app:empirical} includes the mean and standard deviation for each of the variables.



\subsection{Treatment Effect Estimates}

We first check that the treatment impacts the rankings that low-income families submit to the allocation mechanism before we examine the effect on allocations. Submitted rankings are not subject to spillover effects through the allocation mechanism, since deferred acceptance is strategy-proof. So, we use \texttt{DR-ATE} to estimate the average treatment effect on two outcomes for low-income families in Table \ref{tab:dte}. The first outcome is an indicator that is 1 if the family ranks a top 50\% school first, and the second outcome is the length of the application list that a family submits. Note that the length of the submitted rankings is unrestricted in the Chilean mechanism. The estimated treatment effect on ranking a high-quality school is 2.3\%.\footnote{In the market, 36\% of low income families with $W_i = 0$ rank a top-50\% school first.} The effect on list length is positive, but small. Thus, there is evidence that the information intervention encourages low-income families to apply to better-quality schools. 

\begin{table}[htbp]
  \centering
  \begin{tabular}{lcc}
    \toprule
    & \textbf{Top 50\% School Ranked First} & \textbf{Length of Application List} \\
    \midrule
\texttt{DR-ATE}  & 2.3\%  & 0.03 \\
    & (0.40) & (0.01) \\ 
    \bottomrule
  \end{tabular}
    \caption{\texttt{DR-ATE} estimates of the effect of information on applications of low-income families to 9th grade. }
  \label{tab:dte}
\end{table}

Because of capacity constraints, not all families that rank a high-quality school first are admitted to that school. Estimating treatment effects on allocations is more challenging  due to spillovers that occur through the allocation mechanism. Table \ref{tab:results} shows an estimate of treatment effects, when the outcome is whether a low income family is accepted to an above-average school in Chile. We see that the \texttt{DR-ATE} estimator, which corrects for selection, but not equilibrium effects, estimates a 1.3 percentage point increase in the allocation of low-income families to good quality schools. The \texttt{LDML-GTE} estimate is 0.5 percentage points, which is much lower. Figure \ref{fig:breakdown} provides a breakdown of the bias of the \texttt{DR-ATE} estimator. At the observed equilibrium, the probability of admission to a good-quality school is higher than at the 100\% treated equilibrium  and lower than that of the 0\% treated equilibrium. Estimating $\bar \tau_{\text{GTE}}$ accurately requires estimating the access of treated families at the all-treated equilibrium, and control families at the all-control equilibrium. 

We briefly discuss a possible source of bias in the \texttt{LDML-GTE} estimate.  There are two possible sources of spillovers from an information treatment; the first is through the mechanism due to capacity constraints, and the second is network-related spillovers. The estimates in Table \ref{tab:results} only account for the first type of spillover. Even if a family does not report receiving school quality information, they may make choices that are correlated with their treated neighbors' choices.  If the network spillovers are positive, so that increasing the number of treated neighbors always increases the probability that a family raises the rank of a high-quality school, then the effect estimate in Table \ref{tab:results} is a lower bound on the Global Treatment Effect under both network and congestion effects. If network spillovers may be positive or negative, then further work is needed to account for both types of spillovers. 


\begin{table}
\centering
\begin{tabular}{ | r | r | } 
\hline
\textbf{Estimator}  & \textbf{Treatment Effect Estimate} \textbf{(s.e.)}  \\ 
\hline 
\texttt{LDML-GTE}  &  0.54\% (0.36)   \\ 
\texttt{DR-ATE}  & 1.30\% (0.32) \\ 
ATE-Bias & 0.76\% (0.38)  \\ 
\hline
\end{tabular} 
\caption{ Estimates of the treatment effect of informing parents about school quality on allocation of low-income families to good quality schools. \label{tab:results} } 
\end{table}

\begin{figure} [ht]
\centering
         \includegraphics[width=0.7\textwidth]{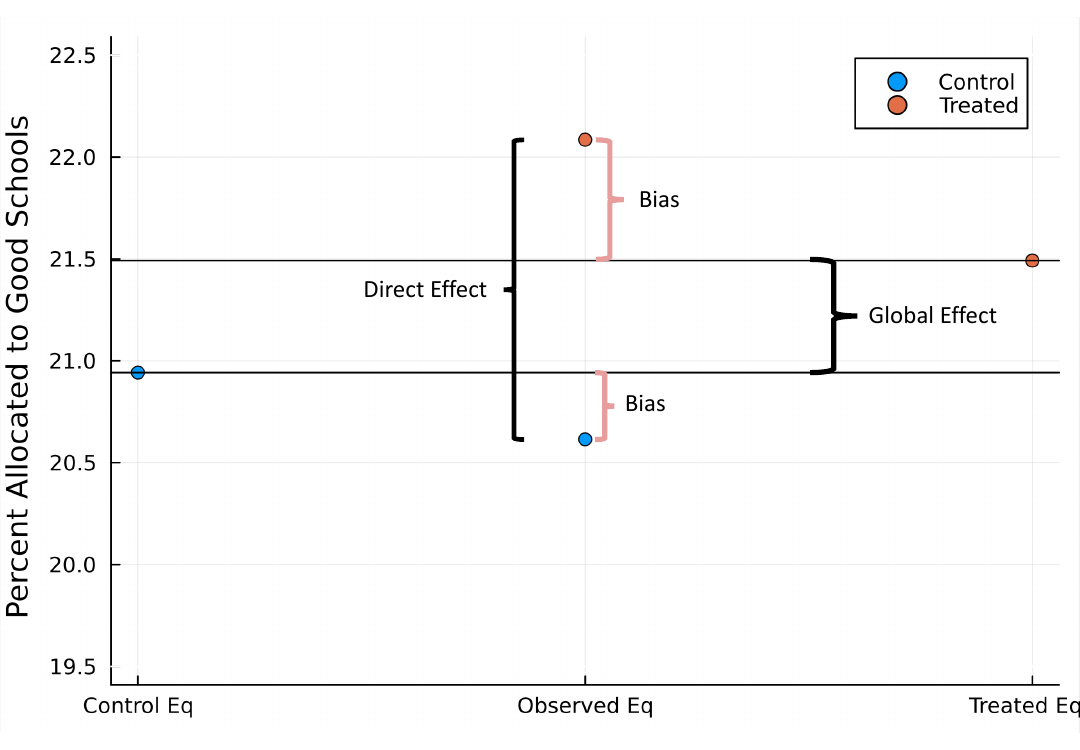}
         \caption{The \texttt{DR-ATE} estimator of the direct effect over-estimates the access of treated families to good-quality schools and under-estimates the access of control families.  } 
         \label{fig:breakdown}
\end{figure}

\begin{figure}[!ht]
 \centering
\includegraphics[width=0.7\textwidth]{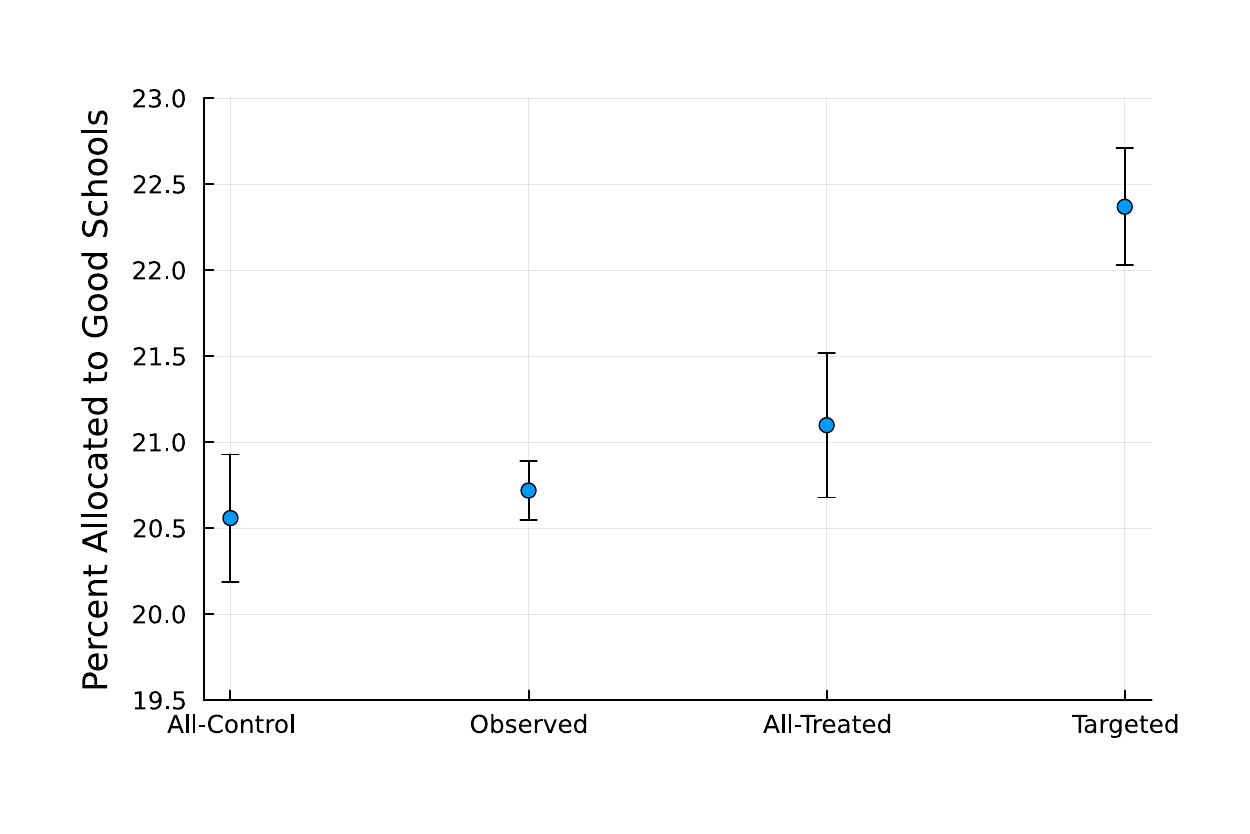}
\caption{The estimated percentage of low-income families assigned to a good-quality school for different treatment rules. Error bars are standard errors } 
\label{fig:target}
\end{figure} 

By using a potential outcomes framework to analyze counterfactuals in this setting, heterogeneity in the effect of the treatment on bids is not restricted.  There may be heterogeneity in whether or not individuals respond positively to the information, as well as heterogeneity in how these changes affect congestion in the centralized mechanism. As discussed in Section \ref{sec:targeting} we can choose and evaluate treatment rules that treat only a subset of the sample defined by pre-treatment covariates. 

Figure \ref{fig:target} estimates the outcomes for a variety of treatment rules. All-Control assigns nobody to treatment and All-Treated assigns everybody to treatment. The Observed rule is the treatment pattern observed in the data. The targeting rule approximates a version of the globally optimal rule in Section \ref{sec:gtarget} through plug-in estimation and the value of the rule is estimated on a hold-out sample of the data.  

The gain of the targeting rule over a rule that treats everyone is large, at 1.27\% with an estimated standard error computed using the bootstrap of 0.46\%. It also significantly outperforms a simple rule that assigns treatment only to low income families. This indicates that there is substantial heterogeneity in treatment response in the data.



It is not clear that in practice it would be desirable or fair to target the basic information on school quality considered in this specific example. However, the presence of significant heterogeneity in treatment response suggests that targeted policies may be of interest in school choice settings. 

\section{Discussion} 


Without some structure, estimating causal effects with general spillovers is infeasible. Under a fully specified and point-identified parametric model of individuals interacting in a market, any counterfactual can be simulated, but the model must be specified correctly. In this paper, we instead use the structure implied by a centralized allocation mechanism, but remain non-parametric about individual choices, which can be difficult to specify correctly. Using a continuum market approximation of the finite market, we show that global counterfactuals in finite markets are well-approximated by a set of moment conditions. This leads to a computationally simple and doubly-robust estimator for the value of counterfactual policies. With data from the school market in Chile, we show that correcting for congestion effects substantially reduces the estimated effect of an information intervention on inequality in school allocations. 

There are a variety of counterfactuals of interest that go beyond the estimands considered in this paper. These include settings with supply side responses and mechanisms with strategic behavior, where individuals make choices conditional on their expectations of the market equilibrium. For these problems, exploring whether it is possible to derive robust estimators that combine general causal models with economic structure imposed by design will be an interesting avenue for future work. 

\subsection*{Data Availability Statement} 

The data that support the findings of this study are available from the Chilean Ministry of Education. Restrictions apply to the availability of these data, which were used with permission for this study. 

\newpage
\singlespacing
\footnotesize
\bibliography{../jmp.bib} 
\newpage 
\appendix

\onehalfspacing

\normalsize

\section{Proofs of Main Results} 
\subsection{Notation} 
\label{sec:notation}
We first introduce notation which will be used throughout the proofs. The norm $|| \cdot ||$ is the $l_2$-norm. 

Similar to how we defined $\Gamma_{d, i}(p; \hat \eta)$ in Assumption \ref{as:nuisance}, we can define doubly-robust scores on outcomes with estimated nuisances: 
\begin{equation*} 
\begin{split} 
&  \Gamma^y_{1i}(p; \hat \eta) = \hat \mu^{y, k(i)}_1 (X_i) + \frac{W_i}{\hat e^{k(i)}(X_i)} (y(B_i(w), X_i, p) - \hat \mu^{y, k(i)}_1 (X_i)) \\ & \Gamma^y_{0i}(p; \hat \eta) = \hat \mu^{y, k(i)}_0 (X_i) + \frac{1- W_i}{1 - \hat e^{k(i)}(X_i)} (y(B_i(w), X_i, p) - \hat \mu^{y, k(i)}_0 (X_i))
\end{split} 
\end{equation*}

Let $\Gamma^y_{n, \pi}(p; \eta) = \frac{1}{n} \sum \limits_{i=1}^n  \Big (\pi(X_i) \Gamma^y_{1i}(p  ; \eta) + ( 1- \pi(X_i)) \Gamma^y_{0i}(p; \eta) \Big)$ and $y_{\pi}(p; \eta) = \mathbb E_{T}[\Gamma^y_{n, \pi}(p; \eta)]$. Similarly, $\Gamma^z_{n, \pi}(p; \eta) = \frac{1}{n} \sum \limits_{i=1}^n  \pi(X_i) \Gamma^d_{1i}(p  ; \eta) + ( 1- \pi(X_i)) \Gamma^d_{0i}(p; \eta)  - s^*$ and $z_{\pi}(p; \eta) = \mathbb E_{T}[\Gamma^z_{n, \pi}(p; \eta)]$. Note that $\Gamma^{*y}_{wi}(p) = \Gamma^y_{wi}(p; \eta^*)$ and $\Gamma^{*d}_{wi}(p) = \Gamma^d_{wi}(p; \eta^*)$ for $w \in \{0, 1\}$, where $\eta^*$ collects the true propensity score and conditional mean functions. Similarly, we have $y_{\pi}(p; \eta^*) = y_{\pi}(p)$ and $z_{\pi}(p; \eta^*) = z_{\pi}(p)$. 
For empirical averages of actual outcomes and allocations rather than doubly-robust scores, we also define: 
\begin{equation*} 
\begin{split}
& Y_{n, \pi}(p) = \frac{1}{n} \sum \limits_{i=1}^n \Big ( W_i y(B_i(1), X_i, p) + ( 1- W_i) y(B_i(0), X_i, p) \Big), \\ 
& Z_{n, \pi}(p) = \frac{1}{n} \sum \limits_{i=1}^n \Big (W_i d(B_i(1),X_i,  p) + ( 1- W_i) d(B_i(0), X_i, p) \Big ) - s^*. 
\end{split}
\end{equation*}

\subsection{Proof of Theorem \ref{thm:moment}} 
\label{ap:moment} 
The first part of the Theorem holds by Lemma \ref{lem:fmlem}. For the asymptotically linear expansion,  we next need to prove that for any $\pi \in \Pi$, 
\begin{equation} \label{eq:sample} 
 Y_{n, \pi}(P_{\pi})  =  Y_{n, \pi}(p^*_{\pi}) - \nu_{\pi}   Z_{n, \pi}(p^*_{\pi})   + o_p(n^{-1/2}). \\ 
\end{equation} Since $\bar \tau_{\text{GTE}} = Y_{n, 1}(P_{1}) -  Y_{n, 0}(P_{0})$, where the subscript 1  and 0 refers to a treatment rule where everybody and nobody is treated, respectively, then the following argument completes the proof: 
\begin{equation*} 
\begin{split} 
\bar \tau_{\text{GTE}} - \tau^*_{\text{GTE}} &   = Y_{n, 1}(p^*_{1}) - \nu_{1}   Z_{n, 1}(p^*_{1})  +  \nu_{0}  Z_{n, 0}(p^*_{0})  -  Y_{n, 0}(p^*_{0}) - \tau^*_{\text{GTE}} + o_p(n^{-1/2}),   \\ 
& = \frac{1}{n} \sum \limits_{i=1}^n  q_1(B_i(1), X_i, p^*_1)   - q_0(B_i(0), X_i, p^*_0)  - \tau^*_{\text{GTE}} + o_p(n^{-1/2}), 
\end{split} 
\end{equation*} 

Since  outcomes and net demand are bounded, then the variance of the term in the expansion is finite, and the CLT also applies to this expansion. Thus, to finish the proof, we show  \eqref{eq:sample}. 
\begin{equation*} 
\begin{split} 
Y_{n, \pi}(P_{\pi}) &  =  Y_{n, \pi}(p^*_{\pi}) + y_{\pi}(P_{\pi}) - y_{\pi}(p^*_{\pi}) + o_p(n^{-1/2}),  \\ 
& =   Y_{n, \pi}(p^*_{\pi}) - \nu_{\pi}  Z_{n, \pi}(p^*_{\pi})  + o_p(n^{-1/2}).
\end{split}  
\end{equation*} 
The first line is by Lemma \ref{lem:asympt}. The second line is by a combination of a first-order taylor expansion and  Lemma \ref{lem:pnorm}. As a last step for this proof, we prove Lemma \ref{lem:pnorm}.

\begin{lemma} \label{lem:pnorm} \textbf{Asymptotic Normality of Counterfactual Cutoffs} 
Under the Assumptions of Theorem \ref{thm:moment}, then the market-clearing cutoffs under treatment rule $\pi \in \Pi$, which we call $P_{\pi}$, are asymptotically linear: 
\[ \sqrt n (P_{\pi} - p^*_{\pi})  =  - (\nabla_p z_{\pi}(p^*_{\pi}))^{-1}  \frac{1}{\sqrt n} \sum \limits_{i=1}^n (W_i d(B_i(1), X_i, p^*_{\pi}) + ( 1- W_i) d(B_i(0), X_i, p^*_{\pi})  - s^*) \] 
\end{lemma}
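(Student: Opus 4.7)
The plan is to deduce asymptotic linearity of $P_\pi$ from the approximate market-clearing condition by combining a first-order Taylor expansion of the population mean-zero map $z_\pi$ with a standard stochastic equicontinuity argument on the corresponding empirical process. Let me define the empirical market-clearing map under policy $\pi$ by
\[
\hat z_{\pi,n}(p) = \frac{1}{n}\sum_{i=1}^n \bigl[W_i\, d(B_i(1),p) + (1-W_i)\, d(B_i(0),p)\bigr] - s^*,
\]
where $W_i \sim \mathrm{Bernoulli}(\pi(X_i))$ independently, conditionally on covariates. Since $\mathbb E[\hat z_{\pi,n}(p)] = z_\pi(p)$ and $z_\pi(p^*_\pi)=0$, Assumption~\ref{as:cutoff} gives $\|\hat z_{\pi,n}(P_\pi)\|_2 \leq a_n$ with $\sqrt n\, a_n = o(1)$, which is the identity I will pivot around.

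First I would establish consistency $P_\pi \xrightarrow{p} p^*_\pi$. This follows from the well-separatedness condition in Assumption~\ref{as:regulare} together with uniform convergence of $\hat z_{\pi,n}$ to $z_\pi$ on $\mathcal S$, which is ensured by the uniform covering number bound in Assumption~\ref{as:regularo}(1) applied to the class $\mathcal F_{d,j}$. Standard empirical process arguments then yield $\sup_{p\in\mathcal S}\|\hat z_{\pi,n}(p) - z_\pi(p)\| = O_p(n^{-1/2})$; combined with the well-separation of $p^*_\pi$, this already gives $\|P_\pi - p^*_\pi\| = O_p(n^{-1/2})$, the $\sqrt n$-consistency needed for the Taylor step.

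Next I would perform the Taylor expansion. Assumption~\ref{as:regularo}(3) provides $C^2$ regularity of $z_\pi$ with a uniform bound on second derivatives, so
\[
z_\pi(P_\pi) = \nabla_p z_\pi(p^*_\pi)\,(P_\pi - p^*_\pi) + O_p\bigl(\|P_\pi - p^*_\pi\|^2\bigr) = \nabla_p z_\pi(p^*_\pi)\,(P_\pi - p^*_\pi) + O_p(n^{-1}).
\]
The key step is a stochastic equicontinuity claim: $\hat z_{\pi,n}(P_\pi) - \hat z_{\pi,n}(p^*_\pi) - \bigl(z_\pi(P_\pi) - z_\pi(p^*_\pi)\bigr) = o_p(n^{-1/2})$. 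This will be the main technical obstacle, but it is standard given what is available. Specifically, Assumption~\ref{as:regularo}(1) bounds the uniform entropy of $\mathcal F_{d,j}$, and Assumption~\ref{as:regularo}(2) provides the $L^2$-modulus $\mathbb E[(d_j(B_i(w),p) - d_j(B_i(w),p'))^2] \leq L\|p-p'\|$; together with $\|P_\pi - p^*_\pi\| = O_p(n^{-1/2})$, a localized maximal inequality (e.g.\ the bracketing/uniform-entropy empirical process inequalities applied to the shrinking neighborhood of $p^*_\pi$) delivers the required $o_p(n^{-1/2})$ bound. The same entropy argument controls the contribution from the randomness in $W_i$, since the Bernoulli-weighted increments also live in a VC-type class.

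Finally I would combine the three pieces. From $\hat z_{\pi,n}(P_\pi) = o_p(n^{-1/2})$ and the equicontinuity bound,
\[
\hat z_{\pi,n}(p^*_\pi) + \nabla_p z_\pi(p^*_\pi)(P_\pi - p^*_\pi) = o_p(n^{-1/2}).
\]
Assumption~\ref{as:regularo}(4) guarantees that $\nabla_p z_\pi(p^*_\pi)$ is invertible with bounded inverse, so I can solve for $P_\pi - p^*_\pi$, multiply by $\sqrt n$, and substitute the definition of $\hat z_{\pi,n}(p^*_\pi)$ to obtain
\[
\sqrt n\,(P_\pi - p^*_\pi) = -\bigl(\nabla_p z_\pi(p^*_\pi)\bigr)^{-1}\frac{1}{\sqrt n}\sum_{i=1}^n \bigl[W_i d(B_i(1),p^*_\pi) + (1-W_i) d(B_i(0),p^*_\pi) - s^*\bigr] + o_p(1),
\]
which is precisely the claimed expansion. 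The hardest part is the equicontinuity step, but it reduces to a textbook maximal inequality once the Donsker-type entropy conditions from Assumption~\ref{as:regularo} are in place.
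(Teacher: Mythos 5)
Your proposal is correct and follows essentially the same route as the paper: the paper first establishes $\|P_\pi - p^*_\pi\| = O_p(n^{-1/2})$ via well-separation and a uniform empirical-process bound (its Lemma~\ref{lem:concbarp}), then invokes Theorem~3.3.1 of van der Vaart and Wellner, which packages exactly the Taylor expansion, stochastic-equicontinuity step (the paper's Lemma~\ref{lem:asympt}, via the same Donsker-class-plus-$L^2$-modulus argument you describe), and Jacobian inversion that you spell out by hand. The only cosmetic difference is that you unpack the Z-estimator master theorem rather than citing it.
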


\begin{proof} 
First, by Lemma \ref{lem:concbarp}, we have that $P_{\pi} = p^*_{\pi} + O_p(n^{-1/2})$. To strengthen this to an asymptotic linearity result, we use Theorem 3.3.1 of \citet{vaart1997weak}. 
By Assumption \ref{as:cutoff}, we have the required market-clearing condition, $ Z_{n, \pi}(P_{\pi}) = o_p(n^{-1/2})$. By Lemma \ref{lem:asympt}, we have that $ Z_{n, \pi}(P_{\pi}) - z_{\pi}(P_{\pi}) - Z_{n, \pi}(p^*_{\pi}) + z_{\pi}(p^*_{\pi}) = o_p(n^{-1/2})$. By Assumption \ref{as:regularo}, $\nabla_p z_{\pi}(p)$ is twice continuously differentiable in $p$ and $\nabla_p z_{\pi}(p)$ is positive definite at $p^*_{\pi}$. Since allocations are bounded,  $\mathbb E[(\pi(X_i) d(B_i(1),  X_i, p) + ( 1- \pi(X_i)) d(B_i(0), X_i, p) - s^*)^2]$ is bounded. By Theorem 3.3.1 of \citet{vaart1997weak}, verifying these conditions is enough to prove the theorem: 

\[ ( P_{\pi} - p^*_{\pi}) =  - [ \nabla_p z_{\pi}(p^*_{\pi})]^{-1}  Z_{n, \pi}(p^*_{\pi}) + o_p(n^{-1/2}).    \] 

\end{proof}

 \subsection{Proof of Theorem \ref{thm:norm} and Corollary \ref{cor:norm} } 

The proof of Theorem \ref{thm:norm} follows some of the structure and ideas in \citet{kallus2019localized}. For Theorem \ref{thm:norm}, we start with the following expansion: 
\begin{equation*} 
\begin{split} 
\hat V_n(\pi) &  =  \Gamma^y_{n, \pi}(\hat P_{\pi}; \hat \eta_{\pi}) \\ 
&  = \Gamma^y_{n, \pi}(p^*_{\pi}; \eta^*_{\pi}) +  y_{\pi}(\hat P_{\pi}; \hat \eta_{\pi})  - y_{\pi}(p^*_{\pi}, \eta^*_{\pi})  + R_{1n} 
\\ & = \Gamma^y_{n, \pi} (p^*_{\pi}; \eta^*_{\pi}) + y_{\pi}(\hat P_{\pi}; \eta^*_{\pi}) - y_{\pi}(p^*_{\pi}; \eta^*_{\pi}) + R_{1n} + R_{2n} 
\\ & = \Gamma^y_{n, \pi}(p^*_{\pi}; \eta^*_{\pi})  - \nu^*_{\pi} \Gamma^z_{n, \pi}(p^*_{\pi}; \eta^*_{\pi}) + R_{1n} + R_{2n} + R_{3n} 
\end{split} 
\end{equation*} 
To finish the proof, we need to show that each of the remainder terms are $o_p(n^{-1/2})$. 
\begin{equation*} 
\begin{split} 
R_{1n} =  \Gamma^y_{n, \pi}(\hat P_{\pi}; \hat \eta_{\pi})  - \Gamma^y_{n, \pi}(p^*_{\pi}; \eta^*_{\pi}) - y_{\pi}(\hat P_{\pi}; \hat \eta_{\pi}) +  y_{\pi}(p^*_{\pi}, \eta^*_{\pi}) 
\end{split} 
\end{equation*} 
By Lemma \ref{lem:asympe}, $R_{1n} = o_p(n^{-1/2}).$ $R_{2n} = y_{\pi}(\hat P_{\pi}; \hat \eta_{\pi}) - y_{\pi}(\hat P_{\pi}; \eta^*_{\pi})$. By Lemma \ref{lem:unuis}, Assumption \ref{as:nuisance} and the rate for $\hat P_{\pi}$ in Lemma \ref{lem:Pnorm},  $R_{2n} = o_p(n^{-1/2})$. 
For $R_{3n}$, by a Taylor expansion, we have 
\begin{equation*} 
\begin{split} 
y_{\pi}(\hat P_{\pi}; \eta^*_{\pi}) - y_{\pi}(p^*_{\pi}; \eta^*_{\pi})  & =  \nabla_p^{\top}[ y_{\pi}(p^*_{\pi}; \eta^*_{\pi}) ] (\hat P_{\pi} - p^*_{\pi}) + O(|| \hat P_{\pi} - p^*_{\pi} ||^2) \\ 
& \stackrel{(1)}{=} - \nu^*_{\pi} \Gamma^z_{n, \pi}(p^*_{\pi}; \eta^*_{\pi}) + o_p(n^{-1/2}) + O(|| \hat P_{\pi} - p^*_{\pi} ||^2)    \\ 
& \stackrel{(2)}{=}  - \nu^*_{\pi} \Gamma^z_{n, \pi}(p^*_{\pi}; \eta^*_{\pi})  + o_p(n^{-1/2}).  
\end{split} 
\end{equation*}
(1) and (2) are both by  Lemma \ref{lem:Pnorm}. We have now shown that $ \hat V_n(\pi) = \Gamma^y_{n, \pi}(p^*_{\pi}; \eta^*_{\pi})  - \nu^*_{\pi} \Gamma^z_{n, \pi}(p^*_{\pi}; \eta^*_{\pi}) + o_p(n^{-1/2})$. 
We can now apply this expansion to $\hat \tau_{\text{GTE}} = \hat V_n(\bm 1_n) - \hat V_n(\bm 0_n)$. 
\[ \hat \tau_{\text{GTE}} = \frac{1}{n} \sum \limits_{i=1}^n  \Gamma^q_{1, i}(p^*_{1}; \eta^*_{1}) - \Gamma^q_{0, i}(p^*_{0}; \eta^*_{0})  + o_p(n^{-1/2}). \] 
Centering at $\tau^*_{\text{GTE}}$, we have an average of mean-zero and i.i.d. terms with finite variance: 
\[ \hat \tau_{\text{GTE}}  - \tau^*_{\text{GTE}} = \frac{1}{n} \sum \limits_{i=1}^n  \Gamma^q_{1, i}(p^*_{1}; \eta^*_{1})  - \mathbb E[\Gamma^q_{1, i}(p^*_{1}; \eta^*_{1}) ]  - ( \Gamma^q_{0, i}(p^*_{0}; \eta^*_{0})  - \mathbb E[ \Gamma^q_{0, i}(p^*_{0}; \eta^*_{0})])     +  o_p(n^{-1/2}). \] 

So, the CLT now applies: 
\[ \sqrt n (\hat \tau_{\text{GTE}}  - \tau^*_{\text{GTE}} )  \rightarrow_D N(0, \sigma^2), \] 

where $\sigma^2 = \mbox{Var}(\Gamma^q_{1, i}(p^*_{1}; \eta^*_{1})  - \Gamma^q_{0, i}(p^*_{0}; \eta^*_{0}))$. 

\begin{lemma} \label{lem:Pnorm} \textbf{Central Limit Theorem for $\hat P_{\pi}$: } 
Under the Assumptions of Theorem \ref{thm:norm}, for each in $\pi \in \Pi$, 
\[ \sqrt n ( \hat P_{\pi} - p^*_{\pi} )  = - [\nabla_p z_{\pi}(p^*_{\pi})]^{-1} \frac{1}{\sqrt n} \sum \limits_{i=1}^n \Big( \pi(X_i) \Gamma^d_{1i}(p^*_{\pi}; \eta^*_{\pi})  + ( 1 -\pi(X_i)) \Gamma^d_{0i}(p^*_{\pi}; \eta^*_{\pi}) \Big) + o_p(1). \] 

\end{lemma}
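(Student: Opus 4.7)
The plan is to establish asymptotic linearity of $\hat P_\pi$ by treating it as the approximate zero of the empirical moment $\Gamma^z_{n,\pi}(\cdot;\hat\eta)$, adapting the Z-estimator argument of Lemma \ref{lem:pnorm} with the additional complication that the scores are evaluated at estimated nuisances. Concretely, I would mirror the three ingredients used there: (i) consistency of $\hat P_\pi$ at a $\sqrt n$ rate, (ii) stochastic equicontinuity so that the drift from $p^*_\pi$ to $\hat P_\pi$ can be passed through the empirical moment, and (iii) a Taylor expansion of the population moment $z_\pi(\cdot;\eta^*)$ at $p^*_\pi$ with invertible Jacobian.

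First I would argue $\hat P_\pi = p^*_\pi + O_p(n^{-1/2})$. By Assumption \ref{as:nuisance}(4) we have $\|\Gamma^z_{n,\pi}(\hat P_\pi;\hat\eta)\| \le \rho_{g,n} = o(n^{-1/2})$. A uniform convergence result for the DR scores of demand (the analogue of Lemma \ref{lem:asympe} applied to $\Gamma^d$ rather than $\Gamma^y$, justified by the entropy bounds in Assumption \ref{as:regularo}(1) and Assumption \ref{as:anuis}) combined with the well-separation condition of Assumption \ref{as:regulare} yields $\hat P_\pi \to_p p^*_\pi$. The $\sqrt n$ rate then follows from Assumption \ref{as:regularo}(4), which ensures that small moment values translate into commensurately small deviations in price.

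Next I would decompose
\begin{align*}
\Gamma^z_{n,\pi}(\hat P_\pi;\hat\eta) - \Gamma^z_{n,\pi}(p^*_\pi;\eta^*)
&= \bigl[\Gamma^z_{n,\pi}(\hat P_\pi;\hat\eta) - z_\pi(\hat P_\pi;\hat\eta)\bigr] - \bigl[\Gamma^z_{n,\pi}(p^*_\pi;\eta^*) - z_\pi(p^*_\pi;\eta^*)\bigr] \\
&\quad + \bigl[z_\pi(\hat P_\pi;\hat\eta) - z_\pi(p^*_\pi;\hat\eta)\bigr] + \bigl[z_\pi(p^*_\pi;\hat\eta) - z_\pi(p^*_\pi;\eta^*)\bigr].
\end{align*}
The first two brackets are $o_p(n^{-1/2})$ by stochastic equicontinuity on the cross-fitted doubly-robust score class for demand. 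The last bracket is $o_p(n^{-1/2})$ by Neyman orthogonality of the DR construction: a second-order expansion in $(\hat e,\hat\mu^d)$ around $(e,\mu^d)$ gives a remainder bounded by $\rho_{e,n}\rho_{\mu,n} + \rho_{e,n}\rho_{\theta,n}$, which is $o(n^{-1/2})$ under Assumption \ref{as:nuisance}(3). Taylor-expanding the third bracket yields $\nabla_p z_\pi(p^*_\pi;\hat\eta)(\hat P_\pi - p^*_\pi) + O_p(\|\hat P_\pi - p^*_\pi\|^2)$; the quadratic term is $o_p(n^{-1/2})$ now that the rate is known, and $\nabla_p z_\pi(p^*_\pi;\hat\eta) \to_p \nabla_p z_\pi(p^*_\pi)$ by Assumption \ref{as:regularo}(3) and nuisance consistency. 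Inverting via Assumption \ref{as:regularo}(4) and using $\Gamma^z_{n,\pi}(\hat P_\pi;\hat\eta)=o_p(n^{-1/2})$ yields the claimed representation.

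The hard step will be controlling the nuisance drift $z_\pi(p^*_\pi;\hat\eta) - z_\pi(p^*_\pi;\eta^*)$ down to the $o_p(n^{-1/2})$ level; this is precisely where the doubly-robust construction pays off, because the first-order Gateaux derivatives of $z_\pi$ in each of $e$ and $\mu^d$ vanish at the truth, so only the cross-product rates enter. A secondary subtlety is that the first-step pilot $\tilde P_\pi$ used to fit $\hat\mu^d$ drifts from $p^*_\pi$; smoothness of $\mu_w^d(x,\cdot)$ from Assumption \ref{as:regularo}(3) converts this into the extra $\rho_{e,n}\rho_{\theta,n}$ factor, which is again controlled by the product-rate condition in Assumption \ref{as:nuisance}(3).
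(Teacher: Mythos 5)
Your proposal is correct and follows essentially the same route as the paper: establish the $O_p(n^{-1/2})$ rate for $\hat P_\pi$ (the paper's Lemma \ref{lem:concp}), apply stochastic equicontinuity of the cross-fitted demand scores (Lemma \ref{lem:asympe}), control the nuisance drift via the orthogonality/product-rate argument (Lemma \ref{lem:unuis}), and close with a Taylor expansion and Jacobian inversion. The only cosmetic difference is that you telescope the population moment through the intermediate point $z_\pi(p^*_\pi;\hat\eta)$ rather than the paper's $z_\pi(\hat P_\pi;\eta^*_\pi)$, which if anything makes the $(\hat e - e)\times(\mu^d(\cdot,\hat P_\pi)-\mu^d(\cdot,p^*_\pi))$ cross-term vanish identically instead of being bounded by $\rho_{e,n}\|\hat P_\pi - p^*_\pi\|$.
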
 

\begin{proof} 

By Lemma \ref{lem:concp}, $\hat P_{\pi} - p^*_{\pi} = O_p(n^{-1/2})$. We now strengthen this to a central limit theorem that applies for arbitrary $\pi \in \Pi$.  By Lemma \ref{lem:asympe}, 
\[ \Gamma^z_{n, \pi} (\hat P_{\pi}; \hat \eta_{\pi}) - \Gamma^z_{n, \pi}(p^*_{\pi}; \eta^*_{\pi}) = z_{\pi}(\hat P_{\pi}; \hat \eta_{\pi}) - z_{\pi}(p^*_{\pi}; \eta^*_{\pi}) + o_p(n^{-1/2}). \] 

By Lemma \ref{lem:unuis}, 

\[ \Gamma^z_{n, \pi} (\hat P_{\pi}; \hat \eta_{\pi})  - \Gamma^z_{n, \pi}(p^*_{\pi}; \eta^*_{\pi})  = z_{\pi}(\hat P_{\pi};  \eta^*_{\pi}) - z_{\pi}(p^*_{\pi}; \eta^*_{\pi}) + o_p(n^{-1/2}). \] 

Recalling that by Assumption \ref{as:nuisance}, $\Gamma^z_{n, \pi}(\hat P_{\pi}; \hat \eta_{\pi}) = o_p(n^{-1/2})$, we can now use a Taylor expansion: 
\begin{equation*} 
\begin{split}
-\Gamma^z_{n, \pi}(p^*_{\pi}; \eta^*_{\pi})  & = \nabla_p z_{\pi}(p^*_{\pi}) (\hat P_{\pi} - p^*_{\pi}) +  O( || \hat P_{\pi} - p^*_{\pi}||^2 ) + o_p(n^{-1/2}) \\ 
-\Gamma^z_{n, \pi}(p^*_{\pi}; \eta^*_{\pi})  & = \nabla_p z_{\pi}(p^*_{\pi}) (\hat P_{\pi} - p^*_{\pi})  + o_p(n^{-1/2}), \\ 
\hat P_{\pi} - p^*_{\pi} & =   - [\nabla_p z_{\pi}(p^*_{\pi})]^{-1} \Gamma^z_{n, \pi}(p^*_{\pi}; \eta^*_{\pi})
\end{split} 
\end{equation*} 
where the second line is by Lemma \ref{lem:concp}. This now completes the proof, since $\Gamma^z_{n, \pi}(p^*_{\pi}; \eta^*_{\pi}) = \frac{1}{n} \sum \limits_{i=1}^n  \pi(X_i) \Gamma^d_{1i}(p^*_{\pi}; \eta^*_{\pi}) + ( 1- \pi(X_i)) \Gamma^d_{0i}(p^*_{\pi}; \eta^*_{\pi})$. 

\end{proof}

\subsection{Proof of Proposition  \ref{cor:cons}} 

The two main expansions used here are: 
\begin{align*} 
&  \bar \tau_{\text{GTE}}  = \frac{1}{n}\sum \limits_{i=1}^n [q_1(B_i(1),X_i, p^*_1) - q_0(B_i(0), X_i, p^*_0)]  + o_p(n^{-0.5}),  \\
& \hat \tau_{\text{GTE}} =    \frac{1}{n} \sum \limits_{i=1}^n \left [ \Gamma^{*q}_{1i}(p^*_1)  - \Gamma^{*q}_{0i} (p^*_0) \right] + o_p(n^{-1/2}), 
\end{align*} 
Notice that $\mathbb E[  \Gamma^{*q}_{1i}(p^*_1)  -  \Gamma^{*q}_{0i} (p^*_0) | X_i, B_i(1), B_i(0)] =  q_1(B_i(1), X_i, p^*_1) - q_0(B_i(0),X_i, p^*_0).$ Combining these, we have that \[ \hat \tau_{\text{GTE}}  - \bar  \tau_{\text{GTE}}    = \frac{1}{n} \sum \limits_{i=1}^n \Gamma^{*q}_{1i}(p^*_1)  - \Gamma^{*q}_{0i} (p^*_0)  - [q_1(B_i(1), X_i, p^*_1) - q_0(B_i(0), X_i, p^*_0)]. \] 
Then, using the CLT, 
\begin{equation*}
\begin{split} 
&  \sqrt n \left ( \hat \tau_{\text{GTE}}  - \bar \tau_{\text{GTE}} \right)  \rightarrow_D N(0, \bar \sigma^2), \\ 
&  \sqrt n \left ( \hat \tau_{\text{GTE}} - \tau^*_{\text{GTE}} \right)  \rightarrow_D N(0, \sigma^2), \\ 
\end{split} 
\end{equation*} 
with $\bar \sigma^2 =  \mathbb E[ (\Gamma^{*q}_{1i}(p^*_1)  - \Gamma^{*q}_{0i} (p^*_0) - q_1(B_i(1), X_i, p^*_1) + q_0(B_i(0), X_i, p^*_0)  )^2 ] $ and $\sigma^2 = \mathbb E[  (\Gamma^{*q}_{1i}(p^*_1)  - \Gamma^{*q}_{0i} (p^*_0)  -  \tau^*_{\text{GTE}})^2]$. 
Working with $\sigma^2$, let $G_i = q_{1}(B_i(1), X_i, p^*_1)  - q_{0}(B_i(0), X_i, p^*_0) $. 
\begin{equation*} 
\begin{split} 
 \sigma^2  & = \mathbb E[  (\Gamma^{*q}_{1i}(p^*_1)  - \Gamma^{*q}_{0i} (p^*_0)  -  \tau^*_{\text{GTE}})^2] \\ 
& =  \mathbb E[  (\Gamma^{*q}_{1i}(p^*_1)  - \Gamma^{*q}_{0i} (p^*_0)  - G_i + G_i - \tau^*_{\text{GTE}}   )^2] \\
& = \mathbb E[ (\Gamma^{*q}_{1i}(p^*_1)  - \Gamma^{*q}_{0i} (p^*_0) - G_i )^2 ] + \mathbb E[  (G_i - \tau^*_{\text{GTE}})^2]
\\   & \qquad \qquad  + 2 \mathbb E[ (\Gamma^{*q}_{1i}(p^*_1)  - \Gamma^{*q}_{0i} (p^*_0) - G_i )  (G_i - \tau^*_{\text{GTE}}) ] \\ 
& \stackrel{(1)} {=} \bar \sigma^2 + \mathbb E[  (q_1(B_i(1), X_i, p^*_1) - q_0(B_i(0),  X_i, p^*_0) - \tau^*_{\text{GTE}})^2 ] \\ 
\end{split}
\end{equation*} 
(1) comes from the law of iterated expectations, with the details shown below: 
\begin{equation*} 
\begin{split} 
&  \mathbb E[ (\Gamma^{*q}_{1i}(p^*_1)  - \Gamma^{*q}_{0i} (p^*_0) - G_i  )  (G_i - \tau^*_{\text{GTE}}) ]  \\ 
& =  \mathbb E[ \mathbb E[(\Gamma^{*q}_{1i}(p^*_1)  - \Gamma^{*q}_{0i} (p^*_0) - G_i )  ( G_i - \tau^*_{\text{GTE}})  | X_i, B_i(1), B_i(0)]]  \\ 
& =  \mathbb E[ \mathbb E[(\Gamma^{*q}_{1i}(p^*_1)  - \Gamma^{*q}_{0i} (p^*_0) - G_i  ) | X_i, B_i(1), B_i(0)]   (G_i  - \tau^*_{\text{GTE}})]  \\ 
& = 0. 
\end{split} 
\end{equation*}
This implies that $\bar \sigma^2 = \sigma^2 -  \mathbb E[  (q_1(B_i(1), X_i, p^*_1) - q_0(B_i(0), X_i, p^*_0) - \tau^*_{\text{GTE}})^2 ] $. Since the second term in the right hand side is weakly positive, $\bar \sigma^2 \leq \sigma^2$, which proves the corollary.

\subsection{Theorem \ref{thm:global_target}} 

The first step is to show that the Fréchet derivative of $V^*(\pi)$ at $\pi$ is the linear functional defined by 
\[ \partial V^*(\pi ) h  = \int h(x) \mathbb E[q_{1}(B_i(1), X_i, p^*_{\pi}) - q_{0}(B_i(0), X_i, p^*_{\pi}) | X_i = x]  dF_x(x). \]
where $h : \mathcal X \to [0, 1]$ and $F_x(\cdot)$ is the distribution of $X_i \in \mathcal X$. First, we write $V^*(\pi)$ as an integral over $x$: 
\begin{equation*}
\begin{split} 
 V^*(\pi)  &= \mathbb E[\pi(X_i) y(B_i(1), X_i, p^*_{\pi})  + ( 1 -\pi(X_i)) y(B_i(0), X_i, p^*_{\pi}))], \\ 
 & = \int \Big ( y_1(x, \pi) \cdot \pi(x) + y_0( x, \pi) \cdot ( 1- \pi(x)) \Big)  dF(x),  
 \end{split} 
\end{equation*} 
where $y_w(x, \pi) = \mathbb E[ y(B_i(w), X_i, p^*_{\pi}) | X_i =x ]$. We next derive the Fréchet derivative of $V^*(\pi)$ using the product rule, where $\tau^y(x, \pi) = y_1(x, \pi) - y_0(x, \pi).$ 
\begin{equation*} 
\begin{split} 
\partial V^*(\pi) h  & =  \int  \tau^y(x, \pi) \cdot h(x) dF(x) +  \int \partial y_1(x, \pi) h  \cdot \pi(x)  + \partial y_0(x, \pi) h  \cdot ( 1- \pi(x)) \\ 
& \stackrel{(1)}{=} \int \tau^y(x, \pi) \cdot h(x)  dF(x) -  \nu^*_{\pi}  \cdot \int h(x)  \cdot \tau^d(x, \pi) dF(x)  \\ 
& = \int h(x) \cdot( \tau^y(x, \pi) - \nu^*_{\pi} \tau^d(x, \pi) )dF(x) 
\end{split}
\end{equation*} 

Step (1) is from the chain rule, since 
\[ \int \pi(x) \cdot \partial y_1(x, \pi) h  +  \partial y_0(x, \pi) h  \cdot ( 1- \pi(x)) dF(x)  = \nabla_p^{\top} y_{\pi}(p^*_{\pi}) \partial p^*(\pi) h  \]  and, by the implicit function theorem, 
\begin{align*} \partial p^*(\pi) h  &  = -  \nabla_p  z_{\pi}(p^*_{\pi})^{-1} \cdot \int h(x) ( d_1(x, \pi) - d_0(x, \pi) )dF(x), 
\end{align*} 

where we can swap derivatives and expectation since the derivatives of conditional expectations are bounded. Since all functions from $\mathcal X$ to $[0, 1]$ is a convex subset of a vector space, Theorem 2 of Chapter 7 of \citet{luenberger1969optimization} indicates that a necessary condition for a local maximum $\pi^*$ is that for all $\pi \in \Pi$, 
\[ \partial V^*(\pi) (\pi - \pi^*)  \leq 0 \] 

Let $\rho(\pi, x) = (\tau^y(x, \pi) - \nu^*_{\pi} \tau^d(x, \pi) )$. We can prove by contradiction that the optimal targeting policy must meet the conditions in the theorem. If there is some $\bar \pi$ that is optimal but does not meet the conditions in the theorem, then, one of the following must be true: 

\begin{enumerate} 
\item For $x$ in some set $Q$ that occur with non-zero probability, $\rho(\bar \pi, x) < 0$ but $\bar \pi(x) > 0$. But then choose $\pi$ such that $\pi(x) = \bar \pi(x)$ for $x \notin Q$ and $\pi(x) = 0$ for $x \in Q$. We have that  
\begin{equation*}
\begin{split}
 \partial V^*(\pi)(\pi - \pi^*) = \int_{x \in Q} \rho(\bar \pi, x) ( 0 - \bar \pi(x)) dF(x) > 0,
 \end{split} 
 \end{equation*} 
 which contradicts the optimality of $\bar \pi$.
 \item Or, for $x$ in some set $Q$ that occurs with non-zero probability, $\rho( \bar \pi, x) > 0$ but $\bar \pi(x) < 1$. Choose $\pi$ such that $\pi(x) = \bar \pi(x)$ for $x \notin Q$ and $\pi(x) = 1$ for $x \in Q$. We have that 
 \begin{equation*}
\begin{split}
 \partial V^*(\pi) (\pi - \pi^*) = \int_{x \in Q} \rho(\bar \pi, x) ( 1 - \bar \pi(x)) dF(x)> 0,
 \end{split} 
 \end{equation*} 
 which contradicts the optimality of $\bar \pi$.
\end{enumerate}

\subsection{Proof of Theorem \ref{thm:regret}}

First, we review some notation. Let $\pi \in \Pi$. We have estimated, oracle, finite-market and population versions of the value function. 
\begin{equation*} 
\begin{split} 
&   \hat V_n ( \pi) = \frac{1}{n} \sum \limits_{i=1}^n  \pi(X_i) \Gamma^y_{1i}(\hat P_{\pi} ; \hat \eta_{\pi}) + ( 1- \pi(X_i)) \Gamma^y_{0i}(\hat P_{\pi}; \hat \eta_{\pi})   \\ &   
V_n(\pi) = \frac{1}{n} \sum \limits_{i=1}^n \pi(X_i) \Gamma^y_{1i}(p^*_{\pi}; \eta^*_{\pi}) + ( 1- \pi(X_i)) \Gamma^y_{0i}(p^*_{\pi}; \eta^*_{\pi}) , \\ 
 & \bar V_n(\pi) =   \frac{1}{n} \sum \limits_{i=1}^n  \mathbb E_{\pi} \left [ y(B_i(W_i),  X_i, P_{\pi})  \right] \\  
& V^*(\pi) = y_{\pi}(p^*_{\pi}).  
\end{split} 
\end{equation*} 
Then, we follow the argument in \citet{kitagawa2018should}. For any $\tilde \pi \in \Pi $,
\begin{equation*} 
\begin{split} 
V^*(\tilde \pi) - V^*(\hat \pi)  & = V^*(\tilde \pi) -   \hat V_n( \hat \pi) + \hat V_n( \hat \pi) - V^*(\hat \pi) \\ 
& \leq V^*(\tilde \pi) - \hat V_n( \tilde \pi) + \hat V_n(\hat \pi) - V^*(\hat \pi)\\
& \leq 2 \sup \limits_{\pi \in \Pi} | V^*(\pi) - \hat V_n(\pi) | 
\end{split} 
\end{equation*} 
For the finite-market regret bound, we have a similar argument. For any $\tilde \pi \in \Pi $, we have that 
\begin{align}
 \bar V_n(\tilde \pi)  - \bar V_n(\hat \pi)  &= \bar V_n(\tilde \pi) - \hat V_n(\hat \pi) + \hat V_n(\hat \pi) - \bar V_n(\hat \pi)  \nonumber \\ 
 & \leq \bar V_n(\tilde \pi) - \hat V_n(\tilde \pi) + \hat V_n(\hat \pi) - \bar V_n(\hat \pi) \nonumber \\
 & \leq 2 \sup \limits_{\pi \in  \Pi  } | \hat V_n(\pi) - \bar V_n(\pi) |  \nonumber \\ 
 & \leq 2 \sup \limits_{\pi \in \Pi} | \hat V_n(\pi) - V^*(\pi) | + 2 \sup \limits_{\pi \in \Pi} | \bar V_n(\pi) - V^*(\pi) |  \label{eq:fmbound}  
\end{align} 
In addition, we have that 
\begin{equation} \label{eqn:a1} 
  \sup \limits_{\pi \in \Pi} | V^*(\pi) - \hat V_n(\pi) |  \leq  \sup \limits_{\pi \in \Pi} |  V^*(\pi) -   V_n(\pi)  | +     \sup \limits_{\pi \in \Pi} |  \hat V_n(\pi) - V_n(\pi)|   
  \end{equation}   Using notation from Section \ref{sec:notation}, for the first term in \eqref{eqn:a1}, 
\begin{equation*} 
\begin{split} 
\sup \limits_{\pi \in \Pi}  |V_n(\pi) - V^*(\pi) |  & =  \sup \limits_{\pi \in \Pi} | \Gamma^y_n(p^*_{\pi}; \eta^*_{\pi}) - y_{\pi}(p^*_{\pi}; \eta^*_{\pi}) |  \\ 
& \leq \sup \limits_{\pi \in \Pi, p \in \mathcal S} |   \Gamma^y_n(p; \eta^*_{\pi} )- y_{\pi}(p ; \eta^*_{\pi}) | \\ 
& = O_p(n^{-1/2})
\end{split} 
\end{equation*} 
where the conclusion that the term is $O_p(n^{-1/2})$ comes from Lemma \ref{lem:empt}. For the second term in \eqref{eqn:a1}, we use Lemma \ref{lem:rmain2}, so we can now conclude that 
\[   \sup \limits_{\pi \in \Pi } \sqrt n | V^*(\pi) - \hat V_n(\pi) |   = O_p(1).  \] 

This takes care of the regret bound for the continuum market and the first part of \eqref{eq:fmbound}. For the second part of \eqref{eq:fmbound}, to complete the regret bound for the finite-sized market, we use Lemma \ref{lem:fmlem}.



\newpage
\section*{Online Appendix} 

\section{Additional Proofs} 

\subsection{Proof of Theorem \ref{thm:eff}} 
\label{app:eff} 
The proof follows the methodology presented in \citet{bickel1993efficient} and \citet{newey1990semiparametric}. The organization and notation of the proof is similar to other papers that apply this methodology to related estimands, including \citet{hahn1998role} and \citet{hirano2003efficient} for average treatment effects, \citet{firpo2007efficient} for quantile treatment effects, and \citet{chen2021semiparametric} for long-run treatment effects. The presentation and notation is closest to that of \citet{firpo2007efficient}. 

\subsubsection*{Deriving the Score Function}

Under Assumption \ref{as:id}, the density of the data $(B_i(1), B_i(0), W_i, X_i)$ can be factorized as: 
\[ \phi(b(1), b(0), w, x) = f(b(1), b(0) | x) e(x)^w ( 1- e(x))^{1- w}f(x)  \] 
Under Assumption \ref{as:id}, the density of the observed data $(b, W, X)$ can be factorized as: 
\[ \phi(b, w, x) = [f_1(b| x) e(x)]^w [ f_0(b|x) ( 1- e(x))]^{1- w} f(x). \] 

where $f_1(b | x) = \int f(b, b_0 | x) d b_0$ and $f_0(b| x) = \int f(b_1, b | x) db_1$. 
We define a regular parametric submodel of the observed data density indexed by $\theta$: 
 \[ \phi(b, w, x; \theta) = [f_1(b| x; \theta) e(x; \theta)]^w [ f_0(b | x; \theta) ( 1- e(x; \theta))]^{1- w} f(x; \theta) \] 
We can now derive the score of the parametric submodel: 
\[ s(b, w, x; \theta) = w \cdot s_1(b | x ; \theta) + ( 1- w) \cdot s_0(b | x ; \theta) + \frac{w - e(x) }{ e(x) ( 1- e(x))} e'(x) + s_x(x; \theta) \] 
where
\begin{align*} 
& s_1(b | x; \theta) = \frac{\partial}{\partial \theta} \log f_1(b | x ; \theta), \qquad s_0(b | x; \theta) = \frac{\partial}{\partial \theta} \log f_0(b | x ; \theta), \qquad  e'(x; \theta) = \frac{\partial}{\partial \theta} e(x; \theta), \\&  s_x(x; \theta) = \frac{\partial}{\partial \theta} \log f( x ; \theta). 
\end{align*} 
The tangent space of this model is defined as the set of functions 
\[ g(b, w, x) = w g_1(b | x) + ( 1- w) g_0(b| x) +  (w - e(x)) g_2(x) + g_3(x) \] 
such that $g_1$ through $g_3$ range through all square integrable functions satisfying 
\begin{align*} 
& \mathbb E[g_1(B_i | X_i )| X_i= x, W_i=1]  = 0 \\ 
& \mathbb E[g_0(B_i | X_i )|  X_i =x, W_i=0] = 0 \\ 
 & \mathbb E[g_3(X_i)] = 0  
\end{align*} 
\subsubsection*{Pathwise Differentiability} 

We derive a Fréchet derivative of $\tau^*_{\text{GTE}} = \tau^*_1 - \tau^*_0$, where $\tau^*_1 = \mathbb E[y(B_i(1), X_i, p^*_1)]$ and $\tau^*_0 = \mathbb E[y(B_i(0), X_i, p^*_0)]$. We go through the details for $\tau^*_1$, and then state the result for $\tau^*_0$, since the derivation follows the same steps. 

\begin{equation} \label{eqn:pd} 
\tau_1' =  \nabla_p \mathbb E[y(B_i(1), X_i, p^*_1)]^{\top} p_1'  + \frac{\partial}{\partial \theta} \int \int y(b, x, p^*_1) f_1(b | x; \theta) f(x;\theta) dbdx 
\end{equation} 

The next step is to derive $p_1'$. By the uniqueness  of Assumption \ref{as:regulare}, $p^*_1$ is defined implicitly by $\mathbb E[d(B_i(1), X_i, p^*_1) - s^*] = 0$. By the implicit function theorem, we can write
\begin{align*} 
p_1' = - \nabla_p \mathbb E[d(B_i(1), X_i, p^*_1) - s^*]^{-1} \frac{\partial}{\partial \theta}  \int \int (d(b, x, p^*_1) - s^*) f_1(b | x; \theta) f(x; \theta) db dx. 
\end{align*}

The derivative of the moment conditions, evaluated at $\theta_0$, are as follows, where we write $f(x; \theta_0) = f(x)$ and $f_1(b|x ; \theta_0) = f_1(b| x)$. 

\begin{align*} 
& \frac{\partial}{\partial \theta} \int \int y(b, x, p^*_1) f_1(z | x; \theta) f(x;\theta) dz dx 
 =  \int \int y(b,x, p^*_1) s_1(b | x) f_1(b| x)  f(x) db dx \\ 
 & \qquad\qquad \qquad \qquad \qquad \qquad \qquad \qquad \qquad+ \int \int y(b,x, p^*_1) f_1(b| x)s_x(x)   f(x) db dx, \\ 
 & \frac{\partial}{\partial \theta}  \int \int (d(b, x, p^*_1) - s^*) f_1(b | x; \theta) f(x;\theta) db dx 
 = \int \int ( d(b, x, p^*_1) - s^*)  s_1(b | x) f_1(b| x)  f(x) db dx \\ 
 & \qquad\qquad \qquad \qquad \qquad \qquad \qquad \qquad \qquad+  \int \int ( d(b, x, p^*_1) - s^*) f_1(b| x)s_x(x)   f(x) db dx. 
\end{align*} 

Plugging these into the Equation \ref{eqn:pd},
\begin{align*} 
\tau_1' =  \int \int q^*_1(b,x) s_1(b | x) f_1(b| x)  f(x) db dx +  \int \int q^*_1(b, x) f_1(b| x)s_x(x)   f(x) db dx, 
\end{align*} 

where $q^*_1(b, x) = y(b, x,  p^*_1) - \nu^*_1 (d(b, x, p^*_1)  - s^*)$. Let $q^*_0(z, x) = y(b, x, p^*_0) - \nu^*_0 (d(b,x, p^*_0) - s^*)$. After the same procedure for $\tau'_0$, we can write 
\begin{align*} \tau_{\text{GTE}}' =&   \int \int q^*_1(b, x) s_1(b | x) f_1(b| x)  f(x) db dx +  \int \int q^*_1(b, x) f_1(b| x)s_x(x)   f(x) db dx \\ \qquad \qquad & -  \int \int q^*_0(b, x) s_0(b | x) f_0(b| x)  f(x) db dx -  \int \int q^*_0(b, x) f_0(b| x)s_x(x)   f(x) db dx.  \\ 
 = & \mathbb E[q^*_1(B_i(1), X_i) s_1(B_i(1) | X_i)]  + \mathbb E[\mu^q_1(X_i) s_x(X_i)], 
 \end{align*} 
where  $\mu^q_w(X_i) =  \mathbb E[q^*_w(B_i, X_i) | X_i, W_i = w]  $ for $w \in \{0, 1\}$. 
\subsubsection*{Conjectured Efficient Influence Function} 
A function that is in the tangent space is: 
\begin{align*} 
  \psi(B_i, W_i, X_i) = & \mathbb E[q^*_1(B_i, X_i) | X_i, W_i = 1] - \mathbb E[q^*_0(B_i, X_i) | X_i, W_i =0]   - \tau   \\ & +  \frac{ W_i  (q^*_1(B_i ,X_i)   - \mathbb E[q^*_1(B_i, X_i) | X_i, W_i = 1]) }{ e(X_i) }  \\&   - \frac{ ( 1- W_i ) (q^*_0(B_i, X_i)   - \mathbb E[q^*_0(B_i, X_i) | X_i, W_i = 0]) }{ 1 - e(X_i) }.
\end{align*} 
We can verify it is in the tangent space. Suppressing the dependence of $q^*_1$ and $q^*_0$ on $X_i$ to keep notation more concise: 
\begin{enumerate} 
\item $g_1(b | x) = \frac{ q^*_1(b) - \mathbb E[q^*_1(B_i) | X_i = x, W_i = 1]}{e(x)}.$ For any $x$, \[ \mathbb E[g_1(b | X_i) | X_i =x, W_i=1) ]= \frac{\mathbb E[q^*_1(b) | X_i = x, W_i = 1] - \mathbb E[q^*_1(b) | X_i = x, W_i = 1]}{e(x)} = 0 .\]
\item $g_0(b| x) = \frac{ q^*_0(b) - \mathbb E[q^*_0(B_i) | X_i = x, W_i = 0]}{1 - e(x)}.$ For any $x$, \[ \mathbb E[g_0(B_i | X_i) | X_i = x, W_i = 0] = \frac{ \mathbb E[q^*_0(B_i) | X_i = x, W_i = 0] - \mathbb E[q^*_0(B_i) | X_i = x, W_i = 0]}{1 - e(x)} = 0. \]
\item $g_2(x) = 0$ 
\item $g_3(x) =  \mathbb E[q^*_1(B_i) | X_i, W_i = 1] - \mathbb E[q^*_0(B_i) | X_i, W_i =0]   - \tau $
\begin{align*} 
\mathbb E[g_3(X_i)] & = \mathbb E[\mu^q_1(X_i)] - \mathbb E[\mu^q_0(X_i)] - \mathbb E[\mu^q_1(X_i)] + \mathbb E[\mu^q_0(X_i)]  = 0. 
\end{align*} 
\end{enumerate} 
Given it is an element of the tangent space, if it is an influence function it is efficient. To verify that is an influence function, we must show that $\mathbb E[ \psi(B_i, W_i, X_i) s(B_i, W_i, X_i) ] = \tau'.$
We can divide $\psi(B_i, W_i, X_i) = \psi_1(B_i, W_i, X_i) - \psi_0(B_i, W_i, X_i)$, where 
\begin{align*}
& \psi_1(B_i, W_i, X_i) =  \mathbb E[q^*_1(B_i) | X_i, W_i = 1] - \mathbb E[q^*_1(B_i) | W_i = 1] +  \frac{ W_i  (q^*_1(B_i)   - \mathbb E[q^*_1(B_i) | X_i, W_i = 1]) }{ e(X_i) } \\ 
& \psi_0(B_i, W_i, X_i) =  \mathbb E[q^*_0(B_i) | X_i, W_i = 0] - \mathbb E[q^*_0(B_i) | W_i = 0]  \\ & \qquad \qquad \qquad \qquad +  \frac{ ( 1- W_i ) (q^*_0(B_i)   - \mathbb E[q^*_0(B_i) | X_i, W_i = 0]) }{ 1 - e(X_i) }
\end{align*} 
We work through the details for $\psi_1(\cdot)$, since the process is the same for $\psi_0(\cdot)$. 

\begin{align*} 
& \mbox{ }  \mathbb E[ \psi_1(B_i, W_i, X_i) s(B_i, W_i, X_i) ]  \\ & =   \mathbb E\left [ (q^*_1(B_i(1)) - \mu^q_1(X_i)) s_1(B_i(1) | X_i) + s_x(X_i)(q^*_1(B_i(1)) - \mu^q_1(X_i)) \right]  \\ 
& \qquad \qquad +  \mathbb E[ W_i s_1(B_i(1) | X_i) \cdot  \mu^q_1(X_i) + ( 1- W_i) s_0(B_i(0) | X_i) \cdot \mu^q_1(X_i) + s_x(X_i) \mu^q_1(X_i)] \\ 
& =  \mathbb E[q^*_1(B_i(1)) s_1(B_i(1) | X_i) ] + \mathbb E[s_x(X_i) \mu^q_1(X_i)] \\& \qquad \qquad  + \mathbb E[( 1- e(X_i)) \mathbb E[s_1(B_i(1) | X_i) - s_0(B_i(0) | X_i)] \mu^q_1(X_i)]  \\
&  \stackrel{(1)}{=}  \mathbb E[q^*_1(B_i(1)) s_1(B_i(1) | X_i) ] + \mathbb E[s_x(X_i) \mu^q_1(X_i)] \\ 
 & = \tau_1'
\end{align*} 
(1) is because $\mathbb E[s_w(B_i(w) | X_i) | X_i = x] = 0 $ for each $ x \in \mathcal X$ and $w \in \{0, 1\}$. 

Similarly, we can show that $ \mathbb E[\psi_0(B_i, W_i, X_i) s(B_i, W_i, X_i) ] = \tau_0'$. We have shown that the function $\psi(B_i, W_i, X_i)$ is an efficient influence function. The semi-parametric efficiency bound is thus: 
\begin{equation*} 
\begin{split} 
V^*  & = \mathbb E[\psi(B_i, W_i, X_i)^2], \\ 
& = \mathbb E[ ( \Gamma^{*q}_{1i}(p^*_1) - \Gamma^{*q}_{0i}(p^*_0) - \tau^*_{\text{GTE}} )^2].
\end{split} 
\end{equation*} 

\section{Simulation Details} 
\label{app:sim} 

The data-generating process for Section \ref{sec:auction} is as follows, where $\Phi(\cdot)$ is the standard normal CDF: 
\begin{align*} 
&  B_i(1) \sim F^B_1(X_i), \qquad B_i(0) \sim F^B_0(X_i) , \qquad X_i \sim \mbox{Uniform}(0, 1)^{20}, \\ 
&  W_i \sim \mbox{Bernoulli}( \Phi(X_{1i} - 0.5X_{2i} + 0.5 X_{3i})), \qquad D_i(W_i, p)  = \mathbbm{1}(B_i(W_i) \geq p), \\ &  Y_i(\bm W) = (B_i(W_i) - P(\bm W)) \mathbbm{1}(B_i(W_i) > P(\bm W)), \qquad  \frac 1n \sum \limits_{i=1}^n \mathbbm{1}(B_i(W) >  P(\bm W)) = \frac 12. 
\end{align*} 

$F^B_1(x)$ and $F^B_0(x)$ are varied. In the simulation for Table  \ref{tab:sim1}, $B_i(0) \sim  \mbox{LogNormal}(0.8 X_{1i} - 0.3 X_{2i} - 0.2 X_{3i}, 0.3)$ and $B_i(1) = 1.5 B_i(0).$

\subsection{Analysis of Coverage and Confidence Interval Width} 
\label{ap:coverage}

To evaluate finite-sample properties of confidence intervals, we construct a simulation of a schools market, where individuals rank schools according to a random utility model, and the treatment affects a subgroup of students' preferences for a high quality school. There are three schools, with fractional capacity of $25\%$, $25\%$ and $100\%$, respectively. Only the first two are high quality. The outcome is average match-value, where the planner has a higher value for a certain subgroup of students attending a high quality school. Schools 1 and 2 are high-quality, with $Q_j = 1$, and capacity constrained, but school 3, which is low quality, with $Q_{j}= 0$, is not. The subgroup of interest for the planner is denoted by $C_i  \in \{0, 1\}$. The match value $V_{ij} = 2$ if $C_{i} = 1$ and $Q_j = 1$, and $V_{ij} = 1$ if $C_{i} =0$ and $Q_j= 1$, otherwise it is 0. The covariates $X_i$ that are observed for each individual are  5 standard normal variables, which are $X_{j, i}$ from $j=1 \ldots 5$, and the indicator $C_i$. Let $\Phi(\cdot)$ be the standard Normal CDF. The subgroup indicator is 
\[ C_i \sim \mbox{Bernoulli}(\Phi(1 + X_{3, i})) \] 

Those with $C_i = 1$ have a lower mean utility for quality in the absence of treatment. $\mu_L = \begin{bmatrix} 0 & 0.5 & 0.5 \end{bmatrix}^{\top}$ and $\mu_H = \begin{bmatrix} 1.0 & 0.5 & 0.0 \end{bmatrix}^{\top}$. The vector of utilities of individual $i$ for the schools $j \in \{1, 2, 3 \}$ is: 
\[ U_{i} = C_i \mu_L + (1- C_i) \mu_H + C_i W_i \begin{bmatrix} 1 \\ 0 \\ 0 \end{bmatrix} + X_{2, i} \begin{bmatrix} 0 \\ 0 \\ 0.3 \end{bmatrix} + \epsilon_{i}  \]

where $\epsilon_i$ is a three-dimensional vector of standard normal variables. The treatment raises the probability that an individual with $C_i =1$ applies to a high-quality school. The students each submit a ranking $R_i(W_i)$ over the three schools to the mechanism based on the order of their utilities $U_i$.  The score for each individual and each school is $S_{ij} \sim \mbox{Uniform}(0, 1)$, so in the notation of the general setup, $B_i(W_i) = \{ R_i(W_i), S_i) \}$.  Finally, the treatment allocation and outcome generation, which obeys selection-on-observables, follows $W_i \sim \mbox{Bernoulli}( \Phi(0.5 X_{3, i} - 0.5 X_{2, i} + v_i))$ and $Y_i(\bm W) = \sum \limits_{j=1}^3 d_j(B_i(W_i), P(\bm W)) V_{ij}.$ The noise term $v_i$ is standard normal. 

The distribution of the ground truth for two estimands defined on a sample of $n$ individuals is plotted in Figure \ref{fig:simdens}.  Theorem \ref{thm:moment} indicates that distribution of $\sqrt n(\bar \tau_{\text{GTE}} - \tau^*_{\text{GTE}})$ is asymptotically normal, and we see in the plot that the density for $\bar \tau_{\text{GTE}}$ roughly corresponds to a normal density. We also plot the distribution of the estimand $\bar \tau_{\text{DTE}}$ in repeated samples from the data-generating process. $\bar \tau_{\text{DTE}}$ is the average direct treatment effect, which is defined in \citet{hu2022average} as 
\[ \bar \tau_{DTE} = \frac{1}{n} \sum \limits_{i=1}^n \mathbb E_{\pi} [ Y_i(W_i = 1; \bm W_{-i} ) -  Y_i(W_i = 0 ; \bm W_{-i}) ]. \] 

This estimand is relevant, because estimators for the average treatment effect are consistent for $\bar \tau_\text{{DTE}}$ when used in settings with spillovers \citep{savje2021average}. With samples of data drawn from the data-generating process, we construct estimates and conservative confidence intervals for $\bar \tau_{\text{DTE}}$ by using methods for the averaged treatment effect based on generalized random forests, as described in \citet{athey2019generalized}, and implemented in the R package \texttt{grf}. The results in \citet{munro2023market} suggest that for this simulation, using confidence intervals for the average treatment effect will be slightly conservative for $\bar \tau_{\text{DTE}}$. For the confidence intervals for $\bar \tau_{\text{GTE}}$, we use the LDML estimator and confidence intervals for $\tau^*_{\text{GTE}}$ that are described in Section \ref{sec:estimation}. These are conservative for the finite market estimand $\bar \tau_{\text{GTE}}$. 

We see in Figure \ref{fig:simcovg} that both the ATE and GTE confidence intervals are near the nominal coverage level for their respective estimands, with the GRF-derived confidence intervals slightly over-covering. However, since the partial equilibrium effect $\bar \tau_{\text{DTE}}$ varies more than the general equilibrium effect,  the confidence interval width for the estimate of $\bar \tau_{\text{GTE}}$ is substantially more narrow than the width for the estimate of the $\bar \tau_{DTE}$.

\begin{figure} [!ht]
   \centering
     \begin{subfigure}[b]{0.49\textwidth}
         \centering
         \includegraphics[width=\textwidth]{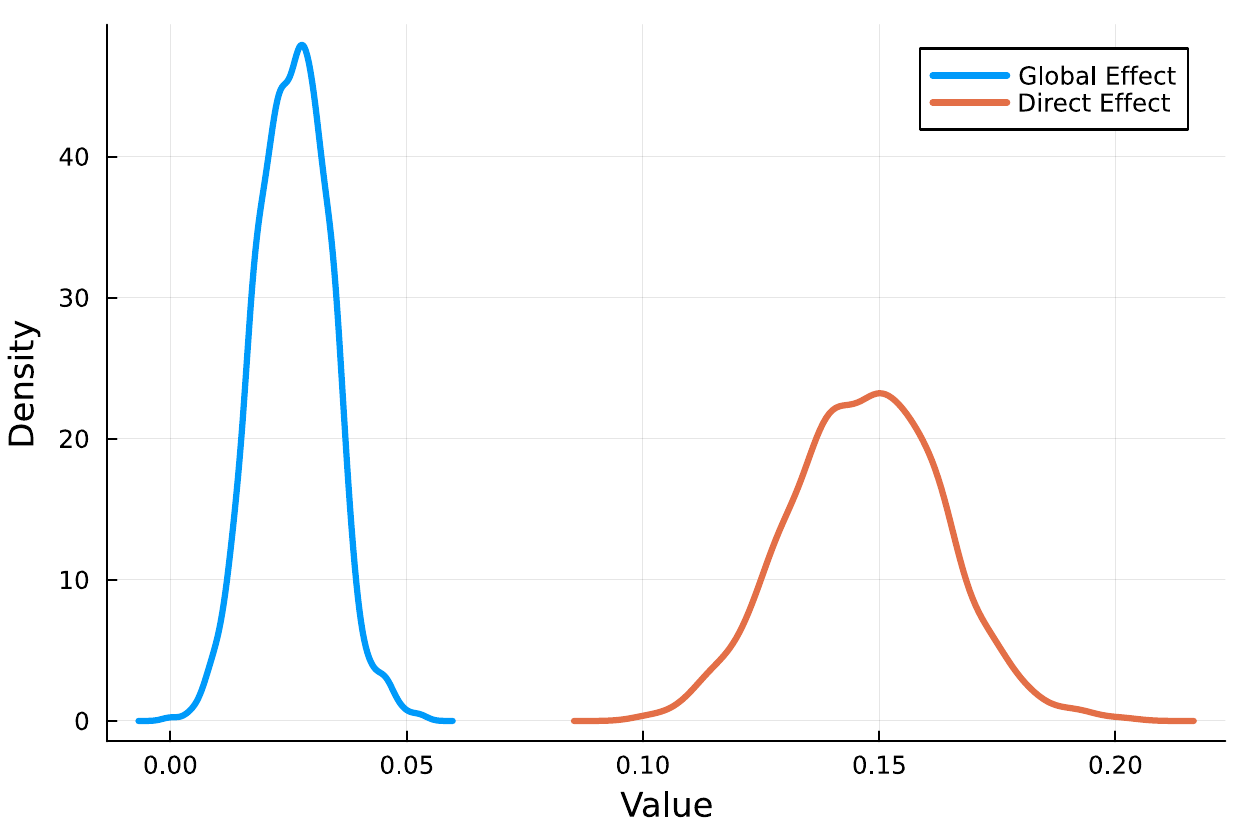}
         \caption{The distribution of $\bar \tau_{\text{GTE}}$ and $\bar \tau_{DTE}$ for a repeated sample of $n=1000$ agents over $S=1000$ samples } 
         \label{fig:simdens}
     \end{subfigure}
     \hfill
     \begin{subfigure}[b]{0.49\textwidth}
         \centering
         \includegraphics[width=\textwidth]{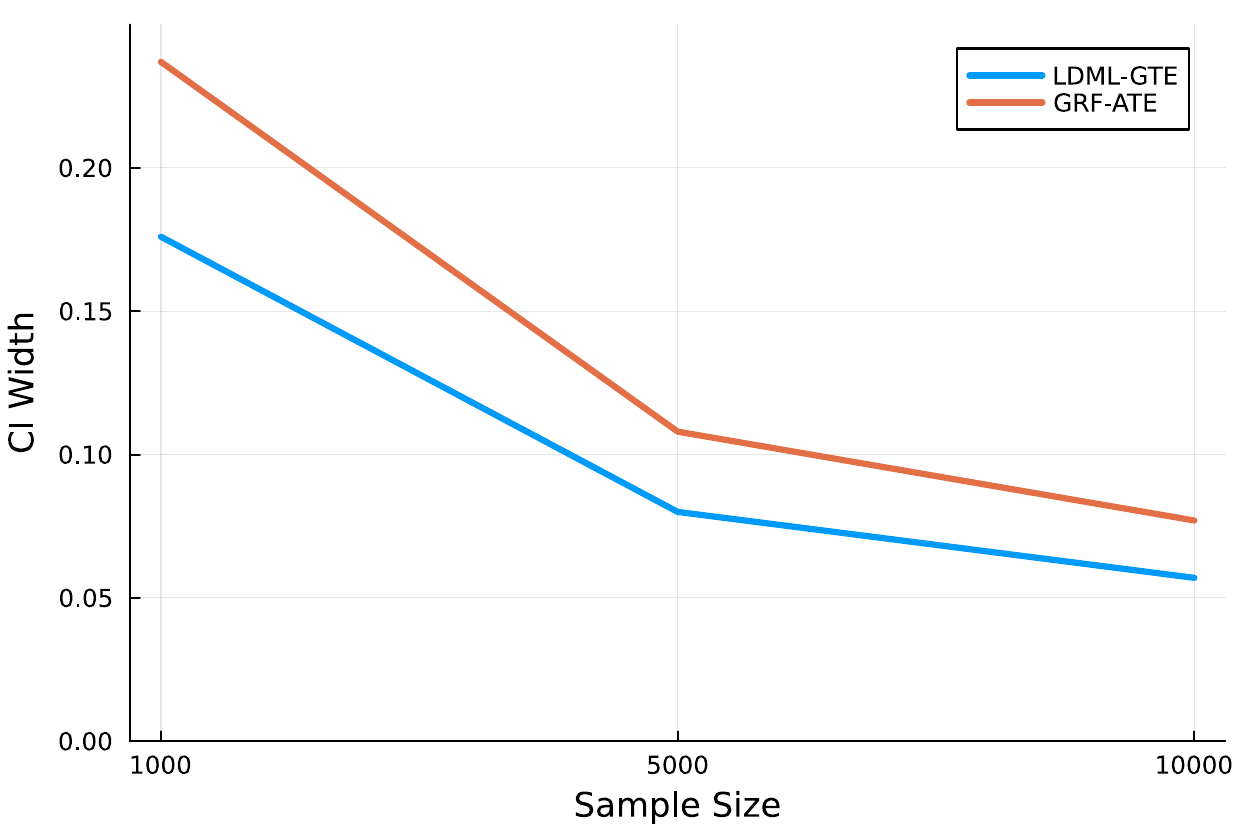}
         \caption{Confidence interval width for treatment effect estimators, averaged over $S=100$ samples}
         \label{fig:simci}
     \end{subfigure}
     \hfill
     \begin{subfigure}[b]{0.49\textwidth}
         \centering
         \includegraphics[width=\textwidth]{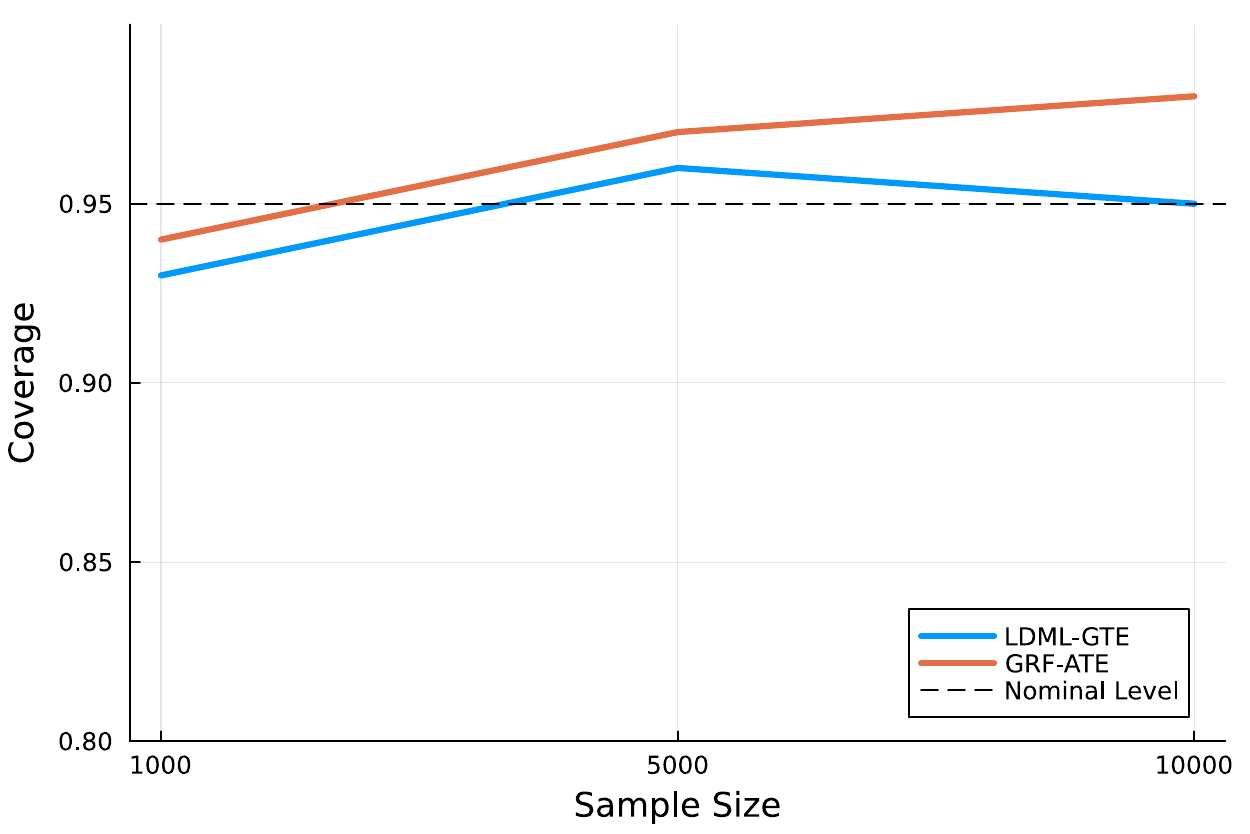}
         \caption{Coverage for treatment effect estimators, averaged over $S=100$ samples}
         \label{fig:simcovg}
     \end{subfigure}
     \caption{Monte Carlo Simulation Results \label{fig:sim}} 
\end{figure}

\section{Empirical Details} 
\label{app:empirical} 
\begin{table}[!htbp]
  \centering
 \begin{tabular}{lcc}
    \toprule
    \textbf{Variable} & \textbf{Treated} & \textbf{Control} \\
    \midrule
    income & 4.22 & 4.77 \\
           & (3.32) & (3.82) \\
    ma\_educ & 11.01 & 11.46 \\
             & (3.14) & (3.14) \\
    pa\_educ & 10.99 & 11.45 \\
             & (3.45) & (3.45) \\
    ma\_indig & 0.18 & 0.17 \\
              & (0.38) & (0.37) \\
    pa\_indig & 0.15 & 0.14 \\
              & (0.35) & (0.35) \\
    hhsize & 2.45 & 2.46 \\
           & (1.29) & (1.27) \\
    latitude & -34.36 & -34.15 \\
                    & (4.90) & (5.04) \\
    longitude & -71.47 & -71.37 \\
                    & (1.02) & (1.03) \\
    \bottomrule
  \end{tabular}\caption{Summary Statistics for $n=114,749$ applicants to 9th grade in 2020. $W_i = 1$ indicates a parent reported they were aware of the performance category of the 8th grade school of their child. Income is in \$100,000 pesos, and education is in years. }
      \label{tab:summary_stats}
\end{table}

\newpage
\section{Extensions} 

 \subsection{Verifying Regularity Conditions}
 \label{app:upa} 
 \begin{proposition}\label{prop:upa}   Assume that $ 0 < s^* <1 $ and that market participants are bidding in a uniform price auction. We impose the following assumptions on the distribution of bids. 
\begin{itemize} 
\item $B_i(W_i) \in [ V^{-}, V^{+}] \subset \mathbb R$ where $V^{-}$ and $V^{+}$ are finite and strictly positive.
\item For all $x \in \mathcal X$, the conditional CDF of the bid distribution, $F_{w, x} (b | x)$, is twice continuously differentiable in $b$ for $w \in \{0, 1\}$,  with the absolute value of the first and second derivatives uniformly bounded by finite constant $b_1$.   In addition, the first derivative is bounded below by finite constant $b_2$. 
\end{itemize} 
Then, Assumption \ref{as:cutoff} -  \ref{as:regulare} hold when outcomes are a surplus measure, so $ y(B_i(w), X_i, p) = (B_i(w) - p) d(B_i(w), X_i, p).$ 
\end{proposition}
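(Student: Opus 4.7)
The plan is to verify Assumptions \ref{as:cutoff}, \ref{as:regularo}, and \ref{as:regulare} in turn, exploiting the fact that in the uniform price auction we have a scalar price ($J = 1$), allocation $d(b,p) = \mathbbm{1}(b>p)$, and outcome $y(b,p) = (b-p)\mathbbm{1}(b>p)$. By strategy-proofness, reported bids equal private values, so $B_i(w)$ inherits the density bounds in the hypothesis directly from the conditional value distribution $F_{w,x}(\cdot | x)$. I will treat the four numbered parts of Assumption \ref{as:regularo} and the two parts of Assumption \ref{as:regulare} sequentially; most pieces are routine given the density bounds, and the main effort is keeping things uniform in $\pi \in \Pi$.

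For Assumption \ref{as:cutoff}, order the sample as $B_{(1)} \leq \cdots \leq B_{(n)}$ and let $m = \lfloor s^* n \rfloor$. Any $p$ in the half-open interval $[B_{(n-m)}, B_{(n-m+1)})$ yields $\frac{1}{n}\sum_i \mathbbm{1}(B_i > p) = m/n$, and since $|m/n - s^*| \leq 1/n$, we may take $a_n = 1/n = o(n^{-1/2})$. The event that $\mathcal C_{\bm w}$ is non-empty holds deterministically, so the $1 - e^{-cn}$ lower bound is trivial. The same argument applied to the weighted empirical CDF (using weighted order statistics) gives the explicit mechanism map $m(\bm B, \bm \gamma, s^*)$ of \eqref{eqn:defm}.

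For Assumption \ref{as:regularo}, the class $\mathcal F_d = \{b \mapsto \mathbbm{1}(b > p) : p \in \mathcal S\}$ is a VC-subgraph class of bounded VC dimension, and $\mathcal F_y$ is a Lipschitz transform of $\mathcal F_d$, so both satisfy a polynomial uniform covering number bound by Theorem 2.6.7 of van der Vaart and Wellner. For the $L_2$-Lipschitz bounds, $\mathbb E[(\mathbbm{1}(B>p) - \mathbbm{1}(B>p'))^2] = |F_w(p) - F_w(p')| \leq b_1 |p - p'|$ after taking iterated expectations over $X_i$, and, since $|y(b,p) - y(b,p')| \leq (V^+ - V^-)|\mathbbm{1}(b>p) - \mathbbm{1}(b>p')| + |p - p'|$, the outcome version follows. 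Differentiation under the integral sign gives $\nabla_p \mu^d_w(p,x) = -f_{w,x}(p|x)$ and $\nabla_p^2 \mu^d_w(p,x) = -f'_{w,x}(p|x)$, and, integrating by parts, $\mu^y_w(p,x) = \int_p^{V^+} (1 - F_{w,x}(b|x))\, db$, so $\nabla_p \mu^y_w = -(1 - F_{w,x}(p|x))$ and $\nabla_p^2 \mu^y_w = f_{w,x}(p|x)$; all are uniformly bounded by $b_1$ (plus $1$). Since $J=1$, the Jacobian is the scalar $\nabla_p z_\pi(p^*_\pi) = -\mathbb E[f_{W,X}(p^*_\pi \mid X)]$, which lies in $[-b_1, -b_2]$, so its unique singular value sits in $[b_2, b_1]$.

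For Assumption \ref{as:regulare}, I would take $\mathcal S$ to be a compact sub-interval of $(V^-, V^+)$ large enough to contain a $c_1$-ball around $p^*_\pi$ for every $\pi \in \Pi$. Strict monotonicity of $z_\pi(p)$ (its derivative is bounded above by $-b_2$) gives a unique root $p^*_\pi$, and the mean value theorem gives $|z_\pi(p)| = |z_\pi(p) - z_\pi(p^*_\pi)| \geq b_2 |p - p^*_\pi|$; for any $p$ with $|p - p^*_\pi| \geq c_3/(2 J c') = b_2/(2b_1)$, this yields $2|z_\pi(p)| \geq b_2^2/b_1$, so $c_2 = b_2^2/b_1$ works. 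The main obstacle is the uniform-in-$\pi$ selection of $\mathcal S$ and $c_1$: different $\pi$ produce different mixture bid distributions and hence different $(1-s^*)$-quantiles $p^*_\pi$. However, because $s^* \in (0,1)$ and the conditional density is bounded above by $b_1$ and below by $b_2$ on $[V^-, V^+]$, the quantiles $p^*_\pi$ lie uniformly in a compact sub-interval of $(V^-, V^+)$ whose distance from the endpoints depends only on $s^*$, $b_1$, $b_2$, and $V^+ - V^-$; this gives a single compact $\mathcal S$ and a single $c_1 > 0$ valid for all $\pi$.
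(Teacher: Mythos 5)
Your proposal is correct and follows essentially the same route as the paper's proof: an order-statistic/jump-size argument for market clearing with $a_n = 1/n$, VC and Lipschitz-composition bounds for the covering numbers, direct differentiation of the conditional CDF for the smoothness and Jacobian conditions, and the mean value theorem for uniqueness and well-separation. The one small divergence is your choice of $\mathcal S$ as a compact sub-interval of $(V^-, V^+)$, which leaves a loose end --- Assumption \ref{as:cutoff} requires the finite-sample clearing price to lie in $\mathcal S$, so you would need a concentration argument showing the sample quantile stays in your sub-interval with probability at least $1 - e^{-cn}$ --- whereas the paper takes $\mathcal S = [V^- - c_1, V^+ + c_1]$, so every candidate cutoff automatically lies in $\mathcal S$.
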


The argument in Proposition \ref{prop:upa} can also be extended to deferred acceptance; see \citet{agarwal2018demand} for verification of many of the required conditions. 

\begin{proof}

For the remainder of this section, we drop the dependence of $d(\cdot)$ on $X_i$, since it does allocations do not depend on $X_i$ in this example. We start by verifying Assumption \ref{as:regulare}. It holds because we can choose some $c_1 > 0$ and then define $\mathcal S $ as $[V^{-} - c_1, V^+ + c_1]$.  This is a compact set and the market clearing price $V^- < p^*_{\pi} < V^+$ (since capacity is strictly between 0 and 1) must always contain a ball of radius at least $c_1$. The unconditional distribution of $B_i(W_i)$ is  
\[ F_{\pi}(b)  = \int  \pi(x) F_{1| x}(b) + ( 1- \pi(x)) F_{0 |x}(b) dF_x(x). \] 

Since the first derivative of $F_{\pi}(b)$ is bounded below by $b_2$, then for any $s^* \in (0, 1)$, $p^*_{\pi}$ is the unique solution defined as $p^*_{\pi} = F_{\pi}^{-1}( 1 - s^*)$.  Furthermore, we have that $z_{\pi}(p) = 1 - F_{\pi}(p) -s^*$. By the mean-value theorem, for some $c \in \mathcal S$, $z_{\pi}(p) - z_{\pi}(p') = z'_{\pi}(c) (p - p')$. Since the magnitude of $z'_{\pi}(c)$ is lower bounded by $b_2$, and $z_{\pi}(p^*_{\pi}) = 0$, we can write 
$| z_{\pi}(p) | \geq b_2 | p - p' | $. This means if $ | p - p^*_{\pi} | \geq  c_{3} / 2 c'$, then $|z_{\pi}(p)|$ is always greater than $b_2  c_3 / 2c'$, which is a strictly positive lower bound. 

For Part 1 of Assumption \ref{as:regularo}, the class of $d(B_i(w), p)$ indexed by $p \in \mathcal S$ is a VC class of functions (the class of indicator functions is a VC class), so the covering number has the polynomial bound required. The class of linear functions $(B_i(w) - p)$ indexed by $p \in \mathcal S$ also has  a polynomial bound, since the covering number of that class equals the covering number of $\mathcal S$, which is compact. Then, $y(B_i(w), p)$ is a Lipschitz combination of functions each with covering numbers that have a polynomial bound, so by Lemma \ref{lem:comp},  Part 1 holds.

For Part 2 of  Assumption \ref{as:regularo}, outcomes are bounded because $B_i(w)$ is bounded by $V^+$. For the weak continuity assumption, we have the following argument, where $F_w(\cdot)$ is the CDF of $B_i(w)$. 
\begin{equation*} 
\begin{split} 
\mathbb E[ (d(B_i(w), p) - d(B_i(w) , p') )^2]  & = \mathbb E[ (\mathbbm{1}(B_i(w) > p) - \mathbbm{1}(B_i(w) > p'))^2] \\ 
& =  (F_w(p') - F_w(p)) \mathbbm{1}(p' > p)  + (F_w(p) - F_w(p') ) \mathbbm{1}{(p' \leq p)} \\ 
 & \leq  b_1 || p - p' || 
\end{split} 
\end{equation*} 
where the last step is because the CDF of $B_i(1)$ and $B_i(0)$ is differentiable with bounded first derivatives. For outcomes, 

\begin{equation*} 
\begin{split} 
\mathbb E[ (y(B_i(w), p) - y(B_i(w) , p') )^2]  & = \mathbb E[(B_i(w) - p) (d(B_i(w), p) - d(B_i(w), p')  + (p - p') d(B_i(w), p') )^2] \\ 
& \leq  4 V^+ \mathbb E[ (d(B_i(w), p) - d(B_i(w), p'))^2]  + 2  ||p - p'||^2_2 \\ 
& (4V^+b_1 + 2JV^+) || p - p'||_2
\end{split} 
\end{equation*} 
where we use the result for $d(\cdot)$ in the last step. 

For Part 3, $\nabla_p \mu^d_w(x, p) = \nabla_p P(B_i(w) \geq p | X_i = x) = 1 - F_{w | x}(p | x)$. The conditional CDF is twice continuously differentiable in $p$, with first and second derivatives bounded by $b_1$. For outcomes, $\nabla_p \mu^y_w(p, x) =  \nabla_p \mathbb E[(B_i(w) - p)d( B_i(w) > p)  | X_i= x] = \nabla_ p \int_{p}^{V+} (b) dF_{w|x}(b | x)  - \nabla_p p \cdot ( 1 - F_{w | x}(p | x))$.  By Leibniz's rule, and that $p$ is bounded, this is also twice continuously differentiable in $p$, with bounded first and second derivatives, by the properties of the conditional distribution of $B_i(w)$. 

For the last part, we have that
\[ \nabla_p \mathbb E[\pi(X_i) \mu^d_1(X_i, p^*_{\pi} )+ ( 1- \pi(X_i))  \mu^d_0(X_i, p^*_{\pi})] = -  \mathbb E[\pi(X_i) f_{1 | x} (p^*_{\pi} | X_i) + (1 - \pi(X_i) )f_{0 | x }(p^*_{\pi} | X_i )]. \] 

We can exchange the derivative and expectation by the dominated convergence theorem. To evaluate the derivative, notice that $\mu^d_1(x, p) = P(B_i(1) \geq p | X_i =x) = 1 - F_{1| x}(p | x)$.  The RHS is bounded between $b_2$ and $b_1$, since $f_{w| x} (p  | x) $ is uniformly bounded between $b_2$ and $b_1$ and $0 \leq \pi(X_i) \leq 1$. 

We can finish by verifying the finite-market-clearing assumption in Assumption \ref{as:cutoff}. Since $0 < s^* < 1$, then $Z_n(V^-) < 0$ and $ Z_n(V^+) > 0$. So, with probability 1, $Z_n(p)$ crosses 0. Since $d(B_i(w), p)$ is bounded by 1, and the probability that any two bidders have the same value is 0, the magnitude of any jump in $ Z_n(p)$ is bounded by $1/n$. This means with probability 1, $Z_n( P_{\pi}) \leq 1/n$. 
\end{proof}

\subsection{Using IV for Identification and Estimation} 
\label{app:iv} 

This section provides a brief discussion of how a restricted version of the Global Treatment Effect can be estimated when unconfoundedness does not hold, but there is a binary instrumental variable that affects take-up of a binary treatment.  We drop the dependence of $y(\cdot)$ and $d(\cdot)$ on $X_i$ for this section to keep notation concise.  In an IV setting, we have potential treatments $W_i(1)$ and $W_i(0)$ that depend on an instrument $H_i \in \{0, 1\}$. Under a monotonicity assumption, $W_i(1) \geq W_i(0)$. With spillover effects, there are a variety of counterfactuals that can be defined. One relevant counterfactual when there may be control over the instrument, but not the treatment directly, is the intent-to-treat effect.  This is the effect on average outcomes in the sample when all individuals receive the instrument, compared to a setting where no agents receive the instrument. It can be written in this setting as: 
\begin{equation*}  
\begin{split} 
\bar  \tau_{GITT} = & \frac{1}{n} \sum \limits_{i=1}^n \mathbbm{1}(W_i(1) > W_i(0) ) [ y(B_i(1), Q_1) - y(B_i(0), Q_0) ] \\ & + \frac{1}{n} \sum \limits_{i=1}^n  \mathbbm{1}(W_i(1) = W_i(0)) [y(B_i(0), Q_1) - y(B_i(0), Q_0)] \\
\end{split} 
\end{equation*} 

where $Q_1$ and $Q_0$ are  defined as 
\begin{equation*} 
\begin{split} 
0 = & \frac{1}{n} \sum \limits_{i=1}^n [ \mathbbm{1}(W_i(1) > W_i(0) ) d(B_i(1),Q_1) + \mathbbm{1}(W_i(1) = W_i(0)) d(B_i(0), Q_1)  - s^*]\\
  0 = & \frac{1}{n} \sum \limits_{i=1}^n  [d(B_i(0),Q_0) - s^*]
   \end{split} 
 \end{equation*} 
 
When the market-clearing cutoffs are determined by the aggregate behavior of everyone, then outcomes of compliers are affected directly by the treatment and indirectly by the change in the equilibrium. The outcomes of those who do not take up the treatment, however, are also affected by the changes in preferences of the compliers, due to the equilibrium effect. Using the techniques in the proof of Theorem \ref{thm:moment}, we can show that this corresponds to the following moment condition problem with missing data. Let $C_i = \mathbbm{1}( W_i(1) > W_i(0))$. 
\begin{equation*} 
\begin{split} 
& 0 = \tau^*_{GITT} - P(C_i = 1) \mathbbm E[ y(B_i(1), q^*_1) - y(B_i(0), q^*_0) | C_i = 1]  -   
\\ & \qquad \qquad P(C_i = 0) \mathbbm E[y(B_i(0), q^*_1) - y(B_i(0), q^*_0) | C_i = 0 ]  \\ 
& 0 = P(C_i = 1) \mathbb E [d(B_i(1), q^*_1)- s^*| C_i = 1 ] + P(C_i = 0)\mathbb E[ d(B_i(0), q^*_1) - s^*| C_i = 0 ]  \\
& 0 =  \mathbb E  [d(B_i(0), q^*_0) - s^*]
\end{split} 
\end{equation*} 

The Local Average Treatment Effect \citep{imbens1994identification} -type quantities in this moment equation can be identified and estimated using standard IV assumptions: overlap, instrumental relevance, and exogeneity. For example, $\mathbbm E[y (B_i(1), q^*_1)  | W_i(1) > W_i(0)]$ is a moment that matches the form of Equation 19 in Appendix A of  \citet{kallus2019localized}. Under the IV identifying assumptions, including monotonicity, then a Neyman orthogonal estimation equation for this moment is given by Equation 22 of Appendix A of the paper. 

\subsection{Connecting to Research Design Meets Market Design} 
\label{ap:angrist}

In this paper, we identify and estimate the effect of an individual-level treatment on allocations in a centralized market in equilibrium. \citet{abdulkadirouglu2017research} consider a different type of causal effect.  They are interested in the effect of allocations (e.g. attending a charter school) on some stochastic outcome, like test scores or future income. We briefly discuss how these two approaches can be combined, under an additional (major) assumption that the treatment only affects outcomes through some function $g: \{ 1, \ldots , J \} \mapsto \{ 1, \ldots, M \}$ that aggregates allocations, where $M << J$. For example, $g(D_i) \in \{0, 1\}$ could be an indicator of $D_i$ is a charter school, or a good-quality school. In cases where $J$ is small, then we can have that $g(\cdot)$ is the identity function, and no additional restriction on spillovers will be required. 

Let $\{ Y_i(g(d_i (w_i, p(\bm w))) : w_i \in \{0, 1\}, p(\bm w) \in \mathbb R^J, d_i \in \{1, \ldots , J \} \}$ define general potential outcomes for a market of size $n$, where the treatment affects outcomes only through some aggregation of an individual's allocation, such as whether they attended a charter school. An individual's observed outcome $Y_i = Y_i(g(D_i))$ depends on an individual's treatment $W_i$ through their allocation $D_i = D_i(W_i,  P(\bm W))$. An individual's type $\theta_i = (R_i(1), R_i(0), X_i)$. 

Define the allocation-specific propensity score under the observed treatment rule the $J$-length vector $h_e(\theta_i) = P( g(D_i(W_i, P(\bm W)))    | R_i(W_i) )$, where $R_i(W_i)$ is the ranking of schools that a students submits. We must have that $M$ is small enough so that the lottery scores that are used for tie-breaking students in the same priority group ensure that for some non-negligible group , that $0 < h_{e}(\theta_i) < 1$. By comparing students with similar allocation-specific propensity scores who have different allocations, then  \citet{abdulkadirouglu2017research} identify the effect of some aggregation of allocations. In our notation, this type of causal effect is $\mathbb E[ Y_i( j)   -  Y_i( k) | h_e(\theta_i) = h] $. Using the approach in \citet{abdulkadirouglu2017research}, we can use the observed data to identify $\mathbb E[Y_i(m) | 0 < h_e(\theta_i) < 1]$ for all $m \in \{ 1, \ldots, M \}$. The approach in our paper, on the other hand, identifies the change in allocations in equilibrium from a counterfactual treatment allocation. The algorithm in Section \ref{sec:estimation} can be used to estimate $Pr\Big(g(D_i(1, P(\bm 1) ))= m\Big) $.

Thus, the two approaches can be linked to identify a restricted form of a global treatment effect, under the assumption that the treatment only impacts outcomes through attendance at some aggregate type of school. 
\begin{equation*} 
\begin{split} 
 \tau_{\text{LTET} } & = \mathbb E[Y_i(\bm 1) | 0 < h_e(\theta_i) < 1] - \mathbb E[Y_i(\bm 0) | 0 < h_e(\theta_i) < 1] \\ 
 & = \sum \limits_{m=1}^M \mathbb E[Y_i(m) | 0 < h_e(\theta_i) < 1] \Big ( Pr\Big(g(D_i(1, P(\bm 1) )) = m\Big)  -  Pr\Big(g(D_i(0, P(\bm 0) )) = m\Big)  \Big).  \\ 
\end{split} 
\end{equation*} 

The approach in this paper estimates the effect of the treatment on allocations to certain types of schools in equilibrium. For certain subgroups with non-zero propensity score, we can link that effect on allocations to stochastic outcomes using the approach in \citet{abdulkadirouglu2017research}. Accommodating settings where outcomes depend on the treatment in more complex  ways is a subject for future work. 

\section{Concentration Results}  \label{sec:conc} 

\begin{lemma}\label{lem:fmlem} 

Under the assumptions of Theorem \ref{thm:moment}, $\sqrt n | \bar V_{n}(\pi) - V^*(\pi) | = O_p(1).$Under the assumptions of Theorem \ref{thm:regret}, $ \sup \limits_{\pi \in \Pi} \sqrt n | \bar V_n(\pi) - V^*(\pi) | = O_p(1).$

\end{lemma}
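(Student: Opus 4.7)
View $\bar V_n(\pi) = \mathbb E_\pi[n^{-1}\sum_i y(B_i(W_i), P_n(\bm W)) \mid \text{sample}]$ and reduce to the asymptotic expansion already proved for Theorem~\ref{thm:moment}. Under counterfactual $\pi$, the triples $\{(X_i, W_i, B_i(W_i))\}$ are i.i.d.\ because $W_i \mid X_i \sim \text{Bernoulli}(\pi(X_i))$ with $(B_i(1), B_i(0)) \indep W_i \mid X_i$ (Assumption~\ref{as:id}); the empirical cutoff $P_n(\bm W)$ approximately clears the market by Assumption~\ref{as:cutoff}; and $p^*_\pi$ is the unique, well-separated root of $z_\pi$ by Assumption~\ref{as:regulare}.

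Apply the same chain of arguments as in the proof of Theorem~\ref{thm:moment}---the empirical-process equicontinuity of Lemma~\ref{lem:asympt}, a Taylor expansion of $y_\pi, z_\pi$ around $p^*_\pi$, and the asymptotic linearity of Lemma~\ref{lem:pnorm}---to the counterfactual i.i.d.\ sample; nothing in these lemmas uses that the treatment rule is $\bm 1_n$ or $\bm 0_n$. This yields, under the joint law,
\begin{equation*}
\frac{1}{n}\sum_{i=1}^n y(B_i(W_i), P_n(\bm W)) - V^*(\pi) = \frac{1}{n}\sum_{i=1}^n [y(B_i(W_i), p^*_\pi) - V^*(\pi)] - \nu^*_\pi \cdot \frac{1}{n}\sum_{i=1}^n [d(B_i(W_i), p^*_\pi) - s^*] + r_n,
\end{equation*}
with $r_n = o_p(n^{-1/2})$. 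Taking $\mathbb E_\pi[\,\cdot \mid \text{sample}]$ replaces each $y(B_i(W_i), p^*_\pi)$ by $\pi(X_i) y(B_i(1), p^*_\pi) + (1-\pi(X_i)) y(B_i(0), p^*_\pi)$ and similarly for $d$, giving
\begin{equation*}
\bar V_n(\pi) - V^*(\pi) = [Y_{n,\pi}(p^*_\pi) - V^*(\pi)] - \nu^*_\pi Z_{n,\pi}(p^*_\pi) + \mathbb E_\pi[r_n \mid \text{sample}].
\end{equation*}
The i.i.d.\ centered averages on the right are $O_p(n^{-1/2})$ by the CLT (outcomes and allocations are bounded), and $\mathbb E_\pi[r_n \mid \text{sample}] = o_p(n^{-1/2})$ by dominated convergence, using that the Taylor-expansion remainder contributes only $O_p(\|P_n(\bm W) - p^*_\pi\|^2) = O_p(n^{-1})$ with uniformly bounded integrands. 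This delivers the pointwise claim.

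For the uniform claim over $\pi \in \Pi$, upgrade each ingredient to hold uniformly. The VC property of $\Pi$ together with the polynomial uniform entropy of $\mathcal F_y, \mathcal F_{d,j}$ (Assumption~\ref{as:regularo}) gives polynomial uniform entropy for the composite classes $\{(X, B(1), B(0)) \mapsto \pi(X) y(B(1), p) + (1-\pi(X)) y(B(0), p) : \pi \in \Pi, p \in \mathcal S\}$ and its $d$-analogue by standard permanence arguments. The associated maximal inequality then yields $\sup_\pi |Y_{n,\pi}(p^*_\pi) - V^*(\pi)| = O_p(n^{-1/2})$ and the analogous bound for $Z_{n,\pi}$. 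Combined with the well-separation of $p^*_\pi$ in Assumption~\ref{as:regulare}, a standard M-estimator argument upgrades Lemma~\ref{lem:pnorm} to $\sup_\pi \|P_\pi - p^*_\pi\| = O_p(n^{-1/2})$, after which a uniform Taylor expansion disposes of the remainder.

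The main obstacle is precisely this uniform cutoff-convergence step: proving that $n^{-1}\sum_i \pi(X_i) d(B_i(1), p) + (1-\pi(X_i)) d(B_i(0), p) - s^*$ converges to $z_\pi(p)$ uniformly over $(\pi, p) \in \Pi \times \mathcal S$. This is the same technical hurdle flagged in the paper's discussion preceding Theorem~\ref{thm:regret} and is exactly where the VC structure of $\Pi$ combines essentially with the metric-entropy conditions of Assumption~\ref{as:regularo}.
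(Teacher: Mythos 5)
Your overall strategy is sound and lands close to the paper's: both reduce the problem to (i) controlling the gap between the finite-market cutoff $P_\pi$ and $p^*_\pi$, and (ii) a uniform empirical-process bound at fixed cutoffs, with the VC property of $\Pi$ combined with the entropy conditions of Assumption~\ref{as:regularo} doing the work for the uniform version. Your diagnosis of the ``main obstacle'' is exactly right. However, there is one genuine gap, at the step where you pass from $r_n = o_p(n^{-1/2})$ under the joint law of (sample, treatment draw) to $\mathbb E_\pi[r_n \mid \text{sample}] = o_p(n^{-1/2})$. Dominated convergence does not deliver this: convergence in probability of $\sqrt n\, r_n$ to zero, together with boundedness of $r_n$ by a constant, only dominates $\sqrt n\, |r_n|$ by $C\sqrt n$, which is not uniformly integrable, so $\mathbb E[\sqrt n\,|r_n|]$ need not vanish and Markov's inequality cannot be invoked to control the conditional expectation. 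The remainder from the equicontinuity step (Lemma~\ref{lem:asympt}) in particular is only controlled in probability, not in $L^1$. To repair this you need moment-level (maximal-inequality) versions of the empirical-process bounds rather than tail bounds alone; this is precisely what the paper's Lemma~\ref{lem:concbarp} does when it proves the separate statement $\mathbb E_\pi[\|P_\pi - p^*_\pi\|] = O_p(n^{-1/2})$ --- a bound on the conditional expectation, which does not follow from $\|P_\pi - p^*_\pi\| = O_p(n^{-1/2})$ --- via $\mathbb E[\sup_{p}\sqrt n\,\|Z_{n,\pi}(p) - z_\pi(p)\|] = O(1)$ and Markov.

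Separately, your route is heavier than needed. Because the target is only an $O_p(n^{-1/2})$ rate rather than an asymptotically linear expansion, the paper dispenses with the $\nu^*_\pi Z_{n,\pi}(p^*_\pi)$ correction and the equicontinuity lemma entirely: it bounds $|\mathbb E_\pi[y_\pi(P_\pi)] - y_\pi(p^*_\pi)| \le M\,\mathbb E_\pi[\|P_\pi - p^*_\pi\|]$ using only the Lipschitz continuity of $y_\pi$ in $p$, and adds the uniform deviation term $\sup_{\pi,p}|\mathbb E_\pi[Y_{n,\pi}(p)] - y_\pi(p)|$, each of which is $O_p(n^{-1/2})$. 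Adopting that cruder first-order bound would both simplify your argument and eliminate the problematic conditional-expectation step, since the only object whose conditional expectation you then need to control is $\|P_\pi - p^*_\pi\|$ itself.
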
 

\begin{proof} 
First, we make the following expansion: 
\begin{equation*} 
\begin{split} 
 \bar V_n(\pi) - V^*(\pi) = \mathbb E_{\pi}[ Y_{n, \pi}( P_{\pi} ) -  Y_{n, \pi}(p^*_{\pi})  +  Y_{n, \pi}(p^*_{\pi})] - y_{\pi}(p^*_{\pi}). 
\end{split} 
\end{equation*} 

Then, we work with expected outcome functions instead: 
\begin{equation*} 
\begin{split} 
 \sup \limits_{\pi \in \Pi} | \bar V_n(\pi) - V^*(\pi) | \leq \sup \limits_{\pi \in \Pi} | \mathbb E_{\pi} [ y_{\pi}( P_{\pi} ) ]- y_{\pi}(p^*_{\pi}) | + 3 \sup \limits_{\pi \in \Pi, p \in \mathcal S} |  \mathbb E_{\pi} [ Y_{n, \pi}(p)] - y_{\pi}(p) | .  
\end{split} 
\end{equation*} 

For the first term, $\sup \limits_{\pi \in \Pi} |\mathbb E_{\pi} [ y_{\pi}(P_{\pi}) ] - y_{\pi}(p^*_{\pi}) |  \leq M \sup \limits_{\pi \in \Pi} \mathbb E_{\pi}[ || P_{\pi} - p^*_{\pi} ||]  = O_p(n^{-1/2})$, where the uniform bound on $ \mathbb E_{\pi} [|| P_{\pi} - p^*_{\pi}||]$ comes from Lemma \ref{lem:concbarp}, under the assumptions of Theorem \ref{thm:regret}. For the second term, Assumption \ref{as:regularo} indicates that $\mathcal F = \{ (B(1), B(0), X )\mapsto \pi(X) y(B(1), X, p) + ( 1- \pi(X)) y(B(0), X, p) :  p \in \mathcal S \} $ has uniform $\varepsilon$-covering number that is bounded by a polynomial of $(1/\varepsilon)$, and $\Pi$ is a VC class of finite dimension, so by the composition rules of Lemma \ref{lem:comp}, and the tail bound of Lemma \ref{lem:tailb}, we have that $\sup \limits_{\pi \in \Pi, p \in \mathcal S} |  \mathbb E_{\pi} [ Y_{n, \pi}(p)] - y_{\pi}(p) | = O_p(n^{-1/2})$. Under the Assumptions of Theorem \ref{thm:moment}, the same argument can be used to show the bound pointwise in $\pi$, using the pointwise result in Lemma \ref{lem:concbarp} rather than the uniform result.

\end{proof}

\begin{lemma}\label{lem:rmain2} 
Under the assumptions of Theorem \ref{thm:regret}, 

\[  \sup \limits_{\pi \in \Pi} \sqrt n |  \hat V_n(\pi) -  V_n(\pi)  | = O_p(1).  \] 
\end{lemma}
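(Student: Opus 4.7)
The plan is to upgrade, uniformly in $\pi \in \Pi$, the pointwise expansion used in the proof of Theorem \ref{thm:norm}. Write
\begin{align*}
\hat V_n(\pi) - V_n(\pi)
&= \bigl[\Gamma^y_{n,\pi}(\hat P_\pi;\hat\eta_\pi) - \Gamma^y_{n,\pi}(p^*_\pi;\eta^*_\pi) - y_\pi(\hat P_\pi;\hat\eta_\pi) + y_\pi(p^*_\pi;\eta^*_\pi)\bigr] \\
&\quad + \bigl[y_\pi(\hat P_\pi;\hat\eta_\pi) - y_\pi(\hat P_\pi;\eta^*_\pi)\bigr] + \bigl[y_\pi(\hat P_\pi;\eta^*_\pi) - y_\pi(p^*_\pi;\eta^*_\pi)\bigr] \\
&=: T_1(\pi) + T_2(\pi) + T_3(\pi),
\end{align*}
and aim to show $\sup_{\pi \in \Pi}|T_j(\pi)| = O_p(n^{-1/2})$ for each $j$. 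A prerequisite I would establish first is the uniform cutoff rate $\sup_{\pi \in \Pi}\|\hat P_\pi - p^*_\pi\| = O_p(n^{-1/2})$, obtained from the approximate-market-clearing condition of Assumption \ref{as:nuisance}(4) together with a uniform empirical-process bound on $\Gamma^z_{n,\pi}(p;\hat\eta) - z_\pi(p;\eta^*)$ and the Jacobian lower bound of Assumption \ref{as:regularo}(4). The entropy for this uniform argument combines Assumption \ref{as:regularo}(1) on the allocation class, Assumption \ref{as:anuis} on the $\hat\mu^d$ class, and the VC-property of $\Pi$, via the composition rules in Lemma \ref{lem:comp}; this is the uniform-in-$\pi$ analogue of Lemma \ref{lem:Pnorm}.

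For $T_3(\pi)$, a Taylor expansion in $p$ gives $T_3(\pi) = \nabla_p y_\pi(p^*_\pi)^{\top}(\hat P_\pi - p^*_\pi) + O(\|\hat P_\pi - p^*_\pi\|^2)$, with first and second derivatives uniformly bounded by Assumption \ref{as:regularo}(3), and combined with the uniform $n^{-1/2}$-rate on $\hat P_\pi - p^*_\pi$ this yields $\sup_{\pi \in \Pi}|T_3(\pi)| = O_p(n^{-1/2})$. For $T_2(\pi)$, the standard doubly-robust identity shows $|y_\pi(\hat P_\pi;\hat\eta_\pi) - y_\pi(\hat P_\pi;\eta^*_\pi)| \le C\,\|\hat e - e\|_2\,\|\hat\mu_w(\cdot,\hat P_\pi) - \mu_w(\cdot,\hat P_\pi)\|_2$; the cross-product rate condition $\rho_{e,n}\rho_{\mu,n} = o(n^{-1/2})$ in Assumption \ref{as:nuisance}, together with the high-probability guarantee there (which is stated uniformly across the rates $\rho_{\mu,n},\rho_{e,n}$), gives $\sup_{\pi \in \Pi}|T_2(\pi)| = o_p(n^{-1/2})$; this is precisely the role of Lemma \ref{lem:unuis}.

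The main obstacle is $T_1(\pi)$, the localized empirical-process term. I would apply a maximal inequality to the centered class
\[ \mathcal G = \bigl\{(B,W,X) \mapsto \pi(X)\,\Gamma^y_{1}(\cdot;p,\eta) + (1-\pi(X))\,\Gamma^y_{0}(\cdot;p,\eta) : \pi \in \Pi,\ p \in \mathcal S,\ \eta \in \mathcal N\bigr\}, \]
where $\mathcal N$ is a shrinking neighborhood of $\eta^*$ containing $\hat\eta$ with probability $1-o(1)$ (justified by parts (1)–(3) of Assumption \ref{as:nuisance}). The uniform covering number of $\mathcal G$ admits a polynomial bound in $1/\varepsilon$ by combining: the polynomial entropy of $\mathcal F_y$ and $\mathcal F_{d,j}$ from Assumption \ref{as:regularo}(1); the polynomial entropy of $\mathcal F_{\hat\mu}$ and $\mathcal F_{\hat\mu,j}$ from Assumption \ref{as:anuis}; the $\kappa$-overlap guarantee on $\hat e$ from Assumption \ref{as:nuisance}(1), which keeps the inverse-propensity weights uniformly bounded; and the finite VC-dimension of $\Pi$; the composition follows from Lemma \ref{lem:comp}. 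A localized chaining inequality of the form $\mathbb E\sup_{g \in \mathcal G_\delta}|(\mathbb P_n - \mathbb P)g| \lesssim n^{-1/2} J(\delta,\mathcal G_\delta)$, where $\mathcal G_\delta$ collects differences with $L_2$-envelope $\delta_n = \|\hat P_\pi - p^*_\pi\| + \|\hat\eta - \eta^*\| = o_p(1)$ uniformly in $\pi$, then yields $\sup_{\pi \in \Pi}|T_1(\pi)| = o_p(n^{-1/2})$.

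The hard part is ensuring that the three-way indexing by $(\pi, p, \eta)$ can be simultaneously controlled with a single polynomial entropy bound, and then coupling the localization radius to the uniform rates on $\hat P_\pi$ and $\hat\eta$ so that the empirical-process term shrinks faster than $n^{-1/2}$. Once $T_1, T_2, T_3$ are each handled, combining the three bounds with the triangle inequality completes the proof.
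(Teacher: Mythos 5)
Your decomposition of $\hat V_n(\pi)-V_n(\pi)$ into $T_1+T_2+T_3$ is exactly the one the paper uses, and your treatment of $T_2$ (via the product-rate bound of Lemma \ref{lem:unuis}) and $T_3$ (Taylor expansion plus the uniform cutoff rate $\sup_{\pi}\|\hat P_\pi-p^*_\pi\|=O_p(n^{-1/2})$ from Lemma \ref{lem:concp}) matches the paper. Where you diverge is $T_1$: the paper does \emph{not} attempt a localized equicontinuity argument here. It simply bounds
\[
|T_1(\pi)| \;\le\; \sup_{p\in\mathcal S,\,\pi\in\Pi}\bigl|\Gamma^y_{n,\pi}(p;\hat\eta_\pi)-y_\pi(p;\hat\eta_\pi)\bigr| \;+\; \sup_{p\in\mathcal S,\,\pi\in\Pi}\bigl|\Gamma^y_{n,\pi}(p;\eta^*_\pi)-y_\pi(p;\eta^*_\pi)\bigr|,
\]
and each global supremum is $O_p(n^{-1/2})$ by the polynomial covering-number bounds (Assumptions \ref{as:regularo}, \ref{as:anuis}, the VC property of $\Pi$, composition via Lemma \ref{lem:comp}) and the tail bound of Lemma \ref{lem:tailb}; these are Lemmas \ref{lem:empe} and \ref{lem:empt}. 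Since the lemma only asserts $\sqrt n\,|\hat V_n(\pi)-V_n(\pi)|=O_p(1)$ uniformly, $O_p(n^{-1/2})$ for $T_1$ is all that is needed, and the global bound delivers it with no localization. Your proposed route instead tries to prove the strictly stronger claim $\sup_\pi|T_1(\pi)|=o_p(n^{-1/2})$ via chaining over a shrinking class indexed by $(\pi,p,\eta)$; that is the genuinely hard step you flag, it is not established anywhere in the paper even pointwise beyond the Donsker argument of Lemma \ref{lem:asympe} (which is for fixed $\pi$), and it is unnecessary here. I would not call this a fatal gap, because your own setup (the polynomial entropy of the composed class plus a maximal inequality) already yields the sufficient $O_p(n^{-1/2})$ global bound; but as written you have made the proof rest on an unproven and superfluous refinement, so you should replace the localized chaining step for $T_1$ with the crude uniform supremum bound.
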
 

\begin{proof}

 First, we make the following expansion. 
\begin{equation*} 
\begin{split} 
  \hat V_n (\pi)  -  V_n(\pi)  & =    \Gamma^y_{n, \pi} (\hat  P_{\pi}; \hat \eta_{\pi}) - \Gamma^y_{n, \pi} (p^*_{\pi}; \eta^*_{\pi})  \\
 &  = \Gamma^y_{n, \pi}(\hat P_{\pi} ; \hat \eta_{\pi}) - \Gamma^y_{n, \pi}(\hat P_{\pi}; \eta^*_{\pi})   +  \Gamma^y_{n, \pi}(\hat P_{\pi}; \eta^*_{\pi} ) - \Gamma^y_{n, \pi}( p^*_{\pi}; \eta^*_{\pi}) 
\end{split} 
\end{equation*} 
Then, we work with expected outcome functions instead: 
\begin{equation*} 
\begin{split} 
\sup \limits_{\pi \in \Pi} |  \hat V_n (\pi)  - V_n(\pi)   | & \leq \sup \limits_{\pi \in \Pi}  |  y_{\pi}(\hat P_{\pi} ; \hat \eta_{\pi}) - y_{\pi}(\hat P_{\pi} ; \eta^*_{\pi})|   + \sup \limits_{\pi \in \Pi}  | y_{\pi}( \hat P_{\pi}  ; \eta^*_{\pi}) - y_{\pi}(p^*_{\pi}; \eta^*_{\pi})  | \\ & \phantom{=} +  \sup \limits_{p \in \mathcal S, \pi \in \Pi } 2 | \Gamma^y_{n, \pi}(p ; \hat \eta_{\pi}) - y_{\pi} ( p; \hat \eta_{\pi}) |  +  \sup \limits_{p \in \mathcal S, \pi \in \Pi } 2 |\Gamma^y_{n, \pi}(p ; \eta^*_{\pi}) - y_{\pi} (p ; \eta^*_{\pi}) | \\ 
& = O_p(n^{-1/2}) 
\end{split} 
\end{equation*} 

The first term is $O_p(n^{-1/2})$ by Lemma \ref{lem:unuis}.  The rate of the second term comes from a Taylor expansion and the uniform convergence rate for $\hat P_{\pi}$ in Lemma \ref{lem:concp}. The rate of the third term comes from Lemma \ref{lem:empe} and the fourth term comes from Lemma \ref{lem:empt}.

\end{proof} 

\begin{lemma} \label{lem:empt} 
Under the assumptions of Theorem \ref{thm:norm}, 
\begin{equation*} 
\begin{split} 
 \sup \limits_{ p \in \mathcal S} | \Gamma^y_{n, \pi}(p ; \eta^*_{\pi}) - y_{\pi}(p; \eta^*_{\pi})|  &= O_p \left (n^{-1/2} \right), \\ 
 \sup \limits_{ p \in \mathcal S} || \Gamma^z_{n, \pi}(p ; \eta^*_{\pi}) - z_{\pi}(p; \eta^*_{\pi})||  & = O_p \left (n^{-1/2} \right). 
\end{split} 
\end{equation*} 
Under the assumptions of Theorem \ref{thm:regret}, 
\begin{equation*} 
\begin{split} 
 \sup \limits_{ \pi \in \Pi, p \in \mathcal S} | \Gamma^y_{n, \pi}(p ; \eta^*_{\pi}) - y_{\pi}(p; \eta^*_{\pi})|  &= O_p \left (n^{-1/2} \right), \\ 
 \sup \limits_{ \pi \in \Pi, p \in \mathcal S} || \Gamma^z_{n, \pi}(p ; \eta^*_{\pi}) - z_{\pi}(p; \eta^*_{\pi})||  & = O_p \left (n^{-1/2} \right). 
\end{split} 
\end{equation*} 

\end{lemma} 

\begin{proof} 
\begin{equation*} 
\begin{split} 
  \Gamma^y_{n, \pi}(p ; \eta^*_{\pi}) - y_{\pi} ( p; \eta^*_{\pi}) & = \frac{1}{n} \sum \limits_{i=1}^n \pi(X_i)  \Gamma^y_{1i}(p  ; \eta^*_{\pi})  - \mathbb E[\pi(X_i) \mu^y_1( X_i, p)]    \\ & \phantom{=} + \sum \limits_{i=1}^n ( 1- \pi(X_i)) \Gamma^y_{0i}(p; \eta^*_{\pi}) - \mathbb E[ ( 1- \pi(X_i)) \mu^y_0( X_i, p)] 
 \end{split} 
 \end{equation*} 
 
 To bound $\sup \limits_{p \in \mathcal S, \pi \in \Pi} |   \Gamma^y_{n, \pi}(p ; \eta^*_{\pi}) - y_{\pi} ( p; \eta^*_{\pi}) | $, we will just bound the treated terms, since the argument for the control terms is the same. First, we expand the treated terms: 
 
 \begin{equation*} 
 \begin{split} 
 &  \sup \limits _{\pi \in \Pi, p \in \mathcal S}  \left | \frac{1}{n} \sum \limits_{i=1}^n \pi(X_i)  \Gamma^y_{1i}(p  ; \eta^*_{\pi}) - \mathbb E[\pi(X_i) \mu^y_1(X_i, p) ]\right |   \\ &  \leq   \sup \limits _{\pi \in \Pi, p \in \mathcal S}  \left | \frac1n \sum \limits_{i=1}^n  \pi(X_i) \frac{W_i}{e(X_i)} y(B_i(1), X_i, p) - \mathbb E[\pi(X_i) y(B_i(1), X_i, p)] \right |   \\ & \qquad \qquad +  \sup \limits _{\pi \in \Pi, p \in \mathcal S} \left | \frac1n \sum \limits_{i=1}^n \pi(X_i) \mu^y_1(X_i, p) \left ( 1 - \frac{W_i}{e(X_i)} \right ) \right |  \\ 
&  \leq  \sup \limits _{\pi \in \Pi, p \in \mathcal S}  \left |\frac1n \sum \limits_{i=1}^n  \pi(X_i) \frac{W_i}{e(X_i)} y(B_i(1), X_i, p) - \mathbb E[\pi(X_i) y(B_i(1), X_i, p)] \right |  \\& \qquad \qquad +  \sup \limits _{\pi \in \Pi, p \in \mathcal S} \left | \frac1n \sum \limits_{i=1}^n \pi(X_i) \mu^y_1(X_i, p) \left ( 1 - \frac{W_i}{e(X_i)} \right ) \right | 
 \end{split} 
 \end{equation*} 
 
Since $\Pi$ is a VC class of dimension $v$, by Theorem 2.6.7 of \citet{vaart1997weak}, it has uniform covering numbers that are bounded by $C(1/\epsilon)^{2v }$ for some constant $C$. Assumption \ref{as:regularo}  implies that the function class $\mathcal F_y = \{ (B(w), X) \mapsto y(B(w), X, p) :  p \in \mathcal S \}$ has covering numbers that are bounded by $C(1/\epsilon)^{h_y}$. Then, by Lemma \ref{lem:comp}, the function class $\mathcal G = \{ (W, X, B(1))  \mapsto  \pi(X) \frac{W}{e(X)} y(B(1), X, p) : p \in \mathcal S \}$ has covering numbers that are bounded by $C(1/\varepsilon)^V$ for finite $V$ that is of order $v + h_y$. By Lemma \ref{lem:tailb}, we can now conclude that  

\begin{equation} \label{eq:gytb} 
 \sup \limits _{\pi \in \Pi, p \in \mathcal S}  \left | \frac{1}{n} \sum \limits_{i=1}^n \pi(X_i)  \Gamma^y_{1i}(p  ; \eta^*_{\pi}) - \mathbb E[\pi(X_i) \mu^y_1(X_i,p) ]\right |  = O_p\left ( n^{-1/2} \right). 
\end{equation} 
$\mu_1^y(X_i, p)$ is $c'$-Lipschitz in $p$. Since $p \in \mathcal S$, and $\mathcal S$ is a compact subset of $\mathbb R^J$, we can show the function class $\mathcal F_{\mu} =  \{  X \mapsto \mu_1^y(X, p) : p \in \mathcal S\}$ has uniform covering number that is bounded by $C \left ( \frac{1}{\varepsilon} \right)^{J}$ for some constant $C >0$. Theorem 2.7.11 of \citet{vaart1997weak} shows that the $2\epsilon c'$ bracketing number of $\mathcal F_{\mu}$ is bounded by the covering number of $\mathcal S$, which in turn is bounded by $C(1/\epsilon)^{J}$ for some constant $C$ (see, for example, Lemma 2.7 of \citet{sen2018gentle}). Since the $\varepsilon$-uniform covering number of  $\mathcal F_{\mu}$ is bounded by the $2\varepsilon$-bracketing number (see Definition 2.1.6 of \citet{vaart1997weak}), this is enough to bound the uniform covering number of $\mathcal F_{\mu}$. Again using the composition result of Lemma \ref{lem:comp} and Lemma \ref{lem:tailb} (as above), we can now conclude that 

\begin{equation}  \sup \limits _{\pi \in \Pi, p \in \mathcal S} \left | \frac1n \sum \limits_{i=1}^n \pi(X_i) \mu^y_1(X_i, p) \left ( 1 - \frac{W_i}{e(X_i)} \right ) \right |  = O_p \left (n^{-1/2} \right). 
\end{equation}  
With the same argument for the control terms, we have now concluded that: 
  \[ \sup \limits_{p \in \mathcal S, \pi \in \Pi} |   \Gamma^y_{n, \pi}(p ; \eta^*_{\pi}) - y_{\pi} ( p; \eta^*_{\pi})  |  = O_p\left (n^{-1/2} \right). \] 
Using the same argument, we can also bound each of  $ \sup \limits_{p \in \mathcal S, \pi \in \Pi} | \Gamma^z_{j, n, \pi} (p; \eta^*_{\pi}) - z_{j, \pi}(p; \eta^*_{\pi})|$  for $j \in \{ 1, \ldots, J \}$ and, using a union bound also  conclude that: 
  \[ \sup \limits_{p \in \mathcal S, \pi \in \Pi}  || \Gamma^z_{n, \pi}(p; \eta^*_{\pi}) - z_{\pi}(p; \eta^*_{\pi}) ||  = O_p\left (n^{-1/2} \right).  \] 
 For the first part of the Lemma, we can follow the same argument as above without taking the supremum over $\Pi$. 

\end{proof} 

\begin{lemma} \label{lem:asympt} \textbf{Asymptotic Equicontinuity} 

Under the assumptions of Theorem \ref{thm:moment}, 
\begin{equation*}
\begin{split} 
   Y_{n, \pi} (P_{\pi} ) - Y_{n, \pi} (p^*_{\pi}) -   y_{\pi}( P_{\pi}  )  + y_{\pi}(p^*_{\pi})   & = o_p(n^{-1/2}), \\
   Z_{n, \pi} ( P_{\pi}  ) - Z_{n, \pi} (p^*_{\pi}) -  z_{\pi}(P_{\pi} )  + z_{\pi}(p^*_{\pi})    & = o_p(n^{-1/2}),  
\end{split} 
\end{equation*} 

\end{lemma}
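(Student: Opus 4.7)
The plan is to interpret both displays as increments of centered, $\sqrt{n}$-scaled empirical processes indexed by the cutoff $p$, and to apply a maximal inequality over a shrinking deterministic neighborhood of $p^*_{\pi}$. First I would invoke Lemma \ref{lem:concbarp} to obtain $\|P_{\pi} - p^*_{\pi}\| = O_p(n^{-1/2})$, so that for any $\delta_n \downarrow 0$ with $\sqrt{n}\,\delta_n \to \infty$ (for concreteness take $\delta_n = n^{-1/4}$), the event $\mathcal{A}_n = \{\|P_{\pi}-p^*_{\pi}\|\le \delta_n\}$ has probability tending to one. On $\mathcal{A}_n$ it suffices to control $\sqrt{n}$ times the supremum over the deterministic ball of radius $\delta_n$ of the increment $(Y_{n,\pi}-y_{\pi})(p) - (Y_{n,\pi}-y_{\pi})(p^*_{\pi})$, and analogously for $Z$, and to show each is $o_p(1)$.

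The key tool is a maximal inequality for empirical processes indexed by classes with polynomial uniform entropy, e.g.\ Theorem 2.14.2 of \citet{vaart1997weak}. For the $Y$ statement the governing class is
\[
\mathcal{G}_n = \bigl\{(x,b_1,b_0)\mapsto \pi(x)\bigl[y(b_1,p)-y(b_1,p^*_{\pi})\bigr] + (1-\pi(x))\bigl[y(b_0,p)-y(b_0,p^*_{\pi})\bigr] : \|p-p^*_{\pi}\|\le\delta_n\bigr\}.
\]
By Assumption \ref{as:regularo}.1, $\mathcal{F}_y$ has uniform $\varepsilon$-covering numbers bounded by $C(1/\varepsilon)^{h_y}$; since $\mathcal{G}_n$ is built from $\mathcal{F}_y$ by fixed-argument centering and convex combination with the weights $\pi(x),1-\pi(x)$, Lemma \ref{lem:comp} delivers a polynomial uniform covering bound for $\mathcal{G}_n$ as well. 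Outcomes are bounded, so $\mathcal{G}_n$ has a bounded envelope, and Assumption \ref{as:regularo}.2 yields the $L_2(P)$-size control $\sup_{g \in \mathcal{G}_n}\mathbb{E}[g^2] \le 2L\delta_n$.

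Plugging these inputs into the maximal inequality bounds the expected supremum of the centered, $\sqrt{n}$-scaled empirical process over $\mathcal{G}_n$ by the uniform entropy integral $\int_0^{\sqrt{2L\delta_n}}\sqrt{1+\log N(\varepsilon,\mathcal{G}_n,L_2)}\,d\varepsilon$, which under polynomial entropy is of order $\sqrt{\delta_n\log(1/\delta_n)} \to 0$. Markov's inequality then gives $o_p(1)$ for the sup of the $\sqrt{n}$-scaled increment, and dividing by $\sqrt{n}$ yields the claimed $o_p(n^{-1/2})$ bound for $Y_{n,\pi}$. The statement for $Z_{n,\pi}$ follows from the same argument applied coordinate-wise to each $d_j$ together with a union bound over the fixed $J$ coordinates.

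The main technical obstacle is decoupling the random index $P_{\pi}$ from the empirical process, since the maximal inequality applies to deterministic function classes. Conditioning on the high-probability event $\mathcal{A}_n$ and then taking the supremum over the deterministic ball of radius $\delta_n$ cleanly handles this. After that reduction the argument is a standard asymptotic equicontinuity calculation, and the only nontrivial ingredient is the vanishing $L_2$ diameter inherited from the weak continuity of $d$ and $y$ in $p$ given by Assumption \ref{as:regularo}.2.
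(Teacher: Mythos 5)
Your proposal is correct and proves the same underlying fact — stochastic equicontinuity of the empirical process indexed by $p$ over a class with polynomial uniform entropy and vanishing $L_2$ diameter — but it reaches it by a different route. The paper simply notes that the covering-number bound in Assumption \ref{as:regularo} makes the relevant class Donsker, and then invokes Lemma 19.24 of \citet{van2000asymptotic}, which delivers the conclusion using only \emph{consistency} of $P_{\pi}$ together with the $L_2$-continuity condition $\mathbb E[(y(B_i(w),p)-y(B_i(w),p^*_{\pi}))^2]\le L\|p-p^*_{\pi}\|$; no rate for $P_{\pi}$ is needed. You instead localize to a deterministic ball of radius $\delta_n=n^{-1/4}$ using the $O_p(n^{-1/2})$ rate from Lemma \ref{lem:concbarp} and run a local maximal inequality with an explicit entropy-integral computation of order $\sqrt{\delta_n\log(1/\delta_n)}$. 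This is a sound, self-contained derivation of the same equicontinuity statement (your coordinate-wise treatment of $Z_{n,\pi}$ and the convexity bound $\mathbb E[g^2]\le 2L\delta_n$ are both fine), at the cost of using a stronger input (a rate rather than mere consistency) and of a minor citation imprecision: Theorem 2.14.2 of \citet{vaart1997weak} as stated bounds the entropy integral up to the normalized envelope, so to truncate the integral at the $L_2$ diameter $\sqrt{2L\delta_n}$ you really want one of the \emph{local} maximal inequalities (and should check the second-order correction term, which does vanish here). One cosmetic point in your favor: your function class correctly uses the weights $\pi(X_i)$ and $1-\pi(X_i)$ matching the definition of $Y_{n,\pi}$, whereas the paper's displayed class uses $W_i$.
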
 

\begin{proof} 

We prove this for $Y_{n, \pi}(\cdot)$ and the proof is the same for each element of the $J$-length vector $Z_{n, \pi}(\cdot)$. Let $\mathcal F = \{ (X_i, B_i(W_i), W_i) \mapsto  W_i y(B_i(1), X_i, p) + ( 1- W_i) y(B_i(0), X_i, p) : p \in \mathcal S \}$. 

Notice that $\mathbb E[ Y_{n, \pi}(p)]  =  y_{\pi}(p)$.  By Assumption \ref{as:regularo}, for some finite $C$, the $\varepsilon$ covering number of $\mathcal F$ is bounded by $C (1/\varepsilon)^{2 h_y}$, for all $0 < \varepsilon < 1$. So, $\mathcal F$ is a Donsker-class of functions. Since we also have weak continuity of $W_i y(B_i(1), X_i, p) + ( 1- W_i) y(B_i(0), X_i, p)$ in the sense of Assumption \ref{as:regularo}, by Lemma 19.24 of \citet{van2000asymptotic}, we have that $Y_{n, \pi} (P_{\pi}) - Y_{n, \pi} (p^*_{\pi}) -  y_{\pi}(P_{\pi})  + y_{\pi}(p^*_{\pi})   = o_p(n^{-1/2})$. 

\end{proof} 

\begin{lemma} \label{lem:asympe} \textbf{Asymptotic Equicontinuity with Estimated Nuisances} 

Under the assumptions of Theorem \ref{thm:norm}, we have the following asymptotic equicontinuity result: 
\begin{equation*} 
\begin{split} 
\Gamma^y_{n, \pi}(\hat P_{\pi}; \hat \eta_{\pi})  - \Gamma^y_{n, \pi}(p^*_{\pi}; \eta^*_{\pi}) - y_{\pi}(\hat P_{\pi}; \hat \eta_{\pi}) +  y_{\pi}(p^*_{\pi}; \eta^*_{\pi})  & = o_p(n^{-1/2}) ,  \\ 
\Gamma^z_{n, \pi}(\hat P_{\pi}; \hat \eta_{\pi})  - \Gamma^z_{n, \pi}(p^*_{\pi}; \eta^*_{\pi}) - z_{\pi}(\hat P_{\pi}; \hat \eta_{\pi}) +  z_{\pi}(p^*_{\pi}; \eta^*_{\pi})  & = o_p(n^{-1/2}) . 
\end{split} 
\end{equation*} 
\end{lemma}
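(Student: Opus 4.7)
Both equations are structurally identical, so I focus on the outcome line; the demand line follows coordinate-by-coordinate with a union bound over $j \in \{1,\ldots,J\}$. The plan is to split the error as $R_n = R_n^{(a)} + R_n^{(b)}$ with
\[ R_n^{(a)} = \bigl[\Gamma^y_{n,\pi}(\hat P_\pi;\hat\eta_\pi) - y_\pi(\hat P_\pi;\hat\eta_\pi)\bigr] - \bigl[\Gamma^y_{n,\pi}(\hat P_\pi;\eta^*_\pi) - y_\pi(\hat P_\pi;\eta^*_\pi)\bigr], \]
\[ R_n^{(b)} = \bigl[\Gamma^y_{n,\pi}(\hat P_\pi;\eta^*_\pi) - y_\pi(\hat P_\pi;\eta^*_\pi)\bigr] - \bigl[\Gamma^y_{n,\pi}(p^*_\pi;\eta^*_\pi) - y_\pi(p^*_\pi;\eta^*_\pi)\bigr]; \]
so $R_n^{(a)}$ isolates the effect of plugging in estimated nuisances and $R_n^{(b)}$ the effect of the random cutoff at the true nuisances.

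For $R_n^{(b)}$, I would reproduce the argument of Lemma \ref{lem:asympt}: Assumption \ref{as:regularo} supplies a polynomial uniform covering bound for $\{h_i(\cdot;\eta^*_\pi):p\in\mathcal S\}$ (so it is Donsker) together with the $L_2$-modulus $\mathbb{E}[(h_i(p;\eta^*_\pi) - h_i(p';\eta^*_\pi))^2] \leq L\|p-p'\|$, and $\hat P_\pi \to p^*_\pi$ in probability is available from Lemma \ref{lem:concp}. Lemma 19.24 of \citet{van2000asymptotic} then gives $R_n^{(b)} = o_p(n^{-1/2})$.

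The main work is $R_n^{(a)}$. Here the three-way split of Definition \ref{def:ldml} is essential: for each $i$, the plug-in $\hat\eta^{k(i)}$ is built only from data outside $i$'s fold, so conditional on the union $\mathcal{T}$ of all training subsamples the summands of $\Gamma^y_{n,\pi}(p;\hat\eta_\pi) - \Gamma^y_{n,\pi}(p;\eta^*_\pi)$ are i.i.d.\ within each fold with conditional mean $y_\pi(p;\hat\eta_\pi) - y_\pi(p;\eta^*_\pi)$. I would further decompose $R_n^{(a)} = R_n^{(a)}(p^*_\pi) + [R_n^{(a)}(\hat P_\pi) - R_n^{(a)}(p^*_\pi)]$, where $R_n^{(a)}(p)$ is the obvious analogue at a fixed $p$. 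At $p = p^*_\pi$, conditional Chebyshev combined with strong overlap, uniform boundedness, and the rates $\rho_{\mu,n}, \rho_{e,n}$ of Assumption \ref{as:nuisance} controls the $L_2(\mathbb{P}\mid\mathcal{T})$-norm of each summand by $C(\rho_{\mu,n}+\rho_{e,n})$, giving $R_n^{(a)}(p^*_\pi) = O_p\bigl(n^{-1/2}(\rho_{\mu,n}+\rho_{e,n})\bigr) = o_p(n^{-1/2})$. For the increment, I would apply Dudley's entropy bound (Theorem 2.14.1 of \citet{vaart1997weak}) conditionally on $\mathcal{T}$ to $\{h_i(\cdot;\hat\eta_\pi) - h_i(\cdot;\eta^*_\pi):p\in\mathcal S\}$, which inherits polynomial entropy from Assumption \ref{as:regularo} via the composition rules and, thanks to the $L_2$-Lipschitz property, has $L_2$-envelope $O(\sqrt{\|p-p^*_\pi\|})$ for increments at scale $\|p-p^*_\pi\|$; together with the $O_p(n^{-1/2})$ concentration of $\hat P_\pi - p^*_\pi$, this yields $R_n^{(a)}(\hat P_\pi) - R_n^{(a)}(p^*_\pi) = O_p(n^{-3/4}) = o_p(n^{-1/2})$.

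The hard part is exactly this increment: since $\hat P_\pi$ is built from the full sample it cannot be pulled out of the empirical process, and one must push a maximal-inequality argument through the cross-fit construction, simultaneously exploiting the shrinking envelope from Assumption \ref{as:nuisance} and the $L_2$-Lipschitz modulus from Assumption \ref{as:regularo}. Once both pieces of $R_n^{(a)}$ together with $R_n^{(b)}$ are $o_p(n^{-1/2})$, the outcome equation is established, and the demand equation follows identically coordinate-wise using the covering bound on $\mathcal F_{d,j}$ in Assumption \ref{as:regularo} and a union bound over $j\in\{1,\ldots,J\}$.
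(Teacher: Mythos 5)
Your proposal is correct and rests on the same engine as the paper's proof --- cross-fitting so that within each fold the estimated nuisances can be conditioned on and treated as fixed, polynomial uniform covering numbers from Assumption \ref{as:regularo} to get a (conditionally) Donsker class, and the $L_2$ weak-continuity modulus to control increments --- but you organize the terms differently. The paper applies Lemma 19.24 of \citet{van2000asymptotic} once per fold to the joint perturbation from $(p^*_{\pi},\eta^*_{\pi})$ to $(\hat P_{\pi},\hat\eta^k_{\pi})$, verifying only that the class indexed by $p$ (with $\hat\eta^k$ frozen) is Donsker and weakly continuous, and then converts conditional to unconditional convergence by bounded convergence. You instead split the remainder into a cutoff-perturbation piece $R_n^{(b)}$ at the true nuisances (handled exactly as in Lemma \ref{lem:asympt}) and a nuisance-perturbation piece $R_n^{(a)}$, which you further split into a fixed-$p$ term controlled by conditional Chebyshev with the shrinking $L_2$ envelope from Assumption \ref{as:nuisance}, and an increment term controlled by a conditional Dudley bound over a shrinking $p$-neighborhood. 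What your route buys is that it makes explicit the $L_2$-shrinkage of the nuisance perturbation that the paper's single invocation of Lemma 19.24 needs implicitly (the estimated and true score functions must be close in $L_2$ for asymptotic equicontinuity to bite), and it gives a quantitative $O_p(n^{-3/4})$ (up to logs) for the increment rather than a bare $o_p(n^{-1/2})$; the cost is length. Two small points: your envelope bound $C(\rho_{\mu,n}+\rho_{e,n})$ for the summands of $R_n^{(a)}(p^*_{\pi})$ should also carry a $\rho_{\theta,n}$ term, since $\hat\mu$ is fitted at the first-stage cutoff $\tilde P_{\pi}$ rather than at $p^*_{\pi}$ (Assumption \ref{as:nuisance} gives $\rho_{\mu,n}+\rho_{\theta,n}=o(1)$, so the conclusion is unaffected); and note that the entropy in $p$ of the $\hat\eta$-class needs no appeal to Assumption \ref{as:anuis} here because $\hat\mu$ enters at the single fixed point $\tilde P_{\pi}$ and is constant in $p$, which your composition-rule remark should make explicit.
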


\begin{proof} 
We prove this for $\Gamma^y_{n, \pi}(\cdot)$ and the proof is the same for $\Gamma^z_{n, \pi}(\cdot)$. We can decompose the empirical average by data-splitting, so we can treat the estimated nuisances as fixed: 
\begin{equation*} 
\begin{split} & \Gamma^y_{n, \pi}(\hat P_{\pi}; \hat \eta_{\pi})  - \Gamma^y_{n, \pi}(p^*_{\pi}; \eta^*_{\pi}) - y_{\pi}(\hat P_{\pi}; \hat \eta_{\pi}) +  y_{\pi}(p^*_{\pi}, \eta^*_{\pi}) \\ & =  \sum \limits_{k=1}^K \frac{n_k}{ n} \frac{1}{n_k} \sum \limits_{i \in I_k} [\pi(X_i) (\Gamma^y_{1i}(\hat P_{\pi}; \hat \eta^k_{\pi}) - \Gamma^y_{1i}(p^*_{\pi}; \eta^*_{\pi}))+ ( 1- \pi(X_i) )(\Gamma^y_{0i}(\hat P_{\pi}; \hat \eta^k_{\pi}) - \Gamma^y_{0i}(p^*_{\pi}; \eta^*_{\pi}) )]  \\ & \phantom{=} +   \sum \limits_{k=1}^K  y_{\pi}(p^*_{\pi}, \eta^*_{\pi}) -  y_{\pi}(\hat P_{\pi}; \hat \eta^k_{\pi}) \\ 
& = \sum \limits_{k=1}^K \frac{n_k}{n} R^k_n, 
\end{split} 
\end{equation*} 
where $R^k_n = \frac{1}{n_k} \sum \limits_{i \in I_k} [\pi(X_i) (\Gamma^y_{1i}(\hat P_{\pi}; \hat \eta^k_{\pi}) - \Gamma^y_{1i}(p^*_{\pi}; \eta^*_{\pi}))+ ( 1- \pi(X_i) )(\Gamma^y_{0i}(\hat P_{\pi}; \hat \eta^k_{\pi}) - \Gamma^y_{0i}(p^*_{\pi}; \eta^*_{\pi})) ] +  y_{\pi}(p^*_{\pi}, \eta^*_{\pi}) -  y_{\pi}(\hat P_{\pi}; \hat \eta^k_{\pi}).$
For the average within a single split, since the nuisance functions are estimated on a different split of data, we can treat them as fixed. 

\[ \mathcal F_{\hat \eta_k} = \{ (X_i, B_i(W_i), W_i) \mapsto  \pi(X_i) \Gamma^y_{1i}(p ; \hat \eta^k_{\pi})  +  ( 1- \pi(X_i)) \Gamma^y_{0i}(p ; \hat \eta^k_{\pi}) : p \in \mathcal S \} \] 

By Assumption \ref{as:regularo} for some finite $C$, the $\varepsilon$ covering number of $\mathcal F_{\hat \eta_k} $ is bounded by $C (1/\varepsilon)^{2 h_y}$, for all $0 < \varepsilon < 1$. This means that $\mathcal F_{\hat \eta_k}$ is a Donsker class of functions. Since we also have weak continuity of $y(B_i(w), X_i, p)$ in the sense of Assumption \ref{as:regularo}, by Lemma 19.24 of \citet{van2000asymptotic}, for all $t > 0$,
we have $\lim \limits_{n \to \infty} P( \sqrt n R^k_{n}  > t | \hat \eta^k) \to 0$. Conditional convergence in probability implies unconditional convergence in probability, since $P( \sqrt n R^k_{n} > t ) = \mathbb E [ P( \sqrt n R^k_{n}  > t | \hat \eta^k) ]$, and the probability is bounded so we can swap the limit and the expectation. This means $R^k_{n} = o_p(n^{-1/2})$. 

Since this argument applies to each split of the data, and there is a finite number of splits, we have now shown that 

\[ \Gamma^y_{n, \pi}(\hat P_{\pi}; \hat \eta_{\pi})  - \Gamma^y_{n, \pi}(p^*_{\pi}; \eta^*_{\pi}) - y_{\pi}(\hat P_{\pi}; \hat \eta_{\pi}) +  y_{\pi}(p^*_{\pi}, \eta^*_{\pi})   = o_p(n^{-1/2}). \] The proof follows the same argument for $\Gamma^z_{n, \pi}(\cdot)$.

\end{proof} 

\begin{lemma} \label{lem:empe} 

Under the assumptions of Theorem \ref{thm:norm}, 
\begin{equation*} 
\begin{split} 
 \sup \limits_{ p \in \mathcal S} | \Gamma^y_{n, \pi}(p ; \hat \eta_{\pi}) - y_{\pi}(p, \hat \eta_{\pi})|  &= O_p \left (n^{-1/2} \right), \\ 
 \sup \limits_{ p \in \mathcal S} || \Gamma^z_{n, \pi}(p ; \hat \eta_{\pi}) - z_{\pi}(p, \hat  \eta_{\pi})||  & = O_p \left (n^{-1/2} \right), 
\end{split} 
\end{equation*} 

Under the assumptions of Theorem \ref{thm:regret}, 
\begin{equation*} 
\begin{split} 
 \sup \limits_{\pi \in \Pi, p \in \mathcal S} | \Gamma^y_{n, \pi}(p ; \hat \eta_{\pi}) - y_{\pi}(p, \hat \eta_{\pi})|  &= O_p \left (n^{-1/2} \right), \\ 
 \sup \limits_{\pi \in \Pi, p \in \mathcal S} || \Gamma^z_{n, \pi}(p ; \hat \eta_{\pi}) - z_{\pi}(p, \hat  \eta_{\pi})||  & = O_p \left (n^{-1/2} \right), 
\end{split} 
\end{equation*} 

\end{lemma}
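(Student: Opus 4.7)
The plan is to adapt the argument of Lemma \ref{lem:empt} to the estimated-nuisance setting by conditioning on the training data via the three-way splitting in Definition \ref{def:ldml}, and then reducing the problem to a covering-number bound plus the maximal inequality of Lemma \ref{lem:tailb}. Concretely, write
\[
\Gamma^y_{n,\pi}(p;\hat\eta_\pi) - y_\pi(p;\hat\eta_\pi) \;=\; \sum_{k=1}^{K}\frac{n_k}{n}\,R^k_n(p), \qquad R^k_n(p) = \frac{1}{n_k}\sum_{i\in\mathcal I_k}\bigl[\Gamma^y_i(p;\hat\eta^k_\pi) - \mathbb{E}_T\Gamma^y_i(p;\hat\eta^k_\pi)\bigr],
\]
and analogously for $\Gamma^z_{n,\pi}$. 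Since $\hat\eta^k$ is fit on observations outside fold $k$, conditional on the training data each summand in $R^k_n(p)$ is i.i.d.\ with the nuisance functions fixed, so the problem reduces to bounding a standard empirical process indexed by $p$ (and, in the regret-theorem version, by $\pi$).

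The second step is to verify the covering-number hypothesis of Lemma \ref{lem:tailb} for the conditional class
\[
\mathcal G_{\hat\eta^k} = \Bigl\{(X,B,W)\mapsto \pi(X)\,\Gamma^y_{1i}(p;\hat\eta^k_\pi) + (1-\pi(X))\,\Gamma^y_{0i}(p;\hat\eta^k_\pi):\, p\in\mathcal S,\ \pi\in\Pi\Bigr\}.
\]
Under Assumption \ref{as:nuisance}(1), $\hat e^k$ is bounded in $(\kappa,1-\kappa)$, so the inverse-propensity weights are uniformly bounded, and under Assumption \ref{as:nuisance}(2), $\hat\mu$ is uniformly bounded by $M$. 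For the pointwise-in-$\pi$ statement (Theorem \ref{thm:norm} setting), note that $\hat\mu^{y,k}_w(X_i)$ is evaluated at the single first-stage cutoff $\tilde P_\pi$ and therefore does not vary with $p$; the only $p$-dependence sits in $y(B_i,p)$, which has polynomial uniform covering number by Assumption \ref{as:regularo}(1). Composing via Lemma \ref{lem:comp} gives the required polynomial bound on the covering number of $\mathcal G_{\hat\eta^k}$ as $p$ ranges over $\mathcal S$. For the uniform-in-$\pi$ statement (Theorem \ref{thm:regret} setting), the class $\{X\mapsto \hat\mu^y(X,p):p\in\mathcal S\}$ is controlled by Assumption \ref{as:anuis}, and $\Pi$ is a VC-class of dimension $v$ so has polynomial covering number by Theorem 2.6.7 of \citet{vaart1997weak}; again Lemma \ref{lem:comp} gives a joint polynomial bound indexed by $(p,\pi)$.

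With this bound in hand, Lemma \ref{lem:tailb} yields, conditional on the training data, $P(\sqrt{n_k}\,\sup_{p,\pi}|R^k_n(p)|>t\mid\hat\eta^k)\le g(t)$ for a function $g$ with $g(t)\to 0$ as $t\to\infty$ that does not depend on $\hat\eta^k$ (the bound depends on the nuisances only through the uniformly valid constants $\kappa$, $M$ and the covering-number exponents). Taking expectation over the training data and applying bounded convergence, the conditional statement upgrades to the unconditional $\sup_{p,\pi}|R^k_n(p)| = O_p(n^{-1/2})$; summing over the finite number of folds $K$ gives the statement for $\Gamma^y_{n,\pi}$. The argument for $\Gamma^z_{n,\pi}$ is identical applied to each of the $J$ coordinates, together with a union bound over coordinates.

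The main obstacle I expect is the bookkeeping for the covering-number composition in the uniform-in-$\pi$ case: one must argue that $\hat\mu^{y,k}_w(X_i)$ evaluated at the random first-stage $\tilde P_\pi$ inherits a polynomial covering number from the covering numbers of $\{X\mapsto \hat\mu(X,p):p\in\mathcal S\}$ (Assumption \ref{as:anuis}) and of $\Pi$ (VC-class), rather than blowing up with the composition. Once the uniform envelope and covering-number control are established with constants that do not depend on the realization of $\hat\eta^k$, the remaining steps are routine applications of Lemmas \ref{lem:tailb} and \ref{lem:comp} mirroring the proof of Lemma \ref{lem:empt}.
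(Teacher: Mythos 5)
Your proposal is correct and follows essentially the same route as the paper's proof: the same fold-wise decomposition conditioning on the training data, the same use of the uniform bounds $\kappa$ and $M$ from Assumption \ref{as:nuisance} to make the tail bound independent of the realized nuisances, the same observation that in the pointwise case the only $p$-dependence enters through $y(B_i,p)$ since $\hat\mu$ is frozen at $\tilde P_\pi$, and the same reliance on Assumption \ref{as:anuis} plus the VC property of $\Pi$ with Lemmas \ref{lem:comp} and \ref{lem:tailb} in the uniform case. The only cosmetic difference is that the paper splits each fold's contribution into the IPW term and the augmentation term and treats the latter by a direct CLT in the pointwise case, whereas you bound the combined class in one pass; this changes nothing of substance.
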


\begin{proof} 

We start with the second part of the Lemma. We can write these terms as a weighted sum of averages across each of the splits. Let $I_k$ be the indexes of observations in split $k$ and $\hat \eta_{\pi}^k$ the nuisance functions estimated on observations outside the split. 
\begin{equation} \label{eq:abedw}
\begin{split} 
  \Gamma^y_{n, \pi}(p ;  \hat \eta_{\pi}) - y_{\pi} ( p;  \hat \eta_{\pi}) & = \frac{1}{n} \sum \limits_{i=1}^n \pi(X_i)  \Gamma^y_{1i}(p  ;  \hat \eta_{\pi})  - \mathbb E_{T}[ \pi(X_i)  \Gamma^y_{1i}(p  ;  \hat \eta_{\pi}) ]    \\ & \phantom{=} + \sum \limits_{i=1}^n ( 1- \pi(X_i)) \Gamma^y_{0i}(p;  \hat \eta_{\pi}) - \mathbb E_{T}[ ( 1- \pi(X_i)) \Gamma^y_{0i}(p;  \hat \eta_{\pi})]  \\ 
  & =  \sum \limits_{k = 1}^K \frac{n_k}{n} \frac{1}{n_k} \sum \limits_{i \in I_k}   \pi(X_i)  \Gamma^y_{1i}(p  ;  \hat \eta_{\pi}^k)  - \mathbb E_{T}[ \pi(X_i)  \Gamma^y_{1i}(p  ;  \hat \eta_{\pi}^k) ]    \\ & \phantom{=} +  \sum \limits_{k = 1}^K \frac{n_k}{n} \frac{1}{n_k}  \sum \limits_{i \in I_k} ( 1- \pi(X_i)) \Gamma^y_{0i}(p;  \hat \eta_{\pi}^k) - \mathbb E_{T}[ ( 1- \pi(X_i)) \Gamma^y_{0i}(p;  \hat \eta_{\pi}^k)]  
  \end{split} 
  \end{equation} 
We show the details for the treated terms only since the argument for the control terms is the same. Note to keep the notation manageable, we drop the data-splitting notation for the estimated nuisance functions, but recall that there is three-way data-splitting, so we can treat the data in split $k$, $\tilde P_{\pi}$ and $\hat e(\cdot), \hat \mu(\cdot)$ as all mutually independent. 
For  the average within a single split, we have the below expansion.  
 \begin{equation*} 
 \begin{split} 
& \sup \limits_{\pi \in \Pi, p \in \mathcal S} \left |   \frac{1}{n_k} \sum \limits_{i \in I_k}  \pi(X_i)  \Gamma^y_{1i}(p  ;  \hat \eta_{\pi})  - \mathbb E_{T}[ \pi(X_i)  \Gamma^y_{1i}(p  ;  \hat \eta^k_{\pi}) ]  \right | \\ 
    & \leq   \sup \limits _{\pi \in \Pi, p \in \mathcal S}  \left | \frac{1}{n_k} \sum \limits_{i \in I_k}  \pi(X_i) \frac{W_i}{ \hat e(X_i)} y(B_i(1), X_i, p) - \mathbb E_T\left [ \frac{W_i}{ \hat e(X_i)} \pi(X_i) y(B_i(1), X_i, p) \right] \right |   \\ & \phantom{=} 
    +  \sup \limits _{\pi \in \Pi} \left | \frac{1}{n_k} \sum \limits_{i \in I_k}  \pi(X_i) \hat \mu^y_1(X_i, \tilde P_{\pi}) \left ( 1 - \frac{W_i}{\hat e(X_i)} \right ) - \mathbb E_{T} \left [  \pi(X_i) \hat \mu^y_1(X_i, \tilde P_{\pi} ) \left ( 1 - \frac{W_i}{\hat e(X_i)} \right ) \right ]  \right |  \\ 
    & \stackrel{(1)}{\leq}  O_p(n^{-1/2}) + \sup \limits _{\pi \in \Pi, p \in \mathcal S} \left | \frac{1}{n_k} \sum \limits_{i \in n_k} \pi(X_i) \hat \mu^y_1(X_i, p) \left ( 1 - \frac{W_i}{\hat e(X_i)} \right ) - \mathbb E_{T} \left [  \pi(X_i) \hat \mu^y_1(X_i, p ) \left ( 1 - \frac{W_i}{\hat e(X_i)} \right ) \right ]  \right | \\
    & \stackrel{(2)}{=} O_p(n^{-1/2}) 
 \end{split} 
 \end{equation*} 
 
For term that we handle in (1), we can condition on $\hat e(\cdot)$ and treat it as fixed. Conditional on $\hat e(\cdot)$, this term is mean-zero. Then, because of the uniform overlap condition, the tail bound for this term constructed in the same way as in \eqref{eq:gytb} does not depend on the estimated part of the nuisance function, so unconditionally, we also have that the term is $O_p(n^{-1/2})$. 

For the next term, we rely on the additional assumption in Assumption \ref{as:anuis} and the assumption that estimated conditional mean functions are uniformly bounded. Again, we can use the composition result and tail bound in  Lemma \ref{lem:comp} and Lemma \ref{lem:tailb} to construct a tail bound for the term that does not depend on the specific instance of the estimated function.

This argument applies for each of the $K$ splits, and can be applied also to the control terms, and to each of the components of $ \Gamma^z_{n, \pi}(\cdot)$, so we can now conclude that: 
\begin{equation*} 
\begin{split} 
 \sup \limits_{\pi \in \Pi, p \in \mathcal S} | \Gamma^y_{n, \pi}(p ; \hat \eta_{\pi}) - y_{\pi}(p; \hat \eta_{\pi})|  &= O_p \left (n^{-1/2} \right), \\ 
 \sup \limits_{\pi \in \Pi, p \in \mathcal S} || \Gamma^z_{n, \pi}(p ; \hat \eta_{\pi}) - z_{\pi}(p; \hat  \eta_{\pi})||  & = O_p \left (n^{-1/2} \right). 
\end{split} 
\end{equation*} To finish the proof, without using Assumption \ref{as:anuis}, then under the assumptions of Theorem \ref{thm:norm}, we have 
\begin{equation*} 
\begin{split} 
& \sup \limits_{p \in \mathcal S} \left |   \frac{1}{n_k} \sum \limits_{i \in I_k}  \pi(X_i)  \Gamma^y_{1i}(p  ;  \hat \eta^k_{\pi})  - \mathbb E_{T}[ \pi(X_i)  \Gamma^y_{1i}(p  ;  \hat \eta^k_{\pi}) ]  \right | \\ 
    & \leq   \sup \limits _{p \in \mathcal S}  \left | \frac{1}{n_k} \sum \limits_{i \in I_k}  \pi(X_i) \frac{W_i}{ \hat e(X_i)} y(B_i(1), X_i, p) - \mathbb E_T\left [ \frac{W_i}{ \hat e(X_i)} \pi(X_i) y(B_i(1), X_i, p) \right] \right |   \\ & \phantom{=} 
    + \left | \frac{1}{n_k} \sum \limits_{i \in I_k} \pi(X_i) \hat \mu^y_1(X_i, \tilde P_{\pi}) \left ( 1 - \frac{W_i}{\hat e(X_i)} \right ) - \mathbb E_{T} \left [  \pi(X_i) \hat \mu^y_1(X_i, \tilde P_{\pi} ) \left ( 1 - \frac{W_i}{\hat e(X_i)} \right ) \right ]  \right |  \\ 
    & =  O_p(n^{-1/2}), 
    \end{split} 
\end{equation*} 
where the first term is $O_p(n^{-1/2})$ by the same argument as above (when we also take the supremum over $\pi \in \Pi$). Conditional on the estimated nuisances, the second term is mean-zero with finite variance. By the CLT, then conditional on estimated nuisances, it is $O_p(n^{-1/2})$, where we can choose constants in the $O_p(n^{-1/2})$ definition that are uniform over all possible instances of the nuisance parameters, by the uniform boundedness of the estimated nuisances. So, the second term is $O_p(n^{-1/2})$ as well. 

By \eqref{eq:abedw}, (and since the same argument applies to $ \Gamma^z_{n, \pi}(\cdot)$), we have now shown that : 
\begin{equation*} 
\begin{split} 
 \sup \limits_{ p \in \mathcal S} | \Gamma^y_{n, \pi}(p ; \hat \eta_{\pi}) - y_{\pi}(p; \hat \eta_{\pi})|  &= O_p \left (n^{-1/2} \right), \\ 
 \sup \limits_{ p \in \mathcal S} || \Gamma^z_{n, \pi}(p ; \hat \eta_{\pi}) - z_{\pi}(p; \hat  \eta_{\pi})||  & = O_p \left (n^{-1/2} \right). 
\end{split} 
\end{equation*} 

\end{proof} 

\begin{lemma} \label{lem:unuis} \textbf{Uniform Nuisance Convergence.}

Under the assumptions of Theorem \ref{thm:norm}, there is a finite $C_1 > 0$ and $C_2 >0$ such that with probability at least $1 - o(1)$, 
\begin{equation*} 
\begin{split} 
\sup \limits_{\pi \in \Pi} \sqrt n  || z_{\pi}(\hat P_{\pi} ; \eta^*_{\pi}) - z_{\pi}(\hat P_{\pi}; \hat \eta_{\pi}) ||  & \leq  C_1 \sqrt n \sup \limits_{\pi \in \Pi} || \hat P_{\pi} - p^*_{\pi} ||   \rho_{e, n}   +  \sqrt n \frac{1}{\kappa} \rho_{\mu, n}\rho_{e, n} +  \sqrt n \frac{C_1}{\kappa} \rho_{e,n} \rho_{\theta, n}, \\ 
 \sup \limits_{\pi \in \Pi}  \sqrt n |  y_{\pi}(\hat P_{\pi} ; \hat \eta_{\pi}) - y_{\pi}(\hat P_{\pi} ; \eta^*_{\pi})|  & \leq C_2 \sqrt n \sup \limits_{\pi \in \Pi} || \hat P_{\pi} - p^*_{\pi} ||   \rho_{e, n}   +  \sqrt n \frac{1}{\kappa} \rho_{\mu, n}\rho_{e, n} +  \sqrt n \frac{C_2}{\kappa} \rho_{e,n} \rho_{\theta, n}. \\ 
\end{split} 
\end{equation*} 

This type of inequality also holds pointwise, in that for the same $C_1$ and $C_2$, with probability at least $1 - o(1)$, for each $\pi \in \Pi$, we have: 
\begin{equation*} 
\begin{split} 
 \sqrt n  || z_{\pi}(\hat P_{\pi} ; \eta^*_{\pi}) - z_{\pi}(\hat P_{\pi}; \hat \eta_{\pi}) ||  & \leq  C_1 \sqrt n || \hat P_{\pi} - p^*_{\pi} ||   \rho_{e, n}   +  \sqrt n \frac{1}{\kappa} \rho_{\mu, n}\rho_{e, n} +  \sqrt n \frac{C_1}{\kappa} \rho_{e,n} \rho_{\theta, n}, \\ 
 \sqrt n |  y_{\pi}(\hat P_{\pi} ; \hat \eta_{\pi}) - y_{\pi}(\hat P_{\pi} ; \eta^*_{\pi})|  & \leq  C_2 \sqrt n  || \hat P_{\pi} - p^*_{\pi} ||   \rho_{e, n}   +  \sqrt n \frac{1}{\kappa} \rho_{\mu, n}\rho_{e, n} +  \sqrt n \frac{C_2}{\kappa} \rho_{e,n} \rho_{\theta, n}. \\ 
\end{split} 
\end{equation*}

\end{lemma}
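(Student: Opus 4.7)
The plan is to exploit the classical doubly-robust product structure: after an iterated-expectations step the difference $y_\pi(p;\hat\eta)-y_\pi(p;\eta^*)$ becomes a product of the propensity-score error and the conditional-mean error, which I then bound via Cauchy-Schwarz using the rates in Assumption~\ref{as:nuisance} together with the Lipschitz smoothness of the conditional means in $p$ from part~3 of Assumption~\ref{as:regularo}. The argument for $z_\pi$ is identical, applied coordinate-by-coordinate and recombined with $\|v\|_2 \le \sqrt J\max_j |v_j|$.

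Using unconfoundedness from Assumption~\ref{as:id} and the tower property, conditional on the training-fold nuisances, a direct computation yields
\[
y_\pi(p;\hat\eta)-y_\pi(p;\eta^*) = \mathbb{E}_T\!\left[\pi(X_i)\tfrac{\hat e(X_i)-e(X_i)}{\hat e(X_i)}\bigl(\hat\mu^y_1(X_i)-\mu^y_1(X_i,p)\bigr)\right] + \mathbb{E}_T\!\left[(1-\pi(X_i))\tfrac{\hat e(X_i)-e(X_i)}{1-\hat e(X_i)}\bigl(\hat\mu^y_0(X_i)-\mu^y_0(X_i,p)\bigr)\right],
\]
with the analogous identity for $z_\pi$ (replace $y$ by $d$ componentwise). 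Cauchy-Schwarz combined with the overlap bound $\hat e \in (\kappa,1-\kappa)$ then gives $|y_\pi(p;\hat\eta)-y_\pi(p;\eta^*)| \leq \tfrac{1}{\kappa}\|\hat e-e\|_2\sum_{w\in\{0,1\}}\|\hat\mu^y_w-\mu^y_w(\cdot,p)\|_2$. Because $\hat\mu^y_w$ was trained to target $\mu^y_w(\cdot,\tilde P_\pi)$ rather than $\mu^y_w(\cdot,\hat P_\pi)$, I split by the triangle inequality, bounding the first piece by $\rho_{\mu,n}$ via Assumption~\ref{as:nuisance} and the second piece by the $c'$-Lipschitz property in $p$ from part~3 of Assumption~\ref{as:regularo}, so that $\|\mu^y_w(\cdot,\tilde P_\pi)-\mu^y_w(\cdot,\hat P_\pi)\|_2 \le c'\bigl(\rho_{\theta,n}+\|\hat P_\pi-p^*_\pi\|\bigr)$.

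Multiplying by $\sqrt n$ and collecting yields the pointwise bound with $C_2 = 2c'/\kappa$. Since Assumption~\ref{as:nuisance} delivers all three nuisance rates uniformly in $\pi\in\Pi$ with exponentially small failure probability, a finite union bound lets me take the supremum over $\pi$ and replace $\|\hat P_\pi-p^*_\pi\|$ by $\sup_\pi\|\hat P_\pi-p^*_\pi\|$, giving the uniform bound. The same derivation applied coordinatewise to each of the $J$ components of $z_\pi$ and recombined gives $C_1 = \sqrt J\, c'/\kappa$.

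The only mildly subtle point, and the reason for the extra $\rho_{e,n}\|\hat P_\pi-p^*_\pi\|$ term in the bound, is that the conditional-mean nuisances are trained at the first-stage cutoff $\tilde P_\pi$ while the scores are evaluated at $\hat P_\pi$; the Lipschitz step above handles this mismatch cleanly, so no additional technical machinery beyond Assumptions~\ref{as:regularo} and~\ref{as:nuisance} is needed.
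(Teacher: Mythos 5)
Your proof is correct and follows essentially the same route as the paper's: both exploit the doubly-robust product structure via iterated expectations (your single product identity is just a tidier packaging of the paper's three-term decomposition in which the term involving the true propensity score vanishes), then apply Cauchy--Schwarz with the overlap bound and use the $c'$-Lipschitz property of $\mu^y_w,\mu^d_w$ in $p$ to handle the mismatch between $\tilde P_\pi$, $\hat P_\pi$, and $p^*_\pi$, producing the same three summands. The only (shared) looseness is the passage from the pointwise-in-$\pi$ failure events of Assumption~\ref{as:nuisance} to the supremum over a possibly infinite $\Pi$, which the paper glosses over in exactly the same way.
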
 

\begin{proof} 

We prove this for $z_{\pi}(\cdot)$ and the argument for $y_{\pi}(\cdot)$ is the same. 
\begin{equation*} 
\begin{split} 
 z_{\pi}(\hat P_{\pi} ; \eta^*_{\pi}) - z_{\pi}(\hat P_{\pi} ; \hat \eta_{\pi}) & = \mathbb E_{T}[\pi(X_i)( \Gamma^z_{1, i}(\hat P_{\pi}; \eta^*_{\pi} ) - \Gamma^z_{1, i}(\hat P_{\pi} ; \hat \eta_{\pi}))] + \mathbb E_{T}[\pi(X_i) (\Gamma^z_{0, i}(\hat P_{\pi} ;  \eta^*_{\pi}) -  \Gamma^z_{0, i}(p;  \hat \eta_{\pi})) ].
 \end{split} 
 \end{equation*} 
 We bound the treated terms and the argument for the control terms is the same. 
 \begin{equation} 
 \label{eqn:zetae} 
 \begin{split} 
 \mathbb E_{T}[\pi(X_i)( \Gamma^z_{1, i}(\hat P_{\pi}; \eta^*_{\pi} ) - \Gamma^z_{1, i}(\hat P_{\pi} ; \hat \eta_{\pi}))]  &= 
 \mathbb E_{T} \left  [  \pi(X_i) (d(B_i(1),X_i, p)  - \mu^d_1(X_i, p^*_{\pi})) \left (\frac{W_i}{ \hat e(X_i)} - \frac{W_i}{e(X_i)} \right ) \right ]_{p = \hat P_{\pi}}  \\ 
& \phantom{=} + \mathbb E_{T} 
\left [ \pi(X_i) (\hat \mu^d_1(X_i,  \tilde P_{\pi} ) - \mu^d_1(X_i, p^*_{\pi}))  \left ( \frac{W_i}{e(X_i)} - \frac{W_i}{\hat e(X_i)}  \right)   \right ]
\\ & \phantom{=} + \mathbb E_{T} \left[   \pi(X_i) (\hat \mu^d_1(X_i,  \tilde P_{\pi}) - \mu^d_1(X_i, p^*_{\pi}) ) \left (  1- \frac{W_i}{e(X_i)} \right )  \right ]
\end{split} 
\end{equation} 

The last term is equal to zero. For the first term, we can bound the absolute value of each element of the vector. With probability at least $1 - o(1)$, 

\begin{align} 
 &  \left |\mathbb E_{T} \left  [  \pi(X_i) (d_j(B_i(1), X_i, p)  - \mu^d_{1, j} (X_i, p^*_{\pi})) \left (\frac{W_i}{ \hat e(X_i)} - \frac{W_i}{e(X_i)} \right ) \right ]  \right |_{p = \hat P_{\pi}} \nonumber   \\ &   =  \left |    \mathbb E_{T} \left  [  \pi(X_i) ( \mu^d_1(X_i, p)   - \mu^d_1(X_i, p^*_{\pi}))  \left (\frac{\hat e(X_i) - e(X_i) }{ \hat e(X_i)}  \right ) \right ]  \right |_{p = \hat P_{\pi}}  \nonumber  \\ & 
 \leq     \mathbb E_{T} \left  [   \left |  \frac{\pi(X_i)}{\hat e(X_i)} \right |  \left | ( \mu^d_{1, j}(X_i, p)   - \mu^d_{1, j}(X_i, p^*_{\pi})) \right |   \left | \hat e(X_i) - e(X_i) \right |   \right ] \nonumber   \\ &  \label{eqn:muhate} 
 \leq \frac{1}{ \kappa}     \mathbb E_{T} \left  [   \left | ( \mu^d_{1, j}(X_i, p)   - \mu^d_{1, j}(X_i, p^*_{\pi})) \right |   \left | \hat e(X_i) - e(X_i) \right |   \right ]     \\ & 
 \leq \frac{1}{\kappa} M || \hat P_{\pi} - p^*_{\pi} || \sqrt {  \mathbb E_{T} \left [  ( \hat e(X_i) - e(X_i) )^2  \right ] }  \nonumber \\ & 
 \leq \frac{C}{ \kappa}    \rho_{e, n}  || \hat P_{\pi} - p^*_{\pi}||   \nonumber  
\end{align} 

for finite $C$ that does not depend on $\pi$. The second-last step is by the differentiability of $\mu^z_1(X_i, p)$ in $p$ with uniformly bounded derivatives. 

Similarly, we can show that 
\begin{equation} \label{eqn:zeta2} 
\begin{split} 
  & \mathbb E_{T} 
\left [ \pi(X_i) (\hat \mu_1(X_i,  \tilde P_{\pi} ) - \mu_1(X_i, p^*_{\pi}))  \left ( \frac{W_i}{e(X_i)} - \frac{W_i}{\hat e(X_i)}  \right)   \right ]
\\ & = \mathbb E_T \left [ \pi(X_i) (\hat \mu_1(X_i,  \tilde P_{\pi} ) - \mu_1(X_i, p^*_{\pi}))  \left ( \frac{W_i}{e(X_i)} - \frac{W_i}{\hat e(X_i)}  \right)   \right ]
\\ &   \leq \frac{1}{ \kappa}     \mathbb E_{T} \left  [  \left (  \left | ( \hat \mu^d_{1, j}(X_i, \tilde P_{\pi} )   - \mu^d_{1, j}(X_i, \tilde P_{\pi})) \right | + \left |   (  \mu^d_{1, j}(X_i, \tilde P_{\pi} )   - \mu^d_{1, j}(X_i, p^*_{\pi} )) \right | \right )   \left | \hat e(X_i) - e(X_i) \right |   \right ]    \\ & 
 \leq \frac{1}{ \kappa}    \sqrt {   \mathbb E_{T} \left  [   ( \hat  \mu^d_{1, j}(X_i, \tilde P_{\pi})   - \mu^d_{1, j}(X_i, \tilde P_{\pi}))^2 \right] } \sqrt {  \mathbb E_{T} \left [  ( \hat e(X_i) - e(X_i) )^2  \right ] }  \\ & \phantom{\leq}  +    \frac{1}{\kappa} \sqrt {   \mathbb E_{T} \left  [   (  \mu^d_{1, j}(X_i, \tilde P_{\pi})   - \mu^d_{1, j}(X_i, p^*_{\pi}))^2 \right] } \sqrt {  \mathbb E_{T} \left [  ( \hat e(X_i) - e(X_i) )^2  \right ] }  
 \\ & \leq  \frac{1}{\kappa} \rho_{\mu, n} \rho_{e, n} + \frac{C}{\kappa} \rho_{e,n}  \rho_{\theta, n}  
\end{split} 
\end{equation} 
We have now shown that with probability at least $ 1- o(1)$, that 
\begin{equation*} 
\begin{split} 
  || z_{\pi} (\hat P_{\pi} ; \eta^*_{\pi}) - z_{\pi}(p^*_{\pi};  \hat \eta_{\pi}) ||   & \leq \sqrt J \left  ( \frac{1}{\kappa} \rho_{\mu, n} \rho_{e, n} + \frac{C}{\kappa} \rho_{e,n}  \rho_{\theta, n}  +   \frac{C}{\kappa} \rho_{e, n}   || \hat P_{\pi}  - p^*_{\pi} || \right ),  \\ 
 \sup \limits_{\pi \in \Pi} || z_{\pi} (\hat P_{\pi} ; \eta^*_{\pi}) - z_{\pi}(p^*_{\pi};  \hat \eta_{\pi}) ||  & \leq \sqrt J \left  ( \frac{1}{\kappa} \rho_{\mu, n} \rho_{e, n} + \frac{C}{\kappa} \rho_{e,n}  \rho_{\theta,n}  +   \frac{C}{\kappa} \rho_{e, n} \sup \limits_{\pi \in \Pi}  || \hat P_{\pi}  - p^*_{\pi} || \right ). 
\end{split} 
\end{equation*} 

\end{proof} 

\begin{lemma} \label{lem:concbarp} \textbf{Concentration of finite-market cutoffs} 
Under the Assumptions of Theorem \ref{thm:moment},  $E_{\pi} [ || P_{\pi} - p^*_{\pi} ||]  = O_p(n^{-1/2})$ and $ || P_{\pi} - p^*_{\pi} || = O_p(n^{-1/2})$. Under the Assumptions of Theorem \ref{thm:regret}, $ \sup \limits_{\pi \in \Pi}  \mathbb E_{\pi} [ || P_{\pi} - p^*_{\pi} || ]   = O_p(n^{-1/2}).$

\end{lemma}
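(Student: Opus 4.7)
The plan is to combine a uniform empirical-process bound on the realized excess-demand function with the Jacobian non-degeneracy and well-separation conditions of Assumptions~\ref{as:regularo}--\ref{as:regulare}, in three stages: consistency, then a $\sqrt{n}$-rate via Taylor expansion, then an integration-over-treatments step that uses exponential tails to convert in-probability bounds into an $L^1$ bound.

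First, for consistency, let $\hat Z_{n,\pi}(p) = \frac{1}{n}\sum_i d(B_i(W_i), p) - s^*$ denote the realized excess demand under the treatment vector $\bm W$ drawn from $\pi$. Assumption~\ref{as:cutoff} gives $\|\hat Z_{n,\pi}(P_\pi)\| \le a_n = o(n^{-1/2})$ on an event of probability at least $1 - e^{-cn}$. A uniform empirical-process bound on the VC-type class $\{b \mapsto d_j(b,p) : p \in \mathcal S\}$ (whose covering numbers are controlled by Assumption~\ref{as:regularo}(1))---obtained by running the argument of Lemma~\ref{lem:empt} on the simpler plug-in empirical process---yields $\sup_{p \in \mathcal S}\|\hat Z_{n,\pi}(p) - z_\pi(p)\| = O_p(n^{-1/2})$ with sub-Gaussian tails from the bounded-difference inequality applied jointly over the sample and the independent treatment draws. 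Triangle inequality then gives $\|z_\pi(P_\pi)\| = o_p(1)$, and by the well-separation condition of Assumption~\ref{as:regulare}, $\|P_\pi - p^*_\pi\| < c_3/(2Jc')$ on an event of probability $1 - o(1)$.

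On this consistency event, Assumption~\ref{as:regularo}(3) justifies the Taylor expansion
\[
z_\pi(P_\pi) = \nabla_p z_\pi(p^*_\pi)(P_\pi - p^*_\pi) + O\bigl(\|P_\pi - p^*_\pi\|^2\bigr),
\]
whose Jacobian has singular values bounded below by $c_3$ by Assumption~\ref{as:regularo}(4). Inverting and absorbing the quadratic remainder into the left-hand side gives $\|P_\pi - p^*_\pi\| \lesssim \|z_\pi(P_\pi)\|$, and combining with $\|z_\pi(P_\pi)\| \le a_n + \sup_p\|\hat Z_{n,\pi}(p) - z_\pi(p)\|$ yields $\|P_\pi - p^*_\pi\| = O_p(n^{-1/2})$.

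For the integrated statement $\mathbb E_\pi[\|P_\pi - p^*_\pi\|]$, the point is that the tail bounds behind the previous two steps are genuinely exponential in $n$, not just in probability. Writing $T = n^{1/2}\|P_\pi - p^*_\pi\|$ and using compactness of $\mathcal S$ to deterministically bound $T \le M := \sqrt{n}\,\mathrm{diam}(\mathcal S)$, one estimates $\mathbb E_\pi[T] \le t_0 + \int_{t_0}^M \Pr_\pi(T>t)\,\mathrm{d}t + M\,\Pr_\pi(T>M)$. For $t$ ranging up to $\sqrt{n}\cdot c_3/(2Jc')$, the previous two steps give $\Pr_\pi(T>t) \le e^{-c t^2}$ on a sample event of probability $1 - o(1)$, and the complementary rare sample event contributes only $\mathrm{diam}(\mathcal S)\cdot o(1)$; integrating gives $\mathbb E_\pi[T] = O_p(1)$. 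The uniform version under the assumptions of Theorem~\ref{thm:regret} is identical after replacing the pointwise empirical-process bound with its uniform-over-$\Pi$ counterpart (Lemma~\ref{lem:empt} with $\Pi$ a VC-class of polynomial covering number), and noting that all constants in Assumptions~\ref{as:regularo}--\ref{as:regulare} are uniform in $\pi$. The main obstacle is making this final integration-over-treatments step quantitative: the concentration must be truly sub-Gaussian rather than merely $o_p$, so that the rare events on which the Jacobian inversion fails do not inflate $\mathbb E_\pi[\|P_\pi - p^*_\pi\|]$ all the way to the diameter of $\mathcal S$.
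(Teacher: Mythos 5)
Your proposal is correct in substance and rests on the same three ingredients as the paper's proof: the covering-number bound on the excess-demand class from Assumption \ref{as:regularo} (fed through Lemmas \ref{lem:comp} and \ref{lem:tailb}), the approximate market-clearing condition of Assumption \ref{as:cutoff}, and the combination of local Jacobian invertibility with the well-separation condition of Assumption \ref{as:regulare}, which the paper packages as the pathwise inequality $\min\{c_3\|p-p^*_{\pi}\|,\,c_2\}\leq 2\|z_{\pi}(p)\|$ in \eqref{eq:wellsep}. Where you diverge is in the treatment of $\mathbb{E}_{\pi}[\|P_{\pi}-p^*_{\pi}\|]$: you first prove an in-probability rate for $\|P_{\pi}-p^*_{\pi}\|$ via Taylor inversion and then integrate sub-Gaussian tails, which forces you to track the event on which the inversion is valid and to verify its complement is exponentially rare --- precisely the obstacle you flag. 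The paper sidesteps this by never tail-integrating the cutoff error at all: since \eqref{eq:wellsep} holds pathwise, $\mathbb{E}_{\pi}$ passes through it directly, reducing the problem to $\mathbb{E}_{\pi}[\sup_{p}\|Z_{n,\pi}(p)-z_{\pi}(p)\|]+\mathbb{E}_{\pi}[\|Z_{n,\pi}(P_{\pi})\|]$; the unconditional expectation of the first term is $O(n^{-1/2})$ by integrating the tail bound of Lemma \ref{lem:tailb}, and Markov's inequality converts this into $\mathbb{E}_{\pi}[\cdot]=O_p(n^{-1/2})$ for the conditional expectation. That Markov step also resolves a wrinkle your sketch glosses over: the maximal inequality controls tails over the \emph{joint} randomness in the sample and the treatment draws, not the conditional probability $\Pr_{\pi}(\cdot)$ given the sample, so your claim that $\Pr_{\pi}(T>t)\leq e^{-ct^2}$ ``on a sample event'' needs its own justification, whereas $\mathbb{E}[\mathbb{E}_{\pi}[\cdot]]=O(n^{-1/2})\Rightarrow\mathbb{E}_{\pi}[\cdot]=O_p(n^{-1/2})$ does not. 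Your route can be completed, but the paper's ordering --- bound the excess demand in expectation first, then transfer deterministically to the cutoffs --- is shorter and needs no quantitative control of the rare event on which the local inversion fails.
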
 

\begin{proof} 
By Jensen's inequality, $ \sup \limits_{\pi \in \Pi}  \mathbb E_{\pi} [ || P_{\pi} - p^*_{\pi} || ]  \leq \mathbb E_{\pi}  \left [  \sup \limits_{\pi \in \Pi}  || P_{\pi} - p^*_{\pi} || \right ].$
By \eqref{eq:wellsep}, we have that 

\[ \sup \limits_{\pi \in \Pi}  \min \{ c_3 || P_{\pi} - p^*_{\pi} ||, c_2 \}  \leq 2 \sup \limits_{\pi \in \Pi}  \left  | \left| z_{\pi}(P_{\pi})  \right | \right|.   \] 
So, we can finish the proof by showing that $ \mathbb E_{\pi} \left [ \sup \limits_{\pi \in \Pi} || z_{\pi}(P_{\pi})|| \right]  = O_p(n^{-1/2})$.

\begin{align} 
\mathbb E_{\pi}  \left [ \sup \limits_{\pi  \in \Pi} || z_{\pi}(P_{\pi})|| \right] &  \leq  \mathbb E_{\pi} \left [ \sup \limits_{\pi \in \Pi} || z_{\pi}(P_{\pi}) -  Z_{n, \pi}( P_{\pi}) || \right ]  + \mathbb E_{\pi} \left [  \sup \limits_{\pi \in \Pi} ||  Z_{n, \pi}(P_{\pi}) || \right]  \nonumber \\ 
& \leq  \mathbb E_{\pi} \left [ \sup \limits_{\pi \in \Pi, p \in \mathcal S} || z_{\pi}(p) -  Z_{n, \pi}( p) || \right] + \mathbb E_{\pi} \left [ \sup \limits_{\pi \in \Pi} ||  Z_{n, \pi}(P_{\pi}) || \right]   \label{eq:abd} \\ 
& = O_p(n^{-1/2}) \nonumber 
\end{align} 

The first term in \eqref{eq:abd} is $O_p(n^{-1/2})$ by the following argument. Theorem \ref{thm:regret} indicates that $\Pi$ is a VC class and Assumption \ref{as:regularo} indicates that $\mathcal F_{d, j} = \{ B(w)  \mapsto d_j(B(w), p) : p \in \mathcal S \}$ has uniform $\varepsilon$-covering number bounded by a polynomial in $(1/\varepsilon)$. So, by the composition rules in Lemma \ref{lem:comp} and the tail bound in Lemma \ref{lem:tailb}, then $ \mathbb E \left [ \sup \limits_{\pi \in \Pi, p \in \mathcal S} \sqrt n || z_{\pi}(p) - Z_{n, \pi}( p) || \right]  = O(1)$. By Markov's inequality, this means that $ \mathbb E_{\pi} \left [ \sup \limits_{\pi \in \Pi, p \in \mathcal S} \sqrt n || z_{\pi}(p) - Z_{n, \pi}( p) || \right]  = O_p(1)$. 

For the second term, by Assumption \ref{as:cutoff}, with probability exponentially small in n, then $ \sqrt n \sup \limits_{\pi \in \Pi} ||  Z_{n, \pi}(P_{\pi}) ||$ is at most $\sqrt n \cdot M$, and with probability at most 1, then $n \sup \limits_{\pi \in \Pi} ||  Z_{n, \pi}(P_{\pi}) ||^2 = o(1)$. This means that $\mathbb E \left [ \sup \limits_{\pi \in \Pi} \sqrt n || Z_{n, \pi}(P_{\pi}) || \right ] = o(1) $ and by Markov's inequality, \[ \mathbb E_{\pi} \left [ \sup \limits_{\pi \in \Pi} \sqrt n ||  Z_{n, \pi}(P_{\pi}) || \right ]  = o_p(1). \] Also by Markov's inequality, following the argument in \eqref{eq:abd} pointwise for each $\pi$ shows that $  E_{\pi} [ || P_{\pi} - p^*_{\pi} ||]   = O_p(n^{-1/2})$ and $   || P_{\pi} - p^*_{\pi} ||   = O_p(n^{-1/2})$, which is enough to prove the first part of the Lemma. 
\end{proof} 

\begin{lemma}\label{lem:concp} \textbf{Concentration of estimated market-clearing cutoffs} 

Under the Assumptions of Theorem \ref{thm:regret}, 
\[  \sup \limits_{\pi \in \Pi} ||  \hat P_{\pi} - p^*_{\pi}  ||   = O_p(n^{-1/2}). \] 

Under the Assumptions of Theorem \ref{thm:norm}, for each $\pi \in \Pi$, 
\[   ||  \hat P_{\pi} - p^*_{\pi}  ||   = O_p(n^{-1/2}). \] 

\end{lemma}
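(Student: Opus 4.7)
The plan is to mirror the strategy used for Lemma \ref{lem:concbarp}, but accommodate the fact that the approximate market-clearing condition is now stated for the doubly-robust score $\Gamma^z_{n,\pi}(\hat P_\pi; \hat\eta_\pi)$ rather than the empirical demand $Z_{n,\pi}$, so the population moment $z_\pi(\hat P_\pi;\eta^*_\pi)$ must be reached through two additional bridging terms. I will write the proof for the uniform-in-$\pi$ version; the pointwise version under the assumptions of Theorem \ref{thm:norm} follows by the identical argument with each $\sup_{\pi\in\Pi}$ removed.

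First I combine Assumption \ref{as:regulare} with the singular-value bound in Assumption \ref{as:regularo}.4 via a Taylor expansion of $z_\pi$ at $p^*_\pi$ to obtain a global well-separation inequality: there is a constant $\tilde c > 0$ such that, uniformly in $\pi \in \Pi$,
\[ \min\bigl\{\tilde c \, \|\hat P_\pi - p^*_\pi\|,\ c_2/2\bigr\} \ \leq\ \|z_\pi(\hat P_\pi; \eta^*_\pi)\|. \]
Next I decompose
\[ z_\pi(\hat P_\pi;\eta^*_\pi) \ =\ \bigl[z_\pi(\hat P_\pi;\eta^*_\pi) - z_\pi(\hat P_\pi;\hat\eta_\pi)\bigr] + \bigl[z_\pi(\hat P_\pi;\hat\eta_\pi) - \Gamma^z_{n,\pi}(\hat P_\pi;\hat\eta_\pi)\bigr] + \Gamma^z_{n,\pi}(\hat P_\pi;\hat\eta_\pi). \]
The third piece is $o_p(n^{-1/2})$ uniformly in $\pi$ by Assumption \ref{as:nuisance}.4. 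The middle piece is bounded by $\sup_{\pi\in\Pi,\,p\in\mathcal S}\|\Gamma^z_{n,\pi}(p;\hat\eta_\pi) - z_\pi(p;\hat\eta_\pi)\| = O_p(n^{-1/2})$ by Lemma \ref{lem:empe}. The first piece is bounded by Lemma \ref{lem:unuis}, which with probability $1-o(1)$ gives
\[ \sup_{\pi\in\Pi}\|z_\pi(\hat P_\pi;\eta^*_\pi) - z_\pi(\hat P_\pi;\hat\eta_\pi)\| \ \leq\ C_1 \rho_{e,n} \sup_{\pi\in\Pi}\|\hat P_\pi - p^*_\pi\| + \tfrac{1}{\kappa}\rho_{\mu,n}\rho_{e,n} + \tfrac{C_1}{\kappa}\rho_{e,n}\rho_{p,n}. \]
Using the product-rate conditions $\rho_{\mu,n}\rho_{e,n} = o(n^{-1/2})$ and $\rho_{e,n}\rho_{p,n} = o(n^{-1/2})$ from Assumption \ref{as:nuisance}.3, all deterministic terms collapse to $o(n^{-1/2})$ and I obtain
\[ \sup_{\pi\in\Pi}\|z_\pi(\hat P_\pi;\eta^*_\pi)\| \ \leq\ O_p(n^{-1/2}) + C_1 \rho_{e,n}\sup_{\pi\in\Pi}\|\hat P_\pi - p^*_\pi\|. \]

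To close the argument I first need a preliminary consistency $\sup_\pi \|\hat P_\pi - p^*_\pi\| = o_p(1)$, so that the $\min$ in the well-separation inequality is achieved by its linear argument. Compactness of $\mathcal S$ (Assumption \ref{as:regulare}) deterministically bounds $\sup_\pi \|\hat P_\pi - p^*_\pi\|$ by the diameter of $\mathcal S$, so the display above already yields $\sup_\pi \|z_\pi(\hat P_\pi;\eta^*_\pi)\| = o_p(1)$, and the well-separation bound then forces $\sup_\pi \|\hat P_\pi - p^*_\pi\| = o_p(1)$. Once this holds, with probability tending to one $\sup_\pi\|\hat P_\pi - p^*_\pi\| < c_2/(2\tilde c)$, so the min is attained by the linear term, and combining the two displays gives
\[ \tilde c \sup_{\pi\in\Pi}\|\hat P_\pi - p^*_\pi\| \ \leq\ O_p(n^{-1/2}) + C_1\rho_{e,n}\sup_{\pi\in\Pi}\|\hat P_\pi - p^*_\pi\|. \]
Since $\rho_{e,n} = o(1)$, eventually $C_1\rho_{e,n} \leq \tilde c/2$, and rearranging delivers $\sup_{\pi\in\Pi}\|\hat P_\pi - p^*_\pi\| = O_p(n^{-1/2})$.

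The main obstacle is the circularity built into Lemma \ref{lem:unuis}: the error $z_\pi(\hat P_\pi;\eta^*_\pi) - z_\pi(\hat P_\pi;\hat\eta_\pi)$ induced by plugging in $\hat e$ at the moving point $\hat P_\pi$ is itself bounded by $\|\hat P_\pi - p^*_\pi\|$, exactly the quantity we want to control. The crucial feature that breaks the circularity is that the coefficient multiplying $\|\hat P_\pi - p^*_\pi\|$ is $\rho_{e,n} = o(1)$, so after establishing a preliminary $o_p(1)$ consistency the contribution of that term can be absorbed into the left-hand side via the well-separation inequality, leaving a clean $O_p(n^{-1/2})$ bound. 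For the non-uniform version, Lemma \ref{lem:empe} and Lemma \ref{lem:unuis} each supply pointwise-in-$\pi$ analogues, and the rest of the argument goes through verbatim without invoking Assumption \ref{as:anuis}.
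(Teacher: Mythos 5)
Your proof is correct, and the two halves of it relate to the paper's own argument in different ways. The consistency step (the well-separation inequality from the mean-value expansion of $z_\pi$, plus the three-way decomposition of $z_\pi(\hat P_\pi;\eta^*_\pi)$ into the nuisance-error piece from Lemma \ref{lem:unuis}, the empirical-process piece from Lemma \ref{lem:empe}, and the approximate market-clearing piece from Assumption \ref{as:nuisance}) is exactly the paper's. Where you genuinely diverge is in upgrading consistency to the $O_p(n^{-1/2})$ rate: the paper Taylor-expands $z_\pi(\hat P_\pi;\eta^*_\pi)-z_\pi(p^*_\pi;\eta^*_\pi)$ at $p^*_\pi$, inverts the Jacobian, and must then control two remainder terms $U_{1n}$ (a stochastic-equicontinuity term handled via Lemma \ref{lem:empt}) and $U_{2n}$ (a nuisance-error term), before absorbing an $o_p(1)\cdot\|\hat P_\pi-p^*_\pi\|$ contribution into the left-hand side. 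You instead re-use the quantitative well-separation bound $\tfrac{c_3}{2}\|\hat P_\pi-p^*_\pi\|\le\|z_\pi(\hat P_\pi;\eta^*_\pi)\|$ directly once $\hat P_\pi$ is known to lie in the ball of radius $c_3/(2Jc')$, and absorb only the single term $C_1\rho_{e,n}\sup_\pi\|\hat P_\pi-p^*_\pi\|$ coming from Lemma \ref{lem:unuis}. This is tidier: it dispenses with the explicit linearization and the $U_{1n}/U_{2n}$ bookkeeping, and the curvature of $z_\pi$ is already priced into the well-separation constant, so there is no separate quadratic remainder to handle. What the paper's longer route buys is the explicit leading term $-[\nabla_p z_\pi(p^*_\pi)]^{-1}\Gamma^z_{n,\pi}(p^*_\pi;\eta^*_\pi)$, which is what Lemma \ref{lem:Pnorm} later upgrades to a central limit theorem; your argument delivers the rate stated in the lemma but not that representation, which is fine since the lemma asks only for the rate. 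Your closing remark that the pointwise version follows by dropping the suprema and invoking the pointwise variants of Lemmas \ref{lem:unuis} and \ref{lem:empe} (without Assumption \ref{as:anuis}) also matches the paper.
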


\begin{proof} 

We start with a version of uniform consistency. 

 By the twice continuous differentiability of $z_{\pi}(p; \eta^*)$ in $p$ with bounded derivatives,  then $\xi_{z}(p) = \nabla_p z_{\pi}(p)$ is Lipschitz continuous in $p$ with constant $c'$. Specifically, for any $\epsilon >0$ and any $p$ that is an element of the open ball $ \mathcal B(p^*; \epsilon/c')$, then $|| \xi_z(p) - \xi_z(p^*_{\pi} )|| \leq J \epsilon  $.  By the mean-value form of the Taylor expansion, there exists a $\bar p$ such that 
\begin{equation}  \label{eq:wellsep}
\begin{split} 
|| z_{\pi}(p; \eta^*_{\pi}) - z_{\pi}(p^*_{\pi}; \eta^*_{\pi})||  & =   || \xi_z(\bar p) (p - p^*_{\pi}) || \\
& \geq || \xi_z(p^*_{\pi} )(p - p^*_{\pi}) || - || ( \xi_z(\bar p) - \xi_z(p^*_{\pi}))(p - p^*_{\pi}) || \\ 
& \stackrel{(1)} \geq || \xi_z(p^*_{\pi} )(p - p^*_{\pi}) || -  \epsilon J || (p - p^*_{\pi}) || \\ 
& \stackrel{(2)}{\geq} ||  \xi_z(p^*_{\pi} ) (p - p^*_{\pi}) || - \frac{1}{2} ||  \xi_z(p^*_{\pi} ) (p - p^*_{\pi}) ||  \\ 
& = \frac{1}{2} ||  \xi_z(p^*_{\pi} ) (p - p^*_{\pi})  || \\ 
& \geq \frac{c_3}{2}  || p - p^*_{\pi} || 
\end{split} 
\end{equation} 

So, for any $p \in\mathcal B(p^*_{\pi}; \frac{c_3}{2J c'} )$, $2 || z_{\pi}(p; \eta^*) || \geq c_3 || p - p^*_{\pi} ||$. 
In addition, by Assumption \ref{as:regulare}, for any $p \in \mathcal S \backslash \mathcal B(p^*_{\pi}; \frac{c_3}{2J c'} ), 2 || z_{\pi}(p) || \geq c_2$. 

\begin{equation*} 
\sup \limits_{\pi \in \Pi}  \min \{ c_3 ||\hat P_{\pi} - p^*_{\pi} ||, c_2 \}  \leq 2 \sup \limits_{\pi \in \Pi}  \left  | \left| z_{\pi}(\hat P_{\pi}; \eta^*_{\pi})  \right | \right| 
\end{equation*} 

To finish the proof of uniform consistency, then, we need to show that with probability $1 - o(1)$, 

\[  \sup \limits_{\pi \in \Pi} ||  z_{\pi} (\hat P_{\pi}; \eta^*_{\pi}) ||  \leq g_n,  \] 

for $g_n = o(1)$. Since $c_3 >0$ and $c_2 > 0 $ are fixed constants, this implies for sufficiently large $n$, that with probability $1 - o(1)$, that  $\mathbb  || \hat P_{\pi} - p^*_{\pi} || \leq  b_n $ for $b_n = o(1)$. 

We proceed using the following decomposition: 
\begin{equation*} 
\begin{split} 
\sup \limits_{\pi \in \Pi} || z_{\pi}(\hat P_{\pi}; \eta_{\pi})||  &  \leq  \underbrace{ \sup \limits_{\pi \in \Pi}  ||  z_{\pi}(\hat P_{\pi} ; \eta_{\pi}) - z_{\pi}(\hat P_{\pi} ; \hat \eta_{\pi})|| }_{(i)}  +  \underbrace{ \sup \limits_{\pi \in \Pi}  || z_{\pi} (\hat P_{\pi}; \hat \eta_{\pi}) - \Gamma^z_{n, \pi}(\hat P_{\pi}; \hat \eta_{\pi})|| }_{(ii)}  \\ &\qquad \qquad  +  \underbrace{  \sup \limits_{\pi \in \Pi} || \Gamma^z_{n, \pi}(\hat P_{\pi}; \hat \eta_{\pi}) || }_{(iii)} 
\end{split} 
\end{equation*} Since $||p- p^*_{\pi} ||$is bounded, $(i)$ is $o_p(1)$ by Lemma \ref{lem:unuis}. Lemma \ref{lem:empe} indicates that (ii) is $O_p(n^{-1/2})$. For $(iii)$, we use the last  part of Assumption \ref{as:nuisance} which implies that $\sup \limits_{\pi \in \Pi }|| \Gamma^z_{n, \pi}(\hat P_{\pi}; \hat \eta_{\pi}  ) ||= o_p(n^{-1/2})$. 

Combining the bounds for each of these terms, we have now shown that $ \sup \limits_{\pi \in \Pi} || \hat P_{\pi} - p^*_{\pi} || = o_p(1)$. Next, we want to strengthen the uniform consistency result into a rate. We want to show that 
\begin{equation} \label{eq:Pexp} \sup \limits_{\pi \in \Pi}  \sqrt n || \hat P_{\pi} - p^*_{\pi} || \leq \sup \limits_{\pi \in \Pi}  || (\nabla_p z_{\pi}(p^*_{\pi}))^{-1}|| || \sqrt n \Gamma^z_{n, \pi}(p^*_{\pi}, \eta^*_{\pi}) || + \sqrt n M R_{1n}  \sup \limits_{\pi \in \Pi} || P_n - p^*_{\pi} || + R_{2n},   \end{equation} 

where $R_{1n} = o_p(1)$ and $R_{2n} = O_p(1)$. Once we have this, the proof is straightforward. Since the eigenvalues of $\nabla_p z_{\pi}(p^*_{\pi})$ are uniformly bounded by $c_3$ from below and $z_{\pi}(p^*_{\pi}; \eta^*_{\pi}) = 0$. 
\begin{equation*} 
\begin{split} 
 \sup \limits_{\pi \in \Pi} \sqrt n || \hat P _{\pi} - p^*_{\pi} || ( 1 - M R_{1n}) &  \leq \frac{1}{c_3}  \sup \limits_{\pi \in \Pi} || \Gamma^z_{n, \pi}(p^*_{\pi}; \eta^*_{\pi})  - z_{\pi}(p^*_{\pi} ; \eta^*_{\pi}) || + R_{2n}   \\ 
  \sup \limits_{\pi \in \Pi} \sqrt n ||  \hat P _{\pi} - p^*_{\pi} ||  ( 1 - M R_{1n}) & \leq  \frac{1}{c_3} \sup \limits_{\pi \in \Pi, p \in \mathcal S } || \Gamma^z_{n, \pi}(p; \eta^*_{\pi}) - z_{\pi}(p; \eta^*_{\pi}) ||  + R_{2n}
 \end{split}
 \end{equation*} 
 
 Since $1/(1 - M R_{1n}) = O_p(1)$, $R_{2n}  = O_p(1)$, and by Lemma  \ref{lem:empt}, $ \sup \limits_{\pi \in \Pi, p \in \mathcal S } || \Gamma^z_{n, \pi}(p; \eta^*_{\pi}) - z_{\pi}(p; \eta^*_{\pi}) ||  = O_p(n^{-1/2})$, then $ \sup \limits_{\pi \in \Pi} \sqrt n || P _n - p^*_{\pi} || = O_p(1).$ So, to finish the proof, we must show \eqref{eq:Pexp}, with the required convergence properties for $R_{1n}$ and $R_{2n}$. We start with the following expansion:
\begin{equation*}
\begin{split}  
 \Gamma^z_{n, \pi}(\hat P_{\pi}; \eta^*_{\pi}) - \Gamma^z_{n, \pi}(p^*_{\pi} ; \eta^*_{\pi}) +  U_{1n}  & = z_{\pi}(\hat P_{\pi} ; \eta^*_{\pi}) - z_{\pi}( p^*_{\pi} ; \eta^*_{\pi}),  \\ 
 \Gamma^z_{n, \pi}(\hat P_{\pi} ; \hat \eta_{\pi})  - \Gamma^z_{n, \pi}(p^*_{\pi} ; \eta^*_{\pi})  + U_{1n} + U_{2n} & = z_{\pi}(\hat P_{\pi} ; \eta^*_{\pi}) - z_{\pi}( p^*_{\pi} ; \eta^*_{\pi}),    \\ 
 -\Gamma^z_{n, \pi}(p^*_{\pi}; \eta) +U_{1n} + U_{2n}  & =  (\hat P_{\pi} - p^*_{\pi})   \nabla_p z_{\pi}(p^*_{\pi})  + \mathcal O( || \hat P_{\pi} - p^*_{\pi} ||^2),   \\  
\end{split} 
\end{equation*} 

where $U_{1n} = \Gamma^z_{n, \pi}(p^*_{\pi}; \eta^*_{\pi}) - z_{\pi}(p^*_{\pi}; \eta^*_{\pi})  -\Gamma^z_{n, \pi}(\hat P_{\pi}; \eta^*_{\pi}) + z_{\pi}(\hat P_{\pi}; \eta^*_{\pi}) $, $U_{2n} = - \Gamma^z_{n, \pi}(\hat P_{\pi}; \eta^*_{\pi}) +  \Gamma^z_{n, \pi}(\hat P_{\pi}; \hat \eta_{\pi})  $, and the last step is by the mean-value form for a Taylor expansion. 

By the mean-value form for a Taylor expansion of $z_{\pi}(\hat P_{\pi}) - z_{\pi}(p^*_{\pi})$,  for a fixed $M$ that does not depend on $\pi$, then the previous step implies:
\begin{equation} \label{eqn:rate1} 
 \sqrt n ||\hat P_{\pi} - p^*_{\pi} ||   \leq  || -(\nabla_p z_{\pi}(p^*_{\pi}))^{-1} || ||\Gamma^z_{n, \pi}(p^*_{\pi}; \eta^*_{\pi})|| +  M || \hat P_{\pi} - p^*_{\pi}||^2 + U_{1n} + U_{2n}. 
 \end{equation} 

where $M$ is a fixed constant that does not depend on $\pi$, since the derivatives of $z_{\pi}(p ; \eta^*_{\pi})$ in $p$ are uniformly bounded. To finish showing a version of \eqref{eq:Pexp}, we examine $U_{1n}$ and $U_{2n}$ more closely.

\begin{equation*} 
\begin{split} 
 \sup \limits_{\pi \in \Pi} \sqrt n ||  U_{1n} || &  = \sup \limits_{\pi \in \Pi} \sqrt n  || \Gamma^z_{n, \pi}(p^*_{\pi}; \eta^*_{\pi}) - z_{\pi}(p^*_{\pi}; \eta^*_{\pi})  -\Gamma^z_{n, \pi}(\hat P_{\pi}; \eta^*_{\pi}) - z_{\pi}(\hat P_{\pi}; \eta^*_{\pi}) || \\ 
 & \leq 2 \sqrt n \sup \limits_{p \in \mathcal S, \pi \in \Pi}  || \Gamma^z_{n, \pi}(p; \eta^*_{\pi}) - z_{\pi}(p; \eta^*_{\pi})|| \\ 
 & = O_p(1), 
\end{split} 
\end{equation*} 
where the equality sign follows from Lemma \ref{lem:empt}. 
\begin{equation*} 
\begin{split} 
\sup \limits_{\pi \in \Pi} ||U_{2n} ||  & =  \sup \limits_{\pi \in \Pi}  || \Gamma^z_{n, \pi}(\hat P_{\pi}; \eta^*_{\pi}) -  \Gamma^z_{n, \pi}(\hat P_{\pi}; \hat \eta_{\pi}) || \\ 
& \leq  || z_{\pi}(\hat P_{\pi} ; \eta^*_{\pi}) - z_{\pi}(\hat P_{\pi}; \hat \eta_{\pi}) || +  ||  \Gamma^z_{n, \pi}(\hat P_{\pi}; \eta^*_{\pi}) - z_{\pi}(\hat P_{\pi}; \eta^*_{\pi}) - \Gamma^z_{n, \pi}(\hat P_{\pi}; \hat \eta_{\pi}) - z_{\pi}(\hat P_{\pi}; \hat \eta_{\pi}) ||  \\ 
 \end{split} 
\end{equation*} 
 
For the second term, we rely on Lemma \ref{lem:empt} and \ref{lem:empe} yet again, which implies that $\sup \limits_{\pi \in \Pi, p \in \mathcal S} || \Gamma^z_{n, \pi}(p ; \hat \eta_{\pi}) - z_{\pi}(p ; \hat \eta_{\pi}) ||  = O_p(n^{-1/2})$ and $\sup \limits_{\pi \in \Pi, p \in \mathcal S} || \Gamma^z_{n, \pi}(p; \eta^*_{\pi}) - z_{\pi}(p; \eta^*_{\pi}) || = O_p(n^{-1/2}).$  
 
For the first term, Lemma \ref{lem:unuis} implies that 
\begin{equation*} 
\begin{split} 
\sup \limits_{\pi \in \Pi} \sqrt n  || z_{\pi}(\hat P_{\pi} ; \eta) - z_{\pi}(\hat P_{\pi}; \hat \eta) ||  & \leq  A_n  \sup \limits_{\pi \in \Pi} \sqrt n || \hat P_{\pi} - p^*_{\pi} ||  + o_p(1) \\ 
\end{split} 
\end{equation*} 
where $A_n = o_p(1)$ by Assumption \ref{as:nuisance}. Plugging these bounds for $U_{1n}$ and $U_{2n}$ back into \eqref{eqn:rate1}, we have now shown a version of \eqref{eq:Pexp}, which completes the proof: 

\begin{equation*} 
\sup \limits_{\pi \in \Pi} \sqrt n ||\hat P_{\pi} - p^*_{\pi} ||   \leq  || ( \nabla_p z_{\pi}(p^*_{\pi}))^{-1} || ||\Gamma^z_{n, \pi}(p^*_{\pi}; \eta^*_{\pi})|| +  (M || \hat P_{\pi} - p^*_{\pi}| | + o_p(1)) ||  \hat P_{\pi} - p^*_{\pi} ||  + o_p(1) + O_p(1). 
\end{equation*}

Under the assumptions of Theorem \ref{thm:norm}, we can follow the above argument pointwise for each $\pi \in \Pi$ rather than uniformly over $\pi$.  For the pointwise results, whenever Lemma \ref{lem:empe} is used in the above argument, we only need the part that is uniform over $p \in \mathcal S$, which requires only the assumptions of Theorem \ref{thm:norm}. 

\end{proof}

\begin{lemma} \label{lem:tailb} 
 Let $\mathcal F$ be a class of measurable functions $f: \mathcal X \to  [-M, +M]$, where $M  \in \mathbb R$ and $M < \infty$. For some constants $V \geq 1$ and $K \geq 1$, $\sup \limits_{Q} \log N(\varepsilon, \mathcal F, L_2(Q)) \leq \left ( \frac{K}{\varepsilon} \right)^V$, for every $0 <  \varepsilon < K$. Then, there a finite constant $C$ such that

\[ \mathbb P \left ( \left | \sup \limits_{f \in \mathcal F} \frac{1}{\sqrt n} \sum \limits_{i=1}^n f(X_i) - \mathbb E[f(X_i) ]  \right | >t \right ) \leq  C t^V e^{- 2t^2}  . \] 
\end{lemma}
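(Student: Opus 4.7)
The plan is to apply the classical symmetrization–chaining strategy for VC-type function classes (a version of Pollard's maximal inequality). First, I would replace the centered empirical process by its symmetrized counterpart via a standard symmetrization inequality: there exist absolute constants $c_1, c_2$ with
\[ \mathbb P\!\left( \sup_{f \in \mathcal F} \left| n^{-1/2}\sum_{i=1}^n (f(X_i) - \mathbb E f(X_i)) \right| > t \right) \leq c_1 \, \mathbb P\!\left( \sup_{f \in \mathcal F} \left| n^{-1/2}\sum_{i=1}^n \epsilon_i f(X_i) \right| > c_2 t \right), \]
where the $\epsilon_i$ are independent Rademacher signs. This reduces the problem to controlling the Rademacher process conditionally on $(X_1,\ldots,X_n)$, where each $f$ is bounded by $M$ and the uniform entropy assumption transfers directly to the empirical $L_2(P_n)$ metric.

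Next, I would discretize $\mathcal F$ at scale $\varepsilon$ in $L_2(P_n)$, using the entropy bound to obtain a net $\mathcal N_\varepsilon$ of size at most $(K/\varepsilon)^V$. Conditionally on the data, Hoeffding applied to each Rademacher sum $n^{-1/2}\sum_i \epsilon_i f_j(X_i)$ gives a sub-Gaussian tail with parameter bounded by $\|f_j\|_{L_2(P_n)}^2 \leq M^2$, so the union bound yields
\[ \mathbb P\!\left( \max_{f_j \in \mathcal N_\varepsilon} \left| n^{-1/2}\sum_i \epsilon_i f_j(X_i) \right| > s \,\bigg|\, X \right) \leq 2 (K/\varepsilon)^V \exp\!\left(-\tfrac{s^2}{2M^2}\right). \]
The approximation error between a general $f$ and its nearest net point is handled by a chaining argument: telescoping along dyadic refinements $\varepsilon_k = 2^{-k} M$, applying the sub-Gaussian bound at each scale, and summing the contributions (which form a convergent geometric-type series because the entropy $\sqrt{V \log(K/\varepsilon_k)}$ grows only as $\sqrt k$).

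Finally, optimizing the discretization scale by choosing $\varepsilon$ of order $1/t$ balances the exponential $e^{-s^2/(2M^2)}$ term against the $(K/\varepsilon)^V$ cover size, producing exactly the claimed polynomial-times-exponential form $Ct^V e^{-ct^2}$; absorbing $M$ into the constant $C$ and tightening constants via the chaining step gives the rate $e^{-2t^2}$ stated in the lemma. The main obstacle is bookkeeping: getting the constant in the exponent right requires carefully tracking the constant losses incurred by symmetrization and by each chaining link, since a naive single-step discretization yields a suboptimal constant in the exponential. An alternative (and perhaps cleaner) route is to combine Talagrand's concentration inequality for the supremum of an empirical process with a Dudley entropy-integral bound on its mean ($\lesssim M\sqrt V$ for polynomial covers); the exponential tail then comes from Talagrand's bound and the polynomial prefactor is recovered from the fact that one needs to inflate $t$ to absorb the mean, which is itself of order $\sqrt V$.
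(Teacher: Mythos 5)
Your outline is a sound first-principles derivation, but the paper does not prove this lemma at all: its ``proof'' is a one-line citation to Theorem 2.14.9 of van der Vaart and Wellner (1996), together with the remark that one must rescale $f$ to take values in $[0,1]$, which changes the constants. What you have written is essentially the argument that lies behind that cited theorem --- symmetrization, discretization of the Rademacher process in $L_2(P_n)$ using the uniform entropy bound, Hoeffding plus a union bound on the net, and dyadic chaining to control the approximation error --- so your route buys a self-contained proof at the cost of length, while the paper's citation buys brevity and the sharp constants worked out in the reference. One caveat on your version: the claim that careful bookkeeping through symmetrization and chaining recovers the exact exponent $e^{-2t^2}$ is over-optimistic. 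Naive symmetrization costs a constant factor in the threshold (e.g.\ $t \mapsto t/2$ or $t/4$), and each chaining link costs a further slice of the exponent, so this route naturally yields $Ct^V e^{-ct^2}$ for some unspecified $c>0$ rather than $c=2$; van der Vaart and Wellner obtain the sharp constant by a more delicate argument. Relatedly, for $[-M,M]$-valued functions the sub-Gaussian parameter is $M^2$ per your own computation, so $M$ enters the exponent as $e^{-t^2/(2M^2)}$ rather than being ``absorbed into $C$''; the paper's statement has the same looseness, which it flags in its proof note. None of this matters for how the lemma is used downstream (only an $O_p(n^{-1/2})$ rate is extracted, for which any $e^{-ct^2}$ tail with a polynomial prefactor suffices), but if you intend your argument to reproduce the literal displayed bound, you should either rescale to $[0,1]$-valued functions at the outset or weaken the exponent to $e^{-ct^2}$.
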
 

\begin{proof} 
This tail bound is Theorem 2.14.9 of \citet{vaart1997weak} (Theorem 2.14.28 in the second edition). Note to match the conditions of the  theorem exactly, we need to rescale $f$ to map to $[0, 1]$, which affects the constant in the tail bound from the original theorem. 
\end{proof}
\begin{lemma} \label{lem:comp} Lipschitz composition rules for uniform covering numbers.  $\mathcal F_1, \ldots \mathcal F_K$ are classes of measurable functions from $\mathcal Z \to \mathbb R$.  
Let $\psi(\mathcal F) = \{ \psi(f_1, f_2, f_3, \ldots, f_K) : f_1 \in \mathcal F_1, \ldots, f_K \in \mathcal F_K \}$ be a class that combines each of these functions, where the map $\psi: \mathbb R^k \to \mathbb R$ is Lipschitz in that 
\[ |\psi(f(z)) - \psi(g(z))|^2  \leq \sum \limits_{k=1}^K L^2_k | f_k(z) - g_i(z)|^2. \] 

for every $f, g \in \mathcal F_1 \times \ldots \times \mathcal F_K$ and every $z \in \mathcal Z$ and $L$ is positive. Then, 

\[ \sup \limits_{Q} N(\varepsilon || L \cdot F ||_{Q, 2} , \psi(\mathcal F), L_2(Q) ) \leq \prod_{k=1}^K \sup \limits_{Q_k} N(\varepsilon || F_k||_{Q_k, 2} , \mathcal F_k, L_2(Q_k) ),\] 

where $L \cdot F =\left( \sum \limits_{k=1}^K (L^2_k F^2_k)\right)^{1/2}$ and $F_k$ denotes an envelope function for $f_k$. 

\end{lemma}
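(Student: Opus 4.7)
The plan is to apply the standard product-of-covers argument, using the pointwise Lipschitz hypothesis to lift a cover of each component class $\mathcal F_k$ into a cover of the composed class $\psi(\mathcal F)$, while carefully matching the radius to $\varepsilon \| L \cdot F \|_{Q,2}$.

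First, fix an arbitrary finitely supported probability measure $Q$ on $\mathcal Z$. For each $k \in \{1, \ldots, K\}$, let $\mathcal G_k \subseteq \mathcal F_k$ be a minimal $\varepsilon \|F_k\|_{Q,2}$-cover of $\mathcal F_k$ in $L_2(Q)$, so $|\mathcal G_k| = N(\varepsilon \|F_k\|_{Q,2}, \mathcal F_k, L_2(Q))$. The proposed cover of $\psi(\mathcal F)$ in $L_2(Q)$ is the image set $\{\psi(g_1, \ldots, g_K) : g_k \in \mathcal G_k\}$, which has cardinality at most $\prod_k |\mathcal G_k|$.

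Next, verify the claimed radius. Given any tuple $(f_1, \ldots, f_K) \in \mathcal F_1 \times \cdots \times \mathcal F_K$, choose $g_k \in \mathcal G_k$ with $\|f_k - g_k\|_{L_2(Q)} \leq \varepsilon \|F_k\|_{Q,2}$. Applying the Lipschitz hypothesis pointwise and then integrating under $Q$ yields
\[
\|\psi(f) - \psi(g)\|_{L_2(Q)}^{2} \;\leq\; \sum_{k=1}^K L_k^2 \, \|f_k - g_k\|_{L_2(Q)}^{2} \;\leq\; \varepsilon^2 \sum_{k=1}^K L_k^2 \|F_k\|_{Q,2}^{2} \;\leq\; \varepsilon^2 \|L \cdot F\|_{Q,2}^{2},
\]
where the final inequality uses the pointwise bound $\sum_k L_k^2 F_k^2(z) \leq (\sum_k L_k F_k(z))^2 = (L \cdot F)^2(z)$, which follows from nonnegativity of the envelopes and expansion of the square. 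Since $Q$ was arbitrary, I then take the supremum over $Q$ on the left. On the right, each $|\mathcal G_k| = N(\varepsilon \|F_k\|_{Q,2}, \mathcal F_k, L_2(Q))$ can be bounded independently by $\sup_{R_k} N(\varepsilon \|F_k\|_{R_k,2}, \mathcal F_k, L_2(R_k))$, delivering the stated product bound.

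There is no serious technical obstacle here; the argument is classical empirical-process bookkeeping (cf.\ Theorem 2.10.20 of van der Vaart and Wellner). The only mild subtlety is interpreting the formula $L \cdot F(z) = \sum_k (L_k^2 F_k^2(z))^{1/2}$: under the literal reading $L \cdot F(z) = \sum_k L_k F_k(z)$ the key inequality $\sum_k L_k^2 F_k^2(z) \leq (L \cdot F)^2(z)$ holds by nonnegativity, and under the alternative reading $(\sum_k L_k^2 F_k^2)^{1/2}$ it holds with equality; in either case the concluding step goes through without modification.
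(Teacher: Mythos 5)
Your proof is correct, but it takes a different route from the paper only in the sense that the paper does not prove this lemma at all: it simply cites it as Lemma~A.6 of Chernozhukov, Chetverikov and Kato (2014) (``Gaussian approximation of suprema of empirical processes''). What you have written out is, in essence, the proof of that cited lemma: the standard product-of-covers construction, lifting an $\varepsilon\|F_k\|_{Q,2}$-cover of each $\mathcal F_k$ through the pointwise Lipschitz bound and integrating against $Q$. Your handling of the radius is right --- the key inequality $\sum_k L_k^2\|F_k\|_{Q,2}^2 \le \|L\cdot F\|_{Q,2}^2$ follows from nonnegativity of the envelopes exactly as you say, under either reading of the (typo-ridden) definition of $L\cdot F$ --- and you also implicitly fix the statement's index typos ($g_i$ for $g_k$, the sum running to $n$ rather than $K$). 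Two small points worth keeping in mind but not fatal: choosing the cover centers $g_k$ \emph{inside} $\mathcal F_k$ is actually necessary here, since the Lipschitz hypothesis is only stated for tuples in $\mathcal F_1\times\cdots\times\mathcal F_K$; with the usual (external-center) definition of covering numbers this costs at most a factor of two in the radius, which is standard bookkeeping. And the final passage from $N(\varepsilon\|F_k\|_{Q,2},\mathcal F_k,L_2(Q))$ to $\sup_{R}N(\varepsilon\|F_k\|_{R,2},\mathcal F_k,L_2(R))$ is immediate, as you note. The net effect of your approach is a self-contained argument in place of an external citation, at no loss of generality.
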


\begin{proof} 
This is Lemma A.6 of \citet{chernozhukov2014gaussian}. 
\end{proof}

\end{document}